\theoremstyle{plain}
\newtheorem{fewtheorem}{Theorem}
\newtheorem{proposition}[thm]{Proposition}
\newtheorem{theorem}[thm]{Theorem}
\newtheorem{lemma}[thm]{Lemma} 
\newtheorem{corollary}[thm]{Corollary}  
\theoremstyle{definition}
\newtheorem{convention}[thm]{Convention} 
\newtheorem{rema}[thm]{Remark} 
\newtheorem{exam}[thm]{Example}
\def\mvsquareforeod{\hbox{$\lhd$}}
\def\mveod{\ifmmode\mvsquareforqed\else{\unskip\nobreak\hfil
\penalty50\hskip1em\null\nobreak\hfil\mvsquareforeod
\parfillskip=0pt\finalhyphendemerits=0\endgraf}\fi}
\newenvironment{definition}{\begin{defi}\rm}{\hfill$\lhd$\end{defi}}
\newenvironment{remark}{\begin{rema}\rm}{\end{rema}}
\newenvironment{example}{\begin{exam}\rm}{\end{exam}}
\def\mvsquareforqed{\hbox{\textsc{qed}}}
\def\mvqed{\ifmmode\mvsquareforqed\else{\unskip\nobreak\hfil
\penalty50\hskip1em\null\nobreak\hfil\mvsquareforqed
\parfillskip=0pt\finalhyphendemerits=0\endgraf}\fi}
\newenvironment{proofof}[1]{\begin{trivlist}\item[\hskip\labelsep{\bf
Proof~of~{#1}.\ }]}{\hspace*{\fill} \qed\end{trivlist}}
\newtheorem{claim2}{Claim}
\newenvironment{claim}{\begin{claim2}\rm}{\end{claim2}\rm}
\newenvironment{claimeerst}{\setcounter{claim2}{0}
               \begin{claim2}\rm}{\end{claim2}\rm}
\newenvironment{pfclaim}{\begin{trivlist}\item[]{\sc Proof of
Claim}}{\hfill {\mbox{$\lhd$}}\end{trivlist}}
\newcommand{\emdef}[1]{\textsl{#1}}
\newcommand{\verberg}[1]{}
\newenvironment{tbs}{%
   \begin{itemize}\tt }{\end{itemize}}
\newcommand{\btbs}{\begin{tbs}}                                                                      
\newcommand{\etbs}{\end{tbs}}
\newcommand{\struc}[1]{\langle #1 \rangle}
\newcommand{\bbA}{\mathbb{A}}
\newcommand{\bbB}{\mathbb{B}}
\newcommand{\bbN}{\mathbb{N}}
\newcommand{\bbS}{\mathbb{S}}
\newcommand{\bis}{\mathrel{\mathchoice
{\raisebox{.3ex}{$\,
  \underline{\makebox[.7em]{$\leftrightarrow$}}\,$}}
{\raisebox{.3ex}{$\,
  \underline{\makebox[.7em]{$\leftrightarrow$}}\,$}}
{\raisebox{.2ex}{$\,
  \underline{\makebox[.5em]{\scriptsize$\leftrightarrow$}}\,$}}
{\raisebox{.2ex}{$\,
  \underline{\makebox[.5em]{\scriptsize$\leftrightarrow$}}\,$}}}}
\newcommand{\class}[1]{\mathsf{#1}}
\newcommand{\clL}{\class{L}}
\newcommand{\Set}{\mathsf{Set}}
\newcommand{\Rel}{\mathsf{Rel}}
\newcommand{\BT}{\mathsf{B}} 
\newcommand{\F}{\mathsf{F}} 
\newcommand{\FG}{\mathsf{G}} 
\newcommand{\K}{\mathsf{P}} 
\newcommand{\Fb}{\overline{\F}} 
\newcommand{\Id}{\mathsf{Id}}
\newcommand{\id}{\mathit{id}}
\newcommand{\Oop}{\mathcal{O}}
\renewcommand{\phi}{\varphi} 
\newcommand{\Base}{\mathit{Base}}
\newcommand{\B}{\mathcal{B}}
\newcommand{\G}{\mathcal{G}}
\newcommand{\eloi}{\exists}
\newcommand{\abel}{\forall}
\newcommand{\Win}{\mathit{Win}}
\newcommand{\Ref}{\mathit{Ref}}
\newcommand{\ai}{a_{I}}
\newcommand{\Ri}{R_{I}}
\newcommand{\Acc}{\mathit{Acc}}
\newcommand{\Par}{\Omega}
\newcommand{\sh}[1]{{#1}^{\sharp}}
\newcommand{\NOT}{\mathrm{NBT}}
\newcommand{\bl}{\bullet}
\newcommand{\bpr}{{\upharpoonright}}
\newcommand{\rst}[1]{\!\upharpoonright_{#1}\,}
\newcommand{\power}{\mathcal{P}}
\newcommand{\pwE}{\power_{\eloi}}
\newcommand{\pwA}{\power_{\abel}}
\newcommand{\nada}{\varnothing}
\newcommand{\Sb}[1]{\mathcal{P}(#1)}
\newcommand{\sse}{\subseteq}
\newcommand{\ovl}[1]{\overline{#1}}
\newcommand{\udl}[1]{\underline{#1}}
\newcommand{\Ran}{\mathsf{rng}}
\newcommand{\relA}{\sh{A}}
\newcommand{\Graph}[1]{\mathit{Gr}(#1)}
\newcommand{\converse}[1]{#1\breve{\hspace*{1mm}}}
\newcommand{\ev}[1]{\mathit{ev}_{#1}}
\newcommand{\eva}{\ev{a}}
\newcommand{\inA}{{\in_{A}}}
\newcommand{\niA}{{\ni_{A}}}
\newcommand{\al}{\alpha}
\newcommand{\ga}{\gamma}
\newcommand{\de}{\delta}
\newcommand{\la}{\lambda}
\newcommand{\si}{\sigma}
\newcommand{\om}{\omega}
\newcommand{\De}{\Delta}
\newcommand{\Om}{\Omega}
\newcommand{\Inf}{\mathit{Inf}}
\newcommand{\pfa}{\power A}
\newcommand{\ti}[1]{\tilde{#1}}
\newcommand{\ol}[1]{\bar{#1}}
\newcommand{\shDe}{\sh{\Delta}}
\newcommand{\bbW}{\mathbb{W}}
\newcommand{\ro}[1]{{#1}_I}
\newcommand{\Lan}{\mathrm{L}}
\newcommand{\Pow}{\mathcal{P}}
\newcommand{\coloneqq}{\mathrel{{:}{=}}}
\newcommand{\game}{\mathcal{G}}
\newcommand{\elo}{\exists}
\newcommand{\abe}{\forall}
\newcommand{\domain}{\mathsf{dom}}
\newcommand{\range}{\mathsf{rng}}
\newcommand{\probdist}{\mathsf{D}_\omega}
\newcommand{\multiset}{\mathsf{M}_\omega}
\newcommand{\KPF}{\mathrm{KPF}}
\newcommand{\nonempty}[1]{\game_{\not= \nada}(#1)}
\newcommand{\card}[1]{|#1|}
\def\doi{4 (4:10) 2008}
\begin{document}

\title{Coalgebraic Automata Theory: Basic Results\rsuper*}

\author[C.~Kupke]{Clemens Kupke\rsuper a} 
\address{{\lsuper a}Imperial College London\\
180 Queen's Gate \\ London SW7 2AZ, United Kingdom}
\email{ckupke@doc.ic.ac.uk}
\thanks{{\lsuper a}Supported by the Netherlands Organisation
for Scientific Research (NWO) under FOCUS/BRICKS 
grant number 642.000.502.}

\author[Y.~Venema]{Yde Venema\rsuper b}
\address{{\lsuper b}Universiteit van Amsterdam,  ILLC \\ 
Plantage Muidergracht 24 \\ 1018 TV Amsterdam,
Netherlands} 
\email{Y.Venema@uva.nl}

\keywords{Automata, coalgebra, fixed-point logics, parity games}
\subjclass{F.1.1; F.4.3; F.3.2}
\titlecomment{{\lsuper*}The paper is a completely revised and extended
version of~\cite{kupk:clos05}.}

\begin{abstract}
We generalize some of the central results in automata theory to the
abstraction level of coalgebras and thus lay out the foundations of
a universal theory of automata operating on infinite objects.

Let $\F$ be any set functor that preserves weak pullbacks. We show
that the class of recognizable languages of $\F$-coalgebras is closed
under taking unions, intersections, and projections. We also prove
that if a nondeterministic $\F$-automaton accepts some coalgebra it
accepts a finite one of the size of the automaton.
Our main technical result concerns an explicit construction which
transforms a given alternating $\F$-automaton into an equivalent
nondeterministic one, whose size is exponentially bound by the size
of the original automaton.
\end{abstract}

\maketitle


\section{Introduction}
\label{s:1}

\noindent An important branch of automata theory, itself one of the classical 
subdisciplines of computer science, concerns the study of finite automata 
as devices for classifying infinite, or possibly infinite, objects.
This perspective on finite automata has found important applications in 
areas of computer science where one investigates the ongoing behavior of 
nonterminating programs such as operating systems.
As an example we mention the automata-based verification method of
\emph{model checking}~\cite{clar:mode00}.
This research also has a long and strong theoretical 
tradition, in which an extensive 
body of knowledge has been developed, with a number of landmark results.
Many of these link the field to neighboring areas such as logic and game
theory, see~\cite{grae:auto02} for an overview.
The outstanding example here is of course Rabin's decidability theorem~%
\cite{rabi:deci69} for the monadic second order logic of trees; to mention
a more recent result, Janin \& Walukiewicz~\cite{jani:auto95} identified
the modal $\mu$-calculus as the bisimulation invariant fragment of the 
monadic second order logic of labelled transition systems.

An interesting phenomenon in automata theory is that most (but not all) key 
results hold for word and tree automata alike, and that many can even be 
formulated and proved for automata that operate on yet other objects such
as trees of unbounded branching degree, or labelled transition systems.
This applies for instance to various closure properties of the class of 
recognizable languages, and to the fact that alternating automata can be 
transformed into equivalent nondeterministic ones.
These observations naturally raise the question, whether these results can 
perhaps be formulated and proved at a more general level of abstraction.
This is by analogy to algebra, where it is more convenient, for instance,
to formulate and prove the Homomorphism Theorems at the level of universal 
algebra, than to treat them separately for every algebraic signature.
Of course, such a universal approach towards automata theory would first of 
all require the introduction of an abstract notion that generalizes structures 
like words, trees and transition systems.
Fortunately, such an abstract notion already exists in the form of 
\emph{coalgebra}.

The theory of universal coalgebra (see \cite{rutt:univ00} for an overview)
seeks to provide a general framework for the study of notions related to 
(possibly infinite) behavior, such as invariance and observational
indistinguishability (bisimilarity, in most cases).
Intuitively, coalgebras (as objects) are simple but fundamental mathematical 
structures that capture the essence of dynamics.
In this paper we will restrict our attention to \emdef{systems}; these are
state-based coalgebras consisting of a set $S$ and a map $S \to \F S$,
where $\F$ is some set functor determining the \emph{signature} of the
coalgebra.
The general theory of coalgebra has already developed some general tools for
the specification of properties of coalgebras.
In particular, starting with Moss' coalgebraic logic~\cite{moss:coal99}, 
several logical languages have been proposed, usually with a strong modal 
flavor.
Most of these languages are not designed for talking about \emph{ongoing}
behaviour, but in \cite{vene:auto04}, the second author introduced a 
coalgebraic fixed-point logic that does enable specifications of this kind
(see~\cite{vene:auto06} for a more detailed exposition).

The same paper~\cite{vene:auto04} also introduces, for coalgebras over a
standard set functor $\F$ that preserves weak pullbacks, the notion of an
\emph{$\F$-automaton} --- we will recall the definition in section~\ref{s:2}.
These automata provide a common generalization of the familiar automata that 
operate on specific coalgebras such as words, trees or graphs.
They also come in various shapes and kinds, the most important distinction
being between alternating, nondeterministic, and deterministic ones,
respectively.

Basically, $\F$-automata are meant to either accept or reject pointed 
coalgebras (that is, pairs $\struc{\bbS,s}$ 
consisting of an $\F$-coalgebra 
$\bbS$ together with a selected state $s$ in the carrier $S$ of $\bbS$),
and thus \emph{express properties of states} in $\F$-coalgebras.
This makes them very similar to formulas, and explains the close connection
with coalgebraic (fixed-point) logic.
This connection generalizes the relation between the modal $\mu$-calculus
and the $\mu$-automata~\cite{jani:auto95} to the abstraction level of 
coalgebra. 
A key feature of the coalgebraic framework is that one restricts attention
to observable, or \emph{bisimulation-invariant properties}.
Another important aspect of $\F$-automata involves game theory: the criterion 
under which an $\F$-automaton accepts or rejects a pointed coalgebra is 
formulated in terms of an infinite two-player graph game.

The aim of developing this coalgebraic framework is not so much to introduce
new ideas in automata theory, as to provide a common generalization for 
existing notions that are well known from the theory of more specific 
automata.
Apart from its general mathematical interest, this abstract approach may be 
motivated from various sources. 
To start with, the abstract perspective may be of help to find the right notion 
of automaton for other kinds of coalgebras, besides the well known kinds
like words and trees.
It may also be used to prove interesting results on coalgebraic logics ---
we will briefly come back to this in section~\ref{s:concl}.
\medskip

It is the aim of the present paper, which is a completely revised and extended
version of~\cite{kupk:clos05}, to provide further motivation for taking a 
coalgebraic perspective on automata, by showing that some of the key results
in automata theory can in fact be lifted to this more abstract level.
In particular, this allows for \emph{uniform} proofs of these results, which
in its turn may lead to a better understanding of automata theory as such.
The concrete results that we prove concern the relation between alternating 
and nondeterministic automata, some of the closure properties that one may
associate with automata, and the nonemptiness problem.
For a proper formulation, we need to develop some terminology.

A class of pointed $\F$-coalgebras will be referred to as an 
\emdef{$\F$-language}.
Such a language $\clL$ is \emdef{recognized by an $\F$-automaton $\bbA$} if
a pointed $\F$-coalgebra belongs to $\clL$ if and only if it is accepted by 
$\bbA$, and \emdef{(nondeterministically) recognizable} if it is
recognized by some (nondeterministic) parity $\F$-automaton.
Our main technical result can now be formulated as follows. 

\begin{fewtheorem}
\label{t:main}
Let $\F$ be some set functor that preserves weak pullbacks.\footnote{The 
meaning and importance of this side condition will be explained in
section~\ref{s:2}.}
Then every alternating parity $\F$-automaton $\bbA$ has a nondeterministic
equivalent $\bbA^{\bl}$.
Hence, an $\F$-language is recognizable iff it is nondeterministically
recognizable.
\end{fewtheorem}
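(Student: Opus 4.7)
The plan is to adapt the classical alternation-removal construction (in the style of Miyano--Hayashi for B\"uchi, extended to the parity case by means of a ranking or index-appearance record) to the coalgebraic setting. States of $\bbA^{\bl}$ will be macro-states of the shape $\struc{B,C}$ with $C \sse B \sse A$, where $A$ is the state space of $\bbA$. Intuitively, $B$ records which $\bbA$-states are currently active in parallel, and $C$ tracks the remaining obligations that have not yet been discharged in the current round. The priority assigned to $\struc{B,C}$ in $\bbA^{\bl}$ is determined by whether $C$ has just been emptied, so that the alternating parity condition gets compiled into a single nondeterministic parity condition. The size bound $\card{\bbA^{\bl}} \le 2^{O(\card{\bbA})}$ is then immediate from the definition of the carrier.

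The main technical tool is a disjunctive normal form theorem for one-step $\F$-formulas: every one-step formula $\phi$ over proposition letters $A$ (i.e.\ every possible value of the transition map of an alternating $\F$-automaton) can be rewritten as $\bigvee_i \nb \alpha_i$ with each $\alpha_i \in \F\Pow A$. This is the step where preservation of weak pullbacks by $\F$ is crucial: it is exactly the condition that guarantees Moss-style distributive laws for the cover modality $\nb$, allowing conjunctions of $\nb$-formulas to be turned into disjunctions of single $\nb$-formulas. Without this property the normal form fails and the whole construction breaks down. I expect this to be the main obstacle, both conceptually (fixing the one-step language in sufficient generality) and technically (carrying out the normalisation so that the resulting $\alpha_i$ can be extracted uniformly from $\phi$).

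The transition map $\delta^{\bl}$ of $\bbA^{\bl}$ is then defined at a macro-state $\struc{B,C}$ by forming the conjunction $\bigwedge_{a \in B}\delta(a)$, normalising it to $\bigvee_i \nb \alpha_i$, and using functoriality of $\F$ to relabel every occurrence of a subset $B' \sse A$ inside $\alpha_i$ by a pair $\struc{B',C'}$, with $C'$ computed from the states of $B'$ that failed the local priority test together with the residue of $C$ inherited through the successor. Each disjunct thereby becomes a single nondeterministic successor $\nb\alpha_i'$ with $\alpha_i' \in \F\Pow(A \times \Pow A)$, and $\eloi$ is required to pick exactly one such disjunct per round --- precisely the shape of a nondeterministic $\F$-automaton.

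Correctness amounts to showing that $\bbA$ and $\bbA^{\bl}$ accept the same pointed coalgebras, by translating winning strategies between the acceptance games $\game(\bbA,\bbS)$ and $\game(\bbA^{\bl},\bbS)$. One direction packages a winning $\eloi$-strategy for $\bbA$ into a round-by-round disjunct choice for $\bbA^{\bl}$, and verifies that the obligation counter is cleared infinitely often on every branch iff the parity condition holds on every branch of the original strategy tree. The converse unravels a winning $\eloi$-strategy for $\bbA^{\bl}$ into one for $\bbA$, relying on memoryless determinacy of parity games to reconcile the fact that the nondeterministic automaton identifies all alternating branches at each round. The final ``iff'' of the theorem is then immediate, since every nondeterministic automaton is a fortiori alternating.
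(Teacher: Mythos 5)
There is a genuine gap, and it sits exactly where the paper warns it would. Your macro-states are pairs $(B,C)$ of \emph{subsets} of $A$, with acceptance compiled into a single parity condition via a Miyano--Hayashi breakpoint (``$C$ has just been emptied''). That breakpoint mechanism is sound only for B\"uchi-type conditions. For a general parity condition the information carried by a sequence of subsets $B_0 B_1 B_2 \ldots$ is insufficient: the winner of the alternating game is determined by the \emph{traces} $a_0 a_1 a_2 \ldots$ with $a_i \in B_i$ that actually arise from matches of $\game(\bbA,\bbS)$, and a sequence of sets does not determine which such threads are genuine --- one can have $a_i \in B_i$ for all $i$ along a sequence that meets $\abel$'s winning condition but never occurred as a match. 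This is precisely the obstruction the paper discusses when motivating its ``relations as states'' idea, and it is why the paper takes $\relA = \Pow(A \times A)$ as the carrier of $\sh{\bbA}$: the relational structure makes the notion of a trace through $R_0 R_1 R_2\ldots$ well defined, and the acceptance condition becomes $\NOT_{\Om}$ (``no bad trace''). You gesture at ``a ranking or index-appearance record'' in your opening sentence, but the construction you then describe never incorporates one, and in any case an index record alone cannot rescue subset states, since traces are simply not recoverable from sets. A quantitative symptom: your claimed bound $2^{O(\card{\bbA})}$ contradicts known lower bounds already in the special case $\F = \BT_C$ of alternating parity tree automata, where removing alternation provably costs $2^{\Theta(nk\log(nk))}$ (cf.\ the paper's Remark~\ref{r:fund} and the reference given there); the paper's bound $2^{O(n^2 + nk\log(nk))}$ reflects the unavoidable Safra-style determinization.

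The other half of your proposal is essentially right and matches the paper in spirit. Your one-step disjunctive normal form via the cover modality and Moss-style distributive laws is the same mechanism as the paper's notion of $\F$-redistribution (the paper itself notes the connection to Jacobs's distributive law $\F\K \Rightarrow \K\F$), and your use of weak pullback preservation there is exactly where the paper uses it too, through the composition law for relation lifting and the normalized-strategies argument (Proposition~\ref{p:ns}). To repair your proof you would need to (i) replace subset components by binary relations over $A$, so that the acceptance condition of the intermediate nondeterministic automaton can be stated as absence of bad traces; (ii) observe that $\NOT_{\Om}$ is an $\omega$-regular language over $\relA$ and determinize the word automaton recognizing it (this is where Safra, or the variant of \cite{pite:nond06}, enters and where the exponential blow-up with the $\log$ factor is incurred, cf.\ Proposition~\ref{prop:wordcomp}); and (iii) compose the regular $\F$-automaton with that deterministic parity word automaton, e.g.\ via the paper's wreath product construction (Theorem~\ref{t:regaut}), to land back in the class of nondeterministic \emph{parity} $\F$-automata.
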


More specifically, we will give a construction which is both \emph{uniform},
in the sense that it takes the type $\F$ of $\bbA$ as a parameter, and
\emph{concrete}, in that we give an explicit definition of $\bbA^{\bl}$ in
terms of $\bbA$.

In order to discuss closure properties, let $\Oop$ be some operation on 
$\F$-languages, then we say that a class of languages is closed under $\Oop$ 
if we obtain a language from this class whenever we apply $\Oop$ to a family 
of languages from the class.
For example, one may easily prove that recognizable $\F$-languages are closed
under taking intersection and union; with some more effort we will show that 
the class of nondeterministically recognizable $\F$-languages is closed 
under (existential) projection.
Theorem~\ref{t:main} allows us to strengthen the above list of closure
properties as follows.

\begin{fewtheorem}
\label{t:clos}
Let $\F$ be some set functor that preserves weak pullbacks.
Then the class of recognizable $\F$-languages is closed under union, 
projection and intersection.
\end{fewtheorem}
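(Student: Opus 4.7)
The plan is to split the theorem into its three clauses and treat each with the representation (alternating or nondeterministic) that makes it simplest, using Theorem~\ref{t:main} as a bridge. Since Theorem~\ref{t:main} identifies recognizability with nondeterministic recognizability, I may freely choose whichever side is more convenient for any given operation.

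For union and intersection I would argue directly in the alternating setting, where the transition syntax already accommodates boolean combinations. Given two alternating parity $\F$-automata $\bbA_1$ and $\bbA_2$ with initial states $a_1, a_2$, recognizing languages $\clL_1$ and $\clL_2$, I would construct a new alternating automaton $\bbA$ by taking the disjoint union of the two carriers, adding a fresh initial state $a_0$, and setting the transition formula at $a_0$ to be the disjunction (for $\clL_1 \cup \clL_2$) or conjunction (for $\clL_1 \cap \clL_2$) of the transition formulas at $a_1$ and $a_2$; priorities would be inherited from the components, with any convenient value assigned to $a_0$. A routine analysis of the acceptance game then shows that $\eloi$ wins from $a_0$ in the new automaton iff she wins from $a_1$ or from $a_2$ (respectively, from both) in the original ones, so that $\bbA$ recognizes the desired language.

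For projection I would pass to the nondeterministic setting. Given a recognizable language $\clL$, I would first apply Theorem~\ref{t:main} to obtain a nondeterministic automaton for $\clL$, and then invoke the auxiliary result announced in the introduction — that the class of nondeterministically recognizable $\F$-languages is closed under existential projection. Since every nondeterministic parity $\F$-automaton is in particular an alternating one, the projected language is again recognizable, completing this clause.

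The main conceptual obstacle in the whole argument is Theorem~\ref{t:main} itself, whose proof is the technical core of the paper. Granting that, together with the projection lemma for nondeterministic automata, the present theorem reduces to the essentially bookkeeping constructions sketched above. The only delicate point in the bookkeeping is verifying that the fresh initial state $a_0$ of the union/intersection automaton is compatible with the parity acceptance condition — standard, but worth a careful check since $a_0$ is visited only once at the start of any play and hence can safely be assigned a neutral priority without perturbing the infinitary condition on the rest of the game.
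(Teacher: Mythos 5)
Your treatment of union and intersection is correct and coincides with the paper's own construction (Definition~\ref{def:unionit}, Proposition~\ref{prop:closeunion}): disjoint union of the carriers plus a fresh initial state $*$ of priority $0$, where $\De_{\cup}(*) := \De_1(\ai^1)\cup\De_2(\ai^2)$ realizes your ``disjunction'' and $\De_{\cap}(*) := \{\Phi_1\cup\Phi_2 \mid \Phi_i\in\De_i(\ai^i)\}$ your ``conjunction''; your remark that $*$ is visited at most once and so may carry a neutral priority is exactly the paper's justification. The overall decomposition --- union/intersection directly on alternating automata, projection routed through Theorem~\ref{t:main} and back via the inclusion of nondeterministic into alternating automata --- is also the paper's.

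The gap is in the projection clause. The ``auxiliary result announced in the introduction'' is not available background: it is Proposition~\ref{prop:closeproj}, proved in the very section that constitutes the paper's proof of this theorem, and --- given Theorem~\ref{t:main} --- it is essentially equivalent to the clause you are trying to establish, so citing it is close to assuming what is to be shown. Nor is it bookkeeping. The natural attempt --- read a $C$-coloring of $\bbS$ off a winning strategy for $\eloi$ in $\G(\pi_{\F}\bbA,\bbS)$ by setting $\gamma(s):=c$ whenever $\Phi_{s,a}\in\De_C(c,a)$ --- fails, because $\abel$ may force the match through distinct basic positions $(s,a_1)$ and $(s,a_2)$ for which no single color $c$ satisfies both $\Phi_{s,a_1}\in\De_C(c,a_1)$ and $\Phi_{s,a_2}\in\De_C(c,a_2)$; this is also why the existential projection must be \emph{defined} as the bisimilarity closure of the naive projection, a point your sketch does not address. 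The paper's repair is the machinery of \emph{scattered} strategies and \emph{strong} acceptance (Definition~\ref{def:scattered}): Lemma~\ref{lem:npower} replaces $(\bbS,r)$ by a bisimilar coalgebra on the carrier $S\times A$ (each state tagged with the unique automaton state it must answer to), on which $\eloi$ has a scattered winning strategy, and Lemma~\ref{lem:condsuffice} then extracts a well-defined coloring. This requires the coalgebraic apparatus (coproducts, the morphism $\pi_S$, the relations $\hat{R}$, and weak pullback preservation via Fact~\ref{fact:wpb}), none of which appears in your proposal; without it the projection clause remains unproven.
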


Conspicuously absent in this list is closure under complementation --- we 
will come back to this in section~\ref{s:concl}.

The third result that we want to mention here concerns the nonemptiness
problem for coalgebra automata.
As we will show, if a nondeterministic parity automaton $\bbA$ accepts
an $\F$-coalgebra at all, then it accepts an $\F$-coalgebra that somehow
`lives inside $\bbA$'.
From this and Theorem~\ref{t:main} the following result is immediate.

\begin{fewtheorem}
\label{t:nempt}
Let $\F$ be some set functor that preserves weak pullbacks.
Then the $\F$-language recognized by a parity automaton $\bbA$ is
nonempty iff $\bbA$ accepts a finite $\F$-coalgebra of bounded size.
\end{fewtheorem}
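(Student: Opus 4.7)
The plan is to combine Theorem~\ref{t:main} with a small-model property tailored to \emph{nondeterministic} parity $\F$-automata. Given an arbitrary parity $\F$-automaton $\bbA$, I would first invoke Theorem~\ref{t:main} to obtain an equivalent nondeterministic automaton $\bbA^{\bl}$ whose size is exponentially bounded in $|\bbA|$. It then suffices to exhibit, whenever a nondeterministic parity $\F$-automaton recognises a nonempty language, some finite pointed $\F$-coalgebra accepted by it, of size bounded in terms of the automaton's own size.

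The heart of the argument is the following key lemma: if a nondeterministic parity $\F$-automaton $\bbB$ with state set $A$ and initial state $\ai$ accepts any pointed $\F$-coalgebra at all, then it accepts one whose carrier is a subset of $A$. To prove it I would use positional determinacy of parity games: fix a pointed $\F$-coalgebra $\struc{\bbS, s}$ accepted by $\bbB$, together with a positional winning strategy $f$ for $\eloi$ in the corresponding acceptance game. At each reachable position $(t,a) \in S \times A$, $f$ selects an $\F(S\times A)$-witness projecting to $\sigma(t) \in \F S$ and to some $\alpha \in \Delta(a) \subseteq \F A$; consequently the set $R$ of $f$-reachable positions inherits a natural $\F$-coalgebra structure $\bbR$. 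Projecting along $\pi_A$, and exploiting that $\F$ preserves weak pullbacks, one extracts an $\F$-coalgebra $\sigma' : A' \to \F A'$ on the subset $A' \coloneqq \pi_A(R) \subseteq A$, and the strategy $f$ re-read gives a winning strategy for $\eloi$ in the acceptance game of $\bbB$ on $\struc{(A',\sigma'), \ai}$, as required.

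I expect the main obstacle to be ensuring well-definedness of $\sigma'$: two distinct $f$-reachable pairs $(t_1,a), (t_2,a) \in R$ sharing the same $A$-component may a priori demand different $\F A'$-transitions for $a$, so a single $\F A'$-value per $a \in A'$ has to be chosen coherently with the parity winning condition. The remedy is to pass from the positional strategy $f$ on $S \times A$ to one whose $\F A$-component depends only on the second coordinate, either by choosing representatives in the fibres of $\pi_A$ or by quotienting $R$; preservation of weak pullbacks by $\F$ is precisely what permits the required realignment of $\F$-transitions along $\pi_A$ without destroying the parity condition.

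With the key lemma in hand, Theorem~\ref{t:nempt} is immediate. If the language of $\bbA$ is nonempty then so is that of $\bbA^{\bl}$ by Theorem~\ref{t:main}; the key lemma then produces a finite coalgebra of cardinality at most $|\bbA^{\bl}|$ accepted by $\bbA^{\bl}$, hence by $\bbA$, which yields the announced exponential bound in $|\bbA|$. The converse direction is trivial.
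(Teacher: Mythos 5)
Your overall architecture coincides with the paper's: Theorem~\ref{t:main} reduces everything to the nondeterministic case, and your key lemma is literally the paper's Theorem~\ref{thm:nonempti} (an accepted coalgebra with carrier $S \sse A$, base point $\ai$, and $\si(s) \in \De(s)$), with Corollary~\ref{c:fmp} supplying the size bound. The genuine gap lies in your proof of the key lemma, at precisely the point you flag and then wave away. Realigning a positional winning strategy $f$ of the acceptance game along the fibres of $\pi_A$ --- whether by choosing representatives or by quotienting $R$ --- does not in general yield a winning strategy on $(A',\si')$: the $A$-stream produced by the re-read strategy is stitched together from one-round fragments of $f$-conform plays that sit at \emph{different} coalgebra states within the same fibre, and such stitched streams can violate the parity condition even though every genuine $f$-conform play satisfies it. A concrete failure, already for $\F = \Id$ (which certainly preserves weak pullbacks): let $\bbS$ be the four-cycle $t_1 \to t_2 \to t_3 \to t_4 \to t_1$, let $A = \{a,b,c\}$ with $\Om(a)=1$, $\Om(b)=0$, $\Om(c)=2$, $\De(a)=\{b,c\}$, $\De(b)=\De(c)=\{a\}$, and let $f$ choose $b$ at $(t_1,a)$, $a$ at $(t_2,b)$, $c$ at $(t_3,a)$, and $a$ at $(t_4,c)$. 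The unique $f$-conform play yields the stream $abac\,abac\cdots$, whose maximal priority occurring infinitely often is $2$, so $f$ is winning. But picking the representative $t_a := t_1$ in the fibre $\{t_1,t_3\}$ of $a$ gives $\si'(a)=b$, $\si'(b)=a$, and the re-read strategy now produces $abab\cdots$, whose maximal infinitely occurring priority is $1$: it loses. So the assertion that ``$f$ re-read gives a winning strategy'' is false, and the appeal to weak pullback preservation cannot repair it --- the coherence problem is a combinatorial fact about parity games, independent of the functor; in the paper, weak pullback preservation is used for entirely different purposes (relation lifting commuting with composition and with restriction, Proposition~\ref{f:st}, and existence of a standardization, Fact~\ref{f:wppst}).

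The paper closes exactly this gap by a game-theoretic detour rather than by direct realignment: it introduces the \emph{nonemptiness game} $\nonempty{\bbA}$, a parity game played on the automaton alone (at $a \in A$, $\elo$ picks $\phi \in \De(a)$ and moves to $\Base(\phi)$, from which $\abe$ picks the next state). Lemma~\ref{lem:nonempty} shows that $\elo$ wins this game whenever $\Lan(\bbA) \neq \nada$, by maintaining a parallel acceptance-game match --- after first normalizing her strategy so that $\range(Z_{s,\phi}) = \Base(\phi)$, a step that needs standardness of $\F$ and Proposition~\ref{f:st}(2). The resulting simulation strategy is \emph{not} positional; the uniform per-state choice you need is obtained only by re-invoking history-free determinacy of parity games \emph{for the game $\nonempty{\bbA}$ itself}, whose positional strategies are by construction maps $\varphi: A \to \F A$ with $\varphi(a) \in \De(a)$; Lemma~\ref{lem:emptygame} then shows $\bbA$ accepts $(A,\varphi,\ai)$. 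In short: determinacy must be applied a second time, at the level of the projected game, and cannot be replaced by fibre-representative surgery on the original strategy. You would in addition have to deal with standardness (without it, neither $\Base(\phi)$ nor the claim $\si'(a) \in \F A'$ is even meaningful), which the paper handles by transferring the whole problem along a standardization via Proposition~\ref{p:staut}.
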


We \emdef{prove} these results by generalizing, to the coalgebraic level,
(well-)known ideas from the theory of specific automata.
This applies in particular to the results on graph automata by Janin \&
Walukiewicz~\cite{jani:auto95}, whereas our approach is similar to the
abstract universal algebraic approach of Niwi\'nski and Arnold%
~\cite{niwi:fixe97,arno:rudi01}.
Just as in the literature on specific kinds of automata, our proofs crucially
depend on the ``import'' of two fundamental results from the theory of automata
and infinite games. 
The proof of our main result, Theorem~\ref{t:main} contains an essential
instance of the determinization of $\omega$-automata (a result originally
due to MacNaughton~\cite{mcna:test66}, whereas the construction we use goes
back to Safra~\cite{safr:onth88}).
Theorem~\ref{t:nempt}, our solution to the nonemptiness problem for coalgebra
automata, can be seen as an application of the history-free determinacy of 
parity games (see Fact~\ref{f-2-3}).

Finally, let us stress again that our aim here is not to prove new results
for well-known structures.
Rather, the point that we try to make is that many of the well-known
theorems in automata theory in fact belong to the field of Universal
Coalgebra, in the same way that the Homomorphism Theorems are results 
in Universal Algebra.
In our opinion, the abstract, coalgebraic perspective not only generalizes
existing results, the uniformity of the coalgebraic presentation has helped
us to obtain a better understanding of automata theory itself.
To mention one example, for various automata-theoretic constructions, our
coalgebraic proofs show that generally, \emph{size issues} do not depend
on the type of the automata (that is, on the functor $\F$), because most of
these issues only depend on constructions for $\omega$-automata
(cf.\ also Remark~\ref{r:fund} below).

The paper is organized as follows: In Section~\ref{s:2} we equip the reader
with the necessary background material on coalgebra automata.
After that, we show in Section~\ref{s:altond} how to transform an alternating
parity $\F$-automaton into an equivalent nondeterministic $\F$-automaton
that has a so-called regular acceptance condition.
These `regular automata' will be discussed in detail in
Section~\ref{s:regaut},
where we prove that any regular $\F$-automaton can be transformed into 
an equivalent $\F$-automaton with a parity acceptance condition.
In Section~\ref{s:3} we combine the results from Section~\ref{s:altond}
and Section~4 in order to obtain Theorem~\ref{t:main}. 
We then discuss other closure properties of recognizable languages and
prove Theorem~\ref{t:clos}.
Finally, in Section~\ref{s:6}, we demonstrate that the nonemptiness problem
of $\F$-automata can be effectively solved, proving Theorem~\ref{t:nempt}.
Section~\ref{s:concl} concludes the paper with a short summary of our
results and an outlook on future research.

\paragraph{Acknowledgements}
We are grateful to the anonymous referees for providing a number of very 
helpful suggestions concerning the presentation of our results.

\section{Preliminaries}
\label{s:2}

\noindent Since coalgebra automata bring together notions from two
research areas (coalgebra and automata theory), we have made an effort
to make this paper accessible to both communities.  As a consequence,
we have included a fairly long section containing the background
material that is needed for understanding the main definitions and
proofs of the paper.  Readers who want to acquire more knowledge could
consult \textsc{Rutten}~\cite{rutt:univ00} and \textsc{Gr\"{a}del,
  Thomas \& Wilke}~\cite{grae:auto02} for further details on universal
coalgebra and automata theory, respectively.  For a more gentle
introduction to coalgebra automata, the reader is referred to
\textsc{Venema}~\cite{vene:auto06}.

First we fix some mathematical notation and terminology.

\begin{convention}{
Let $f:S \to T$ be a function. Then the \emph{graph} of $f$ is the relation
\[
\Graph{f} \coloneqq \{ (s,f(s)) \in S \times T \mid s \in S \}
\]
Given a relation $R \sse S \times T$, we denote the \emph{domain} and 
\emph{range} of $R$ by $\domain(R)$ and $\range(R)$, respectively.
Given subsets $S' \sse S$, $T' \sse T$, the \emph{restriction} of $R$ to  
$S'$ and $T'$ is given as
\[
R\rst{S' \times T'} \coloneqq R \cap (S' \times T').
\]
The composition of two relations $R \sse S \times T$ and $R' \sse T \times U$
is denoted by $R \circ R'$, whereas the composition of two functions $f: S 
\to T$ and $f': T \to U$ is denoted by $f'\circ f$.
Thus, we have $\Graph{f' \circ f} = \Graph{f}\circ\Graph{f'}$.
Finally, the set of all functions from $S$ to $T$ is denoted both by
$T^{S}$ and by $S \to T$.
}
\end{convention}

\subsection{Coalgebras, bisimulations and relation lifting}

Informally, a coalgebra $\bbS$ consists of a set $S$ of \emph{states},
together with a \emph{transition structure} $\si$ mapping a state $s \in S$
to an object $\si(s)$ in some set $\F S$, where $\F$ is the \emph{type} of 
the structure $\bbS$.
Technically, the type of a coalgebra is formalized as a \emph{set functor},
i.e. an endofunctor on the category $\Set$ that has sets as objects and 
functions as arrows.
Such a functor maps a set $S$ to a set $\F S$, and a function $f: S \to S'$
to a function $\F f: \F S \to \F S'$.

\begin{definition}
Given a set functor $\F$, an $\F$-coalgebra is a pair $\bbS =
\struc{S,\si}$ with $\si: S \to \F S$.
A \emdef{pointed coalgebra} is a pair consisting of a coalgebra together
with an element of (the carrier set of) that coalgebra.
\end{definition}

\begin{remark}
The action of $\F$ on functions is needed to define the \emph{morphisms}
between two coalgebras.
Formally, given two $\F$-coalgebras $\bbS = \struc{S,\si}$ and $\bbS' = 
\struc{S',\si'}$, a function $f: S \to S'$ is a coalgebra morphism if 
$\si'\circ f = (\F f) \circ \si$.
Since these morphisms do not play a large role in this paper, the reader
that has no familiarity with coalgebras may safely ignore this aspect.
In the sequel, we will often even introduce set functors just by defining
their action on sets, trusting that the initiated reader will be able to
supplement the action on functions.
\end{remark}

\begin{example}
\label{ex:BTK}\hfil
\begin{enumerate}[(1)]
\item
For instance, consider the functor $\BT$ which associates with a set $S$
the cartesian product $S \times S$, and with a map $f: S \to S'$, the map
$\BT f: S\times S \to S'\times S'$ given by $(\BT f)(s,s') := (f(s),f(s'))$.
Thus every state in an $\BT$-coalgebra has both a left and a right
successor.
As a special example of a $\BT$-coalgebra, consider the \emph{binary tree}
given as the set $2^{*}$ of finite words over the alphabet $2 =
\{0,1\}$, with the coalgebra map given by $s \mapsto (s0,s1)$.

\item
Directed graphs can be seen as coalgebras of the power set functor $\K$.
The functor $\K$ maps a set to its collection of subsets, and a function
$f: S \to S'$ to its direct image function $\K f: \power S \to \power S'$
given by $(\K f)(X) := \{ f(x) \in S' \mid x \in X \}$.
A graph $(G,E)$ is then modelled as the coalgebra $(G,\lambda x.E[x])$,
that is, the relation $E$ is given by the function mapping a point $x$ to 
the collection $E[x]$ of its (direct) successors.

\item 
The labelled transition systems from~\cite{jani:auto95} can be represented
as coalgebras for the functor $\Pow \Phi \times \K^{A}$, i.e.\ for the functor
that maps a set $S$ to the set $\Pow \Phi \times (\Pow S)^A$.
Here $\Phi$ is a set of propositional variables and $A$ is the set of actions.
Hence, in the coalgebraic presentation, a point $s$ is mapped to the pair
consisting of the set of proposition letters in $\Phi$ that are true at 
$s$, and, for each action $a$, the set of `$a$-successors' of $s$.

\item
For any set $C$, a $C$-coloring of a coalgebra $\bbS$ is a map $\ga: S \to 
C$; the $C$-colored $\F$-coalgebra  $\bbS \oplus \ga := \struc{S,\ga,\si}$
can be identified with the $\F_{C}$-coalgebra $\struc{S,\langle \ga,\si 
\rangle}$.
Here $\F_{C}$ is the functor that takes a set $S$ to the set $C \times S$
(and that takes a map $f: S \to S'$ to the function $\F_{C}f:  C \times S 
\to C \times S'$ given by $(\F_{C}f)(c,s) = (c,f(s))$).
So $\BT_{C}$ and $\K_{C}$ are, respectively, the functors that we may 
associate with $C$-labelled binary trees and $C$-labelled graphs,
respectively. 

\item Let
$\probdist$ be the functor that maps a set $S$ to the set
\[ \probdist S \coloneqq 
\big\{ \rho: S \to [0,1] \mid \rho
\; \mbox{has finite support and $\sum_{s\in S}$} 
\rho(s) = 1
\big\} 
\]
where we say that $\rho$ has finite support if  $\rho(s) \not= 0$
for only finitely many elements $s$ of $S$. Then 
coalgebras for the functor $1 + \probdist$
correspond to the probabilistic transition systems 
by Larsen and Skou in~\cite{lars:bisi91}. Here $1 + \probdist$ denotes
the functor that maps a set $S$ to the disjoint union of the one-element set
and the set $\probdist S$. Further details about this example can be
found in~\cite{rutt:bisi99}.

\item Transition systems in which every state
has a {\em multiset} of successors can be modeled as 
coalgebras for the functor
$\multiset$ that maps a set $S$ to the set $\multiset S$ of all
functions $\mu: S \to \bbN$ with finite support, i.e.\
$\mu(s) \not= 0$ for only finitely many elements $s$ of $S$.
More information can be found 
in~\cite{gumm:mono01}.
\end{enumerate}
\end{example}

\noindent As mentioned in the introduction, the theory of coalgebra
aims to provide a simple but general framework for formalizing and
studying the concept of \emph{behavior}.  For this purpose it is of
importance to develop a notion of \emph{behavioral equivalence}
between states.  For almost all important coalgebraic types, this
notion can be naturally expressed using \emph{bisimulations}.
Intuitively, these are relations between the state sets of two
coalgebras that witness the observational indistinguishability of the
pairs that they relate.  For our purposes it will be convenient to use
a definition of bisimilarity in terms of \emph{relation lifting}.

\begin{definition}
\label{d:rellift}
Let $\F$ be a set functor.
Given two sets $S$ and $S'$, and a binary relation $Z$ between $S \times S'$,
we define the \emdef{lifted relation} $\Fb (Z) \sse \F S \times \F
S'$
as follows:
\[
\Fb (Z) := \{ ((\F\pi) (\phi), (\F\pi')(\phi)) \mid \phi \in \F Z \},
\]
where $\pi: Z \to S$ and $\pi':Z \to S'$ are the projection functions given
by $\pi(s,s') = s$ and $\pi'(s,s') = s'$.
\[ \xymatrix{S & R \ar[l]_{\pi} \ar[r]^{\pi'} 
\save[]+<0cm,-0.3cm>
\ar@{=>}[d]+<0cm,0.3cm>^{\mbox{{\tiny lifting}}} \restore & S' \\
\F S & \F R \ar[r]_{\F \pi'} \ar[l]^{\F \pi} & \F S'}
\]

Now let $\bbS = \struc{S,\si}$ and $\bbS' = \struc{S',\si'}$ be two 
$\F$-coalgebras.
Then a relation $Z \sse S \times S'$ satisfying
\begin{equation}
\label{eq:bis}
(\si(s),\si'(s')) \in \Fb(Z)
\mbox{ for all } (s,s') \in Z.
\end{equation}
is an  \emph{($\F$-)bisimulation} between $\bbS$ and $\bbS'$.
Two states $s$ and $s'$ in such coalgebras are \emph{bisimilar}, notation:
$\bbS,s \bis \bbS',s'$, iff they are linked by some bisimulation.
\end{definition}

Intuitively, for $Z \sse S \times S'$ to be a bisimulation we require 
that whenever $s$ and $s'$ are linked by $Z$, then $\si(s)$ and $\si'(s)$
are linked by the lifted version $\Fb(Z)$ of $Z$.

\begin{remark}\label{r:well-def}
Strictly speaking, the definition of the relation lifting of a given relation 
$R$ depends on the type of the relation, i.e.\ given sets $S',S,T',T$ such 
that $R \sse S' \times T'$ and $R \sse S \times T$, it matters whether we 
look at $R$ as a relation from $S'$ to $T'$ or as a relation from $S$ to $T$. 
This possible source of ambiguity can be avoided if we require the functor 
$\F$ to be \emph{standard}.  
We come back to a detailed discussion of the notion of a standardness at the 
end of this section.
\end{remark}
 
\begin{example}
\label{ex:bis}
Let us see how these definitions apply to coalgebras for the functors 
$\BT_{C}$ and $\K$ of Example~\ref{ex:BTK}.
For this purpose, fix two sets $S$ and $S'$, and a relation $Z \sse S \times S'$.

For the definition of the relation $\ovl{\BT_{C}}(Z)$, it is easy to see that 
$(c,s_{0},s_{1})$ and $(c',s_{0}',s_{1}')$ are related iff $c = c'$ and
both $(s_{0},s_{0}')$ and $(s_{1},s_{1}')$ belong to $Z$.
Hence the relation $Z$ is a bisimulation between two $\BT_{C}$-coalgebras 
if $Z$-related points have the same color and both their left successors
and their right successors are $Z$-related.
From this it easily follows that two labelled binary trees are bisimilar iff
they are identical.
However, the notion becomes less trivial if we consider other coalgebras for
the functor $\BT_{C}$.
In fact, bisimilarity can be used to formulate some well-known notions in the
theory of tree automata; for instance, a labelled binary tree is 
\emph{regular} iff it is bisimilar to a finite $\BT_{C}$-coalgebra.

Concerning the Kripke functor $\K$, observe that $(X,X') \in \ovl{\K}(Z)$ iff
for all $x \in X$ there is an $x' \in X'$, and for all $x' \in X'$ there is
an $x \in X$ such that $(x,x') \in Z$.
That is, $\ovl{\K}(Z)$ is the \emph{Egli-Milner} lifting of $Z$.
Thus, a relation relating nodes of one graph $\bbS$ to those of another
graph $\bbS'$ is a \emph{bisimulation} if for every linked pair $(s,s')$,
every successor $t$ of $s$ is related to some successor $t'$ of $s'$, and,
vice versa, every successor $t'$ of $s'$ is related to some successor $t$ of
$s$.

Finally, the connection between $\Fb$ and $\ovl{\F_{C}}$ is given by the
following: $((c,\phi),(c',\phi')) \in \ovl{\F_{C}}(Z)$ iff $c=c'$ and 
$(\phi,\phi') \in \Fb(Z)$.
\end{example}

Given the key role that relation lifting plays in this paper, we need some
of its properties.
It can be shown that relation lifting interacts well with the operation of
taking the graph of a function $f: S \to S'$, and with most operations on
binary relations.
In fact, the properties listed in Fact~\ref{fact:wpb} are \emph{all} the 
information on relation lifting that is needed in this paper.
Readers that have no interest in categorical details may safely skip some
material and move on to Fact~\ref{fact:wpb}.

Unfortunately, for two properties of $\Fb$ which are crucial for our results,
we need the functor to satisfy a certain categorical property, namely the
preservation of weak pullbacks (to be defined in Remark~\ref{r:wpp} below.
Fortunately, many set functors in fact do preserve weak pullbacks, which 
guarantees a wide scope for the results in this paper.

\begin{example}
All functors from Example~\ref{ex:BTK} preserve weak pullbacks.
For the functors $\BT$ and $\K$ this is not difficult to check. 
A proof for the fact that $\probdist$ is weak pullback preserving can be
found in~\cite{rutt:bisi99}.
That $\multiset$ is weak pullback preserving is a direct consequence of the
results in~\cite{gumm:mono01}.
\end{example}

Furthermore it can be shown that the class of weak pullback preserving
functors contains all constant functors, and is closed under composition,
taking products, coproducts (disjoint unions), and under exponentiation. 
We use these observations in order to define a class of functors that are
all weak pullback preserving.

\begin{definition}\label{d:ekpf}
The class $\KPF$ of \emdef{Kripke polynomial functors} is inductively defined 
by putting
\[
\F \mathrel{::=} A \in \Set \mid \Id \mid \Pow \mid \F + \F \mid \F \times \F \mid 
\F^D, D \in \Set \mid \F \circ \F,
\]
where $\Id$ denotes the identity functor on $\Set$.
Enlarging the basis of this inductive definition with the probabilistic and
multiset functor of Example~\ref{ex:BTK}:
\[
\F \mathrel{::=} A \in \Set \mid \Id \mid \Pow \mid \probdist \mid \multiset 
\mid \F + \F \mid \F \times \F \mid \F^D, D \in \Set \mid \F \circ \F.
\]
we arrive at the definition of the \emdef{extended Kripke polynomial functors}.
\end{definition}

As explained in the example one can prove the following fact.

\begin{fact}
All extended Kripke polynomial functors preserve weak pullbacks.
\end{fact}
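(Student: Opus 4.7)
The plan is to argue by structural induction on the grammar of Definition~\ref{d:ekpf}. That is, I would show that each basic functor in the grammar preserves weak pullbacks, and then that the various functorial operations used to build new functors from old ones (sum, product, exponentiation, composition) all preserve the property of preserving weak pullbacks. Once both of these are in place, a routine induction on the syntactic build-up of an extended Kripke polynomial functor $\F$ yields the claim.

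For the base cases, I would argue as follows. Any constant functor $A$ sends every square to a square of identities on $A$, so weak pullbacks are trivially preserved. The identity functor $\Id$ preserves all limits, hence weak pullbacks, trivially. For the powerset functor $\Pow$, I would invoke the well-known fact (due to Trnkov\'{a}) that $\Pow$ preserves weak pullbacks; one can verify it directly by unraveling what a weak pullback of sets is and checking that given $X \sse S$ and $X' \sse S'$ with $\Pow f (X) = \Pow g (X')$ in a cospan $f: S \to U$, $g: S' \to U$, one can build a witnessing subset of the weak pullback by choosing, for every $x\in X$, a partner $x' \in X'$ with $f(x) = g(x')$, and vice versa. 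For $\probdist$ and $\multiset$ the result is nontrivial, but I would simply cite~\cite{rutt:bisi99} and~\cite{gumm:mono01}, respectively, as the paper already does in the example above.

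For the inductive step, I would rely on the closure properties mentioned in the paragraph preceding Definition~\ref{d:ekpf}: the class of weak-pullback-preserving set functors is closed under sums, products, exponentiation $(-)^D$ with constant exponent, and composition. Each of these is a short and standard diagram-chase. For instance, if $\F, \FG$ preserve weak pullbacks then so does $\F \circ \FG$, since applying $\FG$ to a weak pullback square yields a weak pullback square (by hypothesis on $\FG$), and then applying $\F$ to that square yields again a weak pullback square (by hypothesis on $\F$); the arguments for sum, product and exponentiation are analogous, working componentwise.

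The main obstacle, if any, is really bookkeeping rather than mathematical depth: the results on $\Pow$, $\probdist$, $\multiset$ are external and cited, while the closure properties are folklore diagram-chases. Hence the proof reduces to assembling the base cases and stepping through the induction mechanically. In the write-up I would most likely content myself with citing the relevant sources for the nontrivial base cases and stating that the closure properties are straightforward, and conclude ``by induction on the complexity of $\F$''.
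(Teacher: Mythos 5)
Your proposal is correct and takes essentially the same route as the paper: the paper also treats this as an induction over the grammar of Definition~\ref{d:ekpf}, with the base cases handled by direct verification for the constant functors, $\Id$ and $\Pow$ and by the cited results of~\cite{rutt:bisi99} and~\cite{gumm:mono01} for $\probdist$ and $\multiset$, combined with the stated closure of weak-pullback-preserving functors under composition, products, coproducts and exponentiation. Your spelled-out witness construction for $\Pow$ is a sound instance of what the paper leaves as ``not difficult to check''.
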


Although the precise definition of weak pullback preservation is not relevant
in order to understand this paper, for the interested reader we provide some
details in Remark~\ref{r:wpp} below.

\begin{remark}
\label{r:wpp}
Given two functions $f_{0}: S_{0} \to S$, $f_{1}: S_{1} \to S$, a \emph{weak
pullback} is a set $P$, together with two functions $p_{i}: P \to S_{i}$ such
that $f_{0}\circ p_{0} = f_{1} \circ p_{1}$, and in addition, for every
triple $(Q,q_{0},q_{1})$ also satisfying $f_{0}\circ q_{0} = f_{1}\circ
q_{1}$, there is an arrow $h: Q  \to P$ such that $q_{0} = h\circ p_{0}$ and
$q_{1} = h\circ p_{0}$.\\
\centerline{
	 \xymatrix{ Q \ar@/_/[ddr]_{q_0}
      \ar@/^/[drr]^{q_1} \ar@{-->}[dr]^h & & \\
      & P \ar[r]^{p_1} \ar[d]_{p_0} & S_1 \ar[d]^{f_1}\\
      & S_0 \ar[r]_{f_0} & S }}

A functor $\F$ \emph{preserves weak pullbacks} if it transforms every weak
pullback $(P,p_{0},p_{1})$ for $f_{0}$ and $f_{1}$ into a weak pullback for
$\F f_{0}$ and $\F f_{1}$.
(The difference with \emph{pullbacks} is that in the definition of a weak
pullback, the arrow $h$ is not required to be unique.)

A category-theoretically nicer way of formulating this property
involves the category $\Rel$, i.e.\ the category of sets (as objects) and binary
relations (as arrows). 
A functor $\mathsf{Q}$ on $\Rel$ is called a \emdef{relator}
if for all binary relations $R,S$ such that $R \sse S$ 
we have $\mathsf{Q}(R)
\sse \mathsf{Q}(S)$.
A relator $\mathsf{Q}$
\emdef{extends} a functor $\F: \Set \to \Set$ if $\mathsf{Q} S = \F S$ for 
any object (set) $S$ and $\mathsf{Q} (\Graph{f}) = \Graph{\F f}$ for any
arrow (function) $f$; here $\Graph{f}$ denotes the graph of $f$.
Then one may prove that a set functor $\F$ can be extended to a relator 
iff $\F$ preserves weak pullbacks, 
%
%
and that this extension is unique if it exists.
For a sketch of the proof of this fact note that it is easy to see
(cf.~e.g.~\cite[Chap.~5]{bird:alge97}) that any relator $\mathsf{Q}$
extending a set functor $\F$ satisfies $\mathsf{Q}(R)=\Fb(R)$ for all
binary relations $R$. Furthermore Trnkov\'a
proved in~\cite{trnk:rela77} that $\Fb$ is a relator
iff $\F$ preserves weak pullbacks.

Finally, for an example of a functor that does not preserve weak pullbacks,
consider the functor that takes a set $S$ to the set of upward closed subsets
of $(\power S, \subseteq)$, and a function $f$ to its double inverse 
$(f^{-1})^{-1}$.
\end{remark}

As mentioned already, in the following fact we list all the properties
concerning relation lifting that we need in this paper.

\begin{fact}
\label{fact:wpb}
Let $\F$ be a set functor.
Then the relation lifting $\Fb$ satisfies the following properties, for
all functions $f: S \to S'$, all relations $R,Q \sse S \times S'$, and all 
subsets $T\sse S$, $T' \sse S'$:
\begin{enumerate}[(1)]
\item $\Fb$ extends $\F$:
$\Fb (\Graph{f}) = \Graph{\F f}$;

\item $\Fb$ preserves the diagonal:
$\Fb(\Id_{S}) = \Id_{\F S}$;

\item $\Fb$ commutes with relation converse:
$\Fb (\converse{R}) = \converse{(\Fb{R})}$;

\item $\Fb$ is monotone: 
if $R \subseteq Q$ then $\Fb (R) \subseteq \Fb (Q)$;

\item $\Fb$ distributes over composition: $\Fb(R \circ Q) = \Fb (R) \circ \Fb (Q)$,
if $\F$ preserves weak pullbacks.
\end{enumerate}
\end{fact}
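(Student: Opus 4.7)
My plan is to establish (1)--(4) directly from the definition of $\Fb$ together with the functoriality of $\F$, and to prove (5) by forming an explicit (weak) pullback; this last step is the only place where the weak pullback preservation hypothesis will be needed.

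For (1), I would exploit that the first projection $\pi:\Graph{f}\to S$, $(s,f(s))\mapsto s$, is a bijection with $\pi'=f\circ\pi$. Functoriality then makes $\F\pi$ a bijection with $\F\pi' = \F f\circ\F\pi$, so reparametrising $\psi \coloneqq (\F\pi)(\phi)$ in the definition of $\Fb(\Graph{f})$ produces $\{(\psi,(\F f)(\psi))\mid\psi\in\F S\}=\Graph{\F f}$. Property (2) is then the instance of (1) with $f = \id_S$. For (3), the swap $\tau:R\to\converse{R}$, $(s,s')\mapsto(s',s)$ is a bijection that interchanges the two projections, so the same reparametrisation trick (using the bijection $\F\tau$) yields $\Fb(\converse{R})=\converse{(\Fb R)}$. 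For (4), the inclusion $\iota:R\hookrightarrow Q$ commutes with the two projections, so for any $\phi\in\F R$, the element $(\F\iota)(\phi)\in\F Q$ has the same image under the $\F$-projections and thereby witnesses membership in $\Fb(Q)$.

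The nontrivial step is (5), which I would split into two inclusions. The inclusion $\Fb(R\circ Q)\subseteq\Fb(R)\circ\Fb(Q)$ requires no extra hypothesis: using choice, pick for each $(s,u)\in R\circ Q$ an intermediate element $e(s,u)$ in the middle set, giving maps $e_R:R\circ Q\to R$ and $e_Q:R\circ Q\to Q$ through which the two projections of $R\circ Q$ factor in the obvious way; then for any $\phi\in\F(R\circ Q)$ the pair $((\F e_R)(\phi),(\F e_Q)(\phi))$ witnesses the composition on the right-hand side.

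The converse inclusion $\Fb(R)\circ\Fb(Q)\subseteq\Fb(R\circ Q)$ is where weak pullback preservation is indispensable, and where I expect the bulk of the work to lie. Given witnesses $\phi_R\in\F R$ and $\phi_Q\in\F Q$ whose images in the $\F$-lifting of the middle set agree, I would form the pullback $P\coloneqq\{(r,q)\in R\times Q\mid \pi_R'(r)=\pi_Q(q)\}$ of $\pi_R'$ and $\pi_Q$ in $\Set$, with projections $p_R:P\to R$ and $p_Q:P\to Q$. By hypothesis, $\F P$ equipped with $\F p_R$ and $\F p_Q$ is a weak pullback of $\F\pi_R'$ and $\F\pi_Q$, so the compatibility of $\phi_R$ and $\phi_Q$ yields some $\psi\in\F P$ with $(\F p_R)(\psi)=\phi_R$ and $(\F p_Q)(\psi)=\phi_Q$. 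Applying $\F$ to the canonical map $\rho:P\to R\circ Q$, $(r,q)\mapsto(\pi_R(r),\pi_Q'(q))$ then produces $\phi\coloneqq(\F\rho)(\psi)\in\F(R\circ Q)$, and a short diagram chase using functoriality confirms that its projections under $\F$ are precisely the prescribed endpoints. The principal obstacle here is organising the several diagrams cleanly rather than any deep technical difficulty; once done, the verification reduces to the formal assertion that $\F$ preserves commuting triangles.
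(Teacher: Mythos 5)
Your proof is correct. Note that the paper itself contains no proof of this fact: it defers to the literature (Moss and Baltag for the list as a whole, with the observation that the role of weak pullback preservation in item (5) goes back to Trnkov\'a), so your argument is a genuine self-contained proof --- and it is the standard one underlying those references. Items (1)--(4) are exactly right: each follows from functoriality via reparametrisation along a bijection or factorisation through an inclusion ($\F\pi$ for graphs, $\F\tau$ for converse, $\F\iota$ for monotonicity), and (2) is indeed the special case $f=\id_S$ of (1) since $\Id_S = \Graph{\id_S}$ and $\F\id_S = \id_{\F S}$. Your split of (5) is also the correct one: the inclusion $\Fb(R\circ Q)\sse\Fb(R)\circ\Fb(Q)$ needs only a choice-based splitting $e$ of intermediate witnesses (and, as a sanity check, it survives the degenerate case $R\circ Q=\nada$, where $e_R,e_Q$ are the empty maps and any two maps out of $\nada$ agree, so the factorisations of the projections still hold even though $\F\nada$ may be nonempty), while the converse inclusion is where the hypothesis enters: you form the genuine $\Set$-pullback $P$ of the middle projections $\pi'_R$ and $\pi_Q$, which is in particular a weak pullback, so $\F$ sends it to a weak pullback of $\F\pi'_R$ and $\F\pi_Q$; the compatible pair $(\phi_R,\phi_Q)$, viewed as a cone from a one-element set, then lifts to some $\psi\in\F P$, and pushing forward along $\rho:P\to R\circ Q$, $(r,q)\mapsto(\pi_R(r),\pi'_Q(q))$, closes the diagram chase since $\pi_{R\circ Q}\circ\rho = \pi_R\circ p_R$ and $\pi'_{R\circ Q}\circ\rho = \pi'_Q\circ p_Q$. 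The only cosmetic caveat is in the statement itself, not your proof: the paper types $R,Q\sse S\times S'$ uniformly, which makes the composition in (5) ill-typed unless $S'=S$; you silently and correctly read (5) for composable relations in the paper's diagrammatic order.
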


For proofs we refer to \cite{moss:coal99,balt:logi00}, and references
therein.
The proof that Fact~\ref{fact:wpb}(5) depends on the property of weak
pullback preservation goes back to Trnkov\'a~\cite{trnk:rela77}.
In the remainder of the paper we will usually assume that all functors
that we consider preserve weak pullbacks (but we will always mention this 
explicitly when we formulate important results).

\subsection{Graph games}
\label{ssec-gg}

Two-player infinite graph games, or \emph{graph games} for short, are defined
as follows.
For a more comprehensive account of these games, the reader is referred 
to {\sc Gr\"adel, Thomas \& Wilke}~\cite{grae:auto02}.

First some preliminaries on sequences.
Given a set $A$, let $A^{*}$, $A^{\om}$ and $A^{\star}$ denote the 
collections of finite, infinite, and all, sequences over $A$, respectively.
(Thus, $A^{\star} = A^{*} \cup A^{\om}$.)
Given $\alpha \in A^{*}$ and $\beta \in A^{\star}$ we define the 
\emph{concatenation} of $\alpha$ and $\beta$ in the obvious way, and we 
denote this element of $A^{\star}$ simply by juxtaposition: $\alpha\beta$.
Given an infinite sequence $\alpha \in A^{\om}$, let $\Inf(\alpha)$ denote
the set of elements $a \in A$ that occur infinitely often in $\alpha$.

A graph game is played on a \emph{board} $B$, that is, a set of \emph{positions}.
Each position $b \in B$ \emph{belongs} to one of the two \emph{players}, $\eloi$
(\'Eloise) and $\abel$ (Ab\'elard).
Formally we write $B = B_{\eloi} \cup B_{\abel}$, and for each position $b$ we 
use $P(b)$ to denote the player $i$ such that $b \in B_{i}$.
Furthermore, the board is endowed with a binary relation $E$, so that each
position $b \in B$ comes with a set $E[b] \sse B$ of \emph{successors}.
Formally, we say that the \emph{arena} of the game consists of a directed
two-sorted graph $\bbB = (B_{\eloi} ,  B_{\abel}, E)$.

A \emph{match} or \emph{play} of the game consists of the two players moving
a pebble around the board, starting from some \emph{initial position} $b_{0}$.
When the pebble arrives at a position $b \in B$, it is player $P(b)$'s turn
to move; (s)he can move the pebble to a new position of their liking, but 
the choice is restricted to a successor of $b$.
Should $E[b]$ be empty then we say that player $P(b)$ \emph{got stuck} at
the position.
A \emph{match} or \emph{play} of the game thus constitutes a (finite or 
infinite) sequence of positions $b_{0}b_{1}b_{2}\ldots\ $ such that 
$b_{i}Eb_{i+1}$ (for each $i$ such that $b_{i}$ and $b_{i+1}$ are defined).
A \emph{full play} is either (i) an infinite play or (ii) a finite play in
which the last player got stuck.
A non-full play is called a \emph{partial} play.

The rules of the game associate a \emph{winner} and (thus) a \emph{looser} for
each full play of the game.
A finite full play is lost by the player who got stuck; the winning condition
for infinite games is given by a subset $\Ref$ of $B^{\omega}$ ($\Ref$ is short
for `referee'): our convention is that $\eloi$ is the winner of $\beta \in
B^{\om}$ precisely if $\beta \in \Ref$.
A \emph{graph game} is thus formally defined as a structure $\G = (B_{\eloi}
,  B_{\abel}, E, \Ref)$.
Sometimes we want to restrict our attention to matches of a game with a
certain initial position; in this case we will speak of a game that is 
\emph{initialized} at this position.

Various kinds of winning conditions are known.
In a \emph{parity game}, the set $\Ref$ is defined in terms of a \emph{parity
function} on the board $B$, that is, a map $\Om: B \to \om$ with finite
range.
More specifically, the set $\Ref$ is defined by
\begin{equation}
\label{e-p-1}
B^{\om}_{\Om} := \{ \beta \in B^{\om} \mid
   \max \big( \Inf(\Om\circ\beta) \big) \mbox{ is even} \}
\end{equation}
(where $\Inf$ was defined at the beginning of this subsection).
In words, $\eloi$ wins a match if the highest parity encountered infinitely
often during the match, is even.

A \emph{strategy} for player $i$ is a function mapping partial plays $\beta
= b_{0}\cdots b_{n}$ with $P(b_{n}) = i$ to admissible next positions, that
is, to elements of $E[b_{n}]$.
In such a way, a strategy tells $i$ how to play:
a play $\beta$ is \emph{conform} or \emph{consistent with} strategy $f$ for 
$i$ if for every proper initial sequence $b_{0}\cdots b_{n}$ of $\beta$ with
$P(b_{n}) = i$, we have that $b_{n+1} = f(b_{0}\cdots b_{n})$.
A strategy is \emph{history free} if it only depends on the current 
position of the match, that is, $f(\beta) = f(\beta')$ whenever $\beta$ and
$\beta'$ are partial plays with the same last element (which belongs to the
appropriate player).
Occasionally, it will be convenient to extend the name `strategy' to
arbitrary functions mapping partial plays to positions; in order words, we
allow strategies enforcing \emph{illegal} moves.
In this context, the strategies proper, that is, the ones that dictate
admissible moves only will be called \emph{legitimate}.

A strategy is \emph{winning for player $i$} from position $b \in B$ if it
guarantees $i$ to win any match with initial position $b$, no matter how
the adversary plays --- note that this definition also applies to positions
$b$ for which $P(b) \neq i$.
A position $b \in B$ is called a \emph{winning position} for player $i$, if 
$i$ has a winning strategy from position $b$; the set of winning positions
for $i$ in a game $\G$ is denoted as $\Win_{i}(\G)$.

Parity games form an important game model because they have many attractive
properties, such as \emph{history-free determinacy}.

\begin{fact}
\label{f-2-3}
Let $\G = (B_{\eloi},B_{\abel},E,\Om)$ be a parity graph game. 
Then
\begin{enumerate}[(1)]
\item $\G$ is determined: $B = \Win_{\eloi}(\G) \cup \Win_{\abel}(\G)$.
\item Each player $i$ has a history-free strategy which is winning from 
any position in $\Win_{i}(\G)$.
\end{enumerate}
\end{fact}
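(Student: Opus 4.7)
The plan is to prove this classical positional determinacy theorem (due independently to Emerson--Jutla and Mostowski) by induction on the cardinality $n$ of the (finite) range of $\Om$. Since (1) follows from (2) once every position is shown to be winning for some player, I would focus on establishing (2).

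For the base case $n=1$, $\Om$ is constant, so every infinite match is won by the same player $i$, and the game reduces to a reachability/safety game in which the opponent $j$ tries to force $i$ to get stuck. I would define the winning region $W_{j}$ as the least fixed point of the monotone operator sending $X \sse B$ to the union of the $i$-stuck positions, the $B_{j}$-positions with a successor in $X$, and the $B_{i}$-positions all of whose successors lie in $X$. The ordinal rank induced by this fixed-point approximation yields a positional winning strategy for $j$: always move to a successor of strictly smaller rank. On the complement $B \setminus W_{j}$, player $i$ wins positionally by always choosing a successor that stays outside $W_{j}$, which is possible precisely by the fixed-point equation.

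For the inductive step I would let $d$ be the maximum value of $\Om$, assume $d$ is even (the odd case is symmetric), and set $N := \Om^{-1}(d)$. Let $A$ be \'Eloise's attractor of $N$, computed as an analogous least fixed point; on $A$ she has a positional strategy forcing a visit to $N$ in finitely many moves. I would then consider the subgame $\G'$ induced on $B \setminus A$ with $\Om$ restricted to priorities $< d$, and apply the induction hypothesis to obtain winning regions $W'_{\eloi}, W'_{\abel}$ and positional winning strategies $\sigma', \tau'$ for $\G'$. Letting $A^{*}$ be Ab\'elard's attractor of $W'_{\abel}$ in the full game $\G$, he wins positionally on $A^{*}$ by pairing the attractor strategy with $\tau'$ once $W'_{\abel}$ is reached. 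If $A^{*} = \emptyset$, then \'Eloise wins positionally on all of $B$ by using $\sigma'$ on $W'_{\eloi}$ and her attractor strategy on $A$: every consistent infinite play either eventually stays in $W'_{\eloi}$ (where she wins by the induction hypothesis) or visits $N$ infinitely often, so that $d$ is the largest priority seen infinitely often. If $A^{*} \neq \emptyset$, I would recurse on the strictly smaller game on $B \setminus A^{*}$ and paste the positional strategies together.

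The main obstacle will be termination of the inner recursion. For finite $B$ it is immediate from $|B \setminus A^{*}| < |B|$, which underlies Zielonka's algorithm; for infinite $B$ the size argument has to be replaced by a well-founded ordinal measure, or the induction recast as a nested $\mu$-calculus fixed-point characterization of the winning regions, from which positional strategies are extracted via signature functions on the approximation ordinals. A secondary but routine verification is that gluing positional strategies defined on disjoint parts of the board (\'Eloise's attractor strategy on $A$ together with $\sigma'$ on $W'_{\eloi}$) yields a single legitimate positional strategy on \'Eloise's winning region; this works because attractor sets are closed under the opponent's moves.
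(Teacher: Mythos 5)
The paper itself does not prove this Fact: item (1) is imported from Martin's Borel determinacy theorem and item (2) from Mostowski and Emerson--Jutla, via citation only. So you are attempting a self-contained proof where the paper offers none, and the Zielonka-style recursive decomposition you chose is indeed the standard elementary route, yielding both items at once without Borel determinacy. Your base case and the overall architecture (maximal-priority set $N$, \'Eloise's attractor $A$, induction hypothesis on the subgame, case split on $A^{*}$) are the right skeleton. However, there are two genuine problems with the plan as stated.

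First, a step that fails: your inductive case $A^{*} = \nada$ concludes that \'Eloise wins positionally on all of $B$, and this is false in the presence of dead ends. Take $B = \{b,c\}$ with both positions belonging to $\eloi$, $E[b] = \nada$, $E[c] = \{c\}$, $\Om(b) = 2$, $\Om(c) = 0$. Then $d = 2$, $N = \{b\}$, $A = \{b\}$ (the loop at $c$ never enters the attractor), the subgame on $\{c\}$ is won entirely by $\eloi$, so $W'_{\abel} = \nada$ and $A^{*} = \nada$; yet $b$ is lost by $\eloi$, who is stuck there immediately. Your base case treats stuck positions carefully, but the inductive step silently assumes a dead-end-free arena. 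This is not a cosmetic convention issue for this paper: in the acceptance games here, getting stuck is a central mechanism (e.g., $\eloi$ wins instantly by playing $\nada \in \De(a)$, leaving $\abel$ stuck), so you must either preprocess by adding each player's attractor of the opponent's stuck positions to that player's winning region (or reroute dead ends to sinks of suitable priority), or fold the stuck sets into the attractor computations throughout the induction.

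Second, the termination issue you flag yourself is not peripheral but is exactly where the result's difficulty lies for this paper. The boards to which Fact 2.3 is applied are of the form $S \times A \cup \dots$ for an \emph{arbitrary} coalgebra carrier $S$, hence infinite in general; only the range of $\Om$ is finite. So the inner recursion on $|B \setminus A^{*}| < |B|$ is unavailable in precisely the cases that matter. Naming the remedies (ordinal measures, signatures) is not yet a proof: if you iterate the $A^{*}$-step transfinitely, you need the nontrivial lemma that an increasing union of positionally-won, trap-closed regions is again positionally won --- one assigns to each position the strategy of the first stage containing it and must verify that the stage ordinal is non-increasing along consistent plays, so that some tail of every play is consistent with a single stage's strategy. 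Alternatively, the signature/progress-measure arguments of Emerson--Jutla and Walukiewicz establish (2) directly on infinite arenas with finitely many priorities. As it stands, your plan is a correct outline for finite, dead-end-free arenas only, which covers neither the stuck positions nor the infinite boards that this paper's applications require.
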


The determinacy of parity games follows from a far more general
game-theoretic result concerning Borel games, due to 
\textsc{Martin}~\cite{mart:bore75}.
The fact that winning strategies in parity games can always be taken to be 
history free, was independently proved in 
\textsc{Mostowski}~\cite{most:game91}
and \textsc{Emerson \& Jutla}~\cite{emer:tree91}.

\subsection{Coalgebra automata and the acceptance game}

For a detailed exposition of coalgebra automata, the reader is referred
to \textsc{Venema}~\cite{vene:auto06}.
Here we confine ourselves to a self-contained survey of the definitions.

Probably the easiest introduction to coalgebra automata involves a
reformulation of the notion of bisimilarity in game-theoretic terms.
It follows from the characterization
(\ref{eq:bis}) that a bisimulation between two coalgebras $\bbS =
\struc{S,\si}$ and $\bbA = \struc{A,\al}$ is nothing but a postfixpoint of
the following operation on the set $\Sb{S \times A}$ of binary relations
between $S$ and $A$:
\[
Z \mapsto \{ (s,a) \in S \times A \mid 
(\si(s),\al(a)) \in \Fb(Z) \}.
\]
By monotonicity of relation lifting, this operation is monotone, and thus
it is an immediate consequence of standard fixpoint theory that there is a
largest bisimulation between the two coalgebras, which is given as the union
of all bisimulations between $\bbS$ and $\bbA$.
Furthermore, this largest bisimulation has a nice \emph{game-theoretic}
characterization, formulated in this generality for the first time in
\textsc{Baltag}~\cite{balt:logi00}.

For an informal description of this game, the admissible moves of the players
 are given as follows:
\begin{enumerate}[$\bullet$]
\item
in position $(s,a)$, $\eloi$ may choose a \emph{local bisimulation} for 
$s$ and $a$, i.e., a relation $Z \sse S \times A$ satisfying
$(\si(s),\al(a)) \in \Fb Z$;
\item
in position $Z \sse S \times A$, $\abel$ may choose any element $(s',a')$ 
of $Z$.
\end{enumerate}
Finally, the winning conditions for infinite matches of this game are 
straightforward: 
if $\eloi$ manages to survive all finite stages of a match,
she is declared the winner of the resulting infinite match.

This bisimilarity game can be formulated as a graph game with a very simple
parity winning condition 
(namely, all positions have the same, even,
priority).

\begin{definition}\label{def:bisgame}
Let $\F$ be a set functor, and let $\bbS = \struc{S,\si}$ and 
$\bbA = \struc{A,\al}$
be two $\F$-coalgebras.
The \emdef{bisimilarity game} $\B(\bbA,\bbS)$ associated with $\bbS$ and
$\bbA$ is the parity graph game $(B_{\eloi},B_{\abel},E,\Om)$ with
\[\begin{array}{lll}
   B_{\eloi} &:=& S \times A
\\ B_{\abel} &:=& \Sb{S \times A},
\end{array}\]
while $E$ and $\Om$ are given in Table~\ref{tb:1}.
Positions of the form $(s,a) \in S \times A$ are called \emdef{basic}.
\begin{table*}[t]
\begin{center}{\small 
\begin{tabular}[b]{|l|c|l|c|}
\hline
Position: $b$ & $P(b)$ & Admissible moves: $E[b]$ & $\Om(b)$ \\ \hline
$(s,a) \in  S \times A$           & $\eloi$ &
   $\{ Z \in \Sb{S \times A} \mid (\si(s),\al(a)) \in \Fb Z \}$    & 0 \\
$Z \in \Sb{S \times A}$           & $\abel$ & $Z$             & 0 \\ \hline
\end{tabular}
}\end{center}
\caption{Bisimilarity game for $\F$-coalgebras}
\label{tb:1}
\end{table*}
The set of winning positions for $\eloi$ in this game is denoted as
$\Win_{\eloi}(\B(\bbA,\bbS))$, or $\Win_{\eloi}$ if no confusion is likely.
\end{definition} 

\begin{remark}
We leave it for the reader to verify that
\[
(s,a) \in \Win_{\eloi}(\B) \mbox{ iff } \bbS,s \bis \bbA,a.
\]
The key observation here is that the relation 
$\{ (s,a) \in S \times A \mid (s,a) \in \Win_{\eloi}(\B)\}$ is the largest
bisimulation between $\bbS$ and $\bbA$.
\end{remark}

In order to make the key observation in understanding coalgebra automata,
we take a slightly \emph{different perspective} on the bisimilarity game.
The point is to think of one structure ($\bbA$) as \emph{classifying} the
other one ($\bbS$).
This conceptual removal of the symmetry between the two structures,
enables us to think of $\bbA$ as an \emph{automaton} operating on $\bbS$,
and of the game as an \emph{acceptance game} rather than as a comparison
game.
But from this perspective it is natural to impose some modifications on 
$\bbA$, making it resemble the standard concept of an automaton more
closely:

To start with, the state space $A$ of $\bbA$ is required to be finite,
and $\bbA$ will have a fixed \emph{initial state} $\ai \in A$.

Second, some infinite matches may be won by $\abel$.
This can easily be implemented as follows.
With each infinite match of the game we may associate an infinite stream of
\emph{basic positions} of the form $(s,a) \in S \times A$, and thus an
infinite stream of states from $A$. 
Hence, the winning conditions in the game can be formulated using a subset
$\Acc \sse A^{\om}$.
And since $A$ is finite we may formulate more specific conditions; for 
instance, a parity condition can be formulated by a map $\Om: A \to \om$.

And finally, we may introduce \emdef{nondeterminism} or even \emph{alternation}
on $\bbA$.
Here the idea is that, whereas in a coalgebra $\bbA = \struc{A,\al}$,
the `successor object' of a state $a \in A$ is fixed as $\al(a) \in \F A$,
we may now allow $\eloi$ to dynamically \emph{choose} this object from some 
\emph{set} $\De(a) \in \power(\F A)$ of objects in $\F A$.
Or, the `successor object' of $a \in A$ may be dynamically determined via
some game-theoretic interaction between the \emph{two} players.

Putting these observations together, we arrive at the following definition
of \textsc{Venema}~\cite{vene:auto04}.

\begin{convention}{\rm
In the sequel we will frequently denote the power set $\power S$ of a 
set $S$ by either $\pwE S$ or $\pwA S$.
This notation indicates that we are in a game-theoretic context, where $X
\in \power_{i} S$ means that $X$ represents a collection of possible moves,
and that $i$ is the player who may choose an element from $X$.
}\end{convention}

\begin{definition}\label{def:Fautomaton}
Let $\F$ be some set functor.
An (alternating) \emdef{$\F$-automaton} is a quadruple $\bbA = 
\struc{A,\ai,\Delta,\Acc}$ with $A$ some finite set of objects called
\emdef{states}, $\ai \in A$ the \emdef{initial state}, $\Delta: A \to \pwE \pwA
\F A$ the \emdef{transition function} and $\Acc \sse A^{\om}$ the 
\emdef{acceptance condition}.
An $\F$-automaton is called \emdef{nondeterministic} if each member of each 
$\Delta(a)$ is a singleton set.
The \emdef{size} of $\bbA$ is defined as the number of elements of $A$.

A \emdef{parity} $\F$-automaton is an $\F$-automaton $\bbA = 
\struc{A,\ai,\Delta,\Acc}$ where $\Acc$ is given by some 
parity condition $\Om: A \to \omega$;
such a structure will usually be denoted as $\bbA = \struc{A,\ai,\Delta,\Om}$.
The \emdef{index} of a parity automaton is defined as the size of the range of
$\Om$.\qquad\phantom{x}
\end{definition}

$\F$-automata are designed to accept or reject pointed $\F$-coalgebras.
The acceptance condition is formulated in terms of a graph game.
For an informal description of this game, the first observation is that 
matches of this game proceed in \emph{rounds} that start and end in a 
\emph{basic position} of the form $(s,a) \in S \times A$.
From such a basic position $(s,a)$, a round of the match proceeds along
the moves (a) - (d) below (of course, unless one of the players gets stuck):
\begin{enumerate}[(a)]
\item
$\eloi$ picks an element $\Phi \in \De(a)$, making $(s,\Phi) \in S
\times \pwA \F A$ the next position.
\item
$\abel$ picks $\phi \in \Phi$, moving to position $(s,\phi) \in S
\times \F A$.
\end{enumerate}
Note that this interaction between $\eloi$ and $\abel$ has fixed the 
`successor object' $\phi \in \F A$ of $a$, whereas the `successor object' 
$\si(s)$ of $s$ was determined from the outset of the match.
\begin{enumerate}[(a)]
\item[(c)]
$\eloi$ picks a `local bisimulation' $Y$ for $\phi$ and $\si(s)$, that is, 
a binary relation $Y \sse S \times A$ such that $(\si(s),\phi) \in \Fb Y$.
This relation $Y_{s,\phi}$ is itself the new position.
\item[(d)]
$\abel$ chooses a pair $(t,b) \in Y$ as the next basic position.
\end{enumerate}
In the sequel we will refer to the first two moves in the round of the game
as the \emph{static} part of the round (static because the match does not
pass to another state in the coalgebra), and to the last two moves as the 
\emph{dynamic} or \emph{coalgebraic} stage of the round.

For the winning conditions, recall that finite matches are lost by the
player who gets stuck. 
For infinite matches, consider an arbitrary such match:
\[
\mu \;=\; 
(s_{0},a_{0})(s_{0},\Phi_{0})(s_{0},\phi_{0})Y_{0}
(s_{1},a_{1})(s_{1},\Phi_{1})(s_{1},\phi_{1})Y_{1}
(s_{2},a_{2})\ldots
\]
Clearly, $\mu$ induces an infinite sequence of basic positions
\[
(s_{0},a_{0})(s_{1},a_{1})(s_{2},a_{2})\ldots
\]
and, thus, an infinite sequence of states in $A$:
\[
\mu\bpr_{A}:= a_{0}a_{1}a_{2}\ldots
\]
Now the winner of the match is determined by whether $\mu\bpr_{A}$ belongs
to the set $\Acc$ or not.

\begin{definition}\label{def:accgame}
Let $\bbA = \struc{A,\ai,\De,\Acc}$ 
be an $\F$-automaton, and let $\bbS = \struc{S,\si}$ be
an $\F$-coalgebra.
The \emdef{acceptance game} $\G(\bbA,\bbS)$ associated with $\bbA$ and $\bbS$
is the graph game $(B_{\eloi},B_{\abel},E,\udl{\Acc})$ with
\[\begin{array}{lllll}
   B_{\eloi} &:=& S \times A           &\cup& S \times \F A
\\ B_{\abel} &:=& S \times \power \F A &\cup& \Sb{S \times A},
\end{array}\]
where $E$ is given in Table~\ref{tb:2}.
Positions of the form $(s,a) \in S \times A$ are called \emdef{basic}.
\begin{table*}[t]
\begin{center}{\small 
\begin{tabular}[b]{|l|c|l|}
\hline
Position: $b$ & $P(b)$ & Admissible moves: $E[b]$  \\ \hline
$(s,a) \in  S \times A$           & $\eloi$ &
   $\{ (s,\Phi) \in S \times \Sb{\F A} \mid \Phi \in \De(a) \}$  \\
$(s,\Phi) \in S \times \Sb{\F A} $  & $\abel$ &
   $\{ (s,\phi) \in S \times \F A \mid
       \phi \in \Phi \}$               \\
$(s,\phi) \in S \times \F A$ & $\eloi$ &
   $\{ Z \in \Sb{S \times A} \mid (\si(s),\phi) \in \Fb Z \}$    \\
$Z \in \Sb{S \times A}$           & $\abel$ & $Z$              \\ \hline
\end{tabular}
}\end{center}
\caption{Acceptance game for an $\F$-automaton}
\label{tb:2}
\end{table*}

For the winning conditions of $\G(\bbA,\bbS)$, observe that every infinite
match $\mu$ induces an infinite sequence $\mu\bpr_{A} := a_{0}a_{1}a_{2}\ldots
\in A^{\om}$.
We put $\mu \in \udl{\Acc}$ if $\mu\rst{A} \in \Acc$, i.e.\
the winner of $\mu$ is $\eloi$ if $\mu\rst{A} \in \Acc$, and $\abel$ 
otherwise.

The set of winning positions for $\eloi$ in this game is denoted as
$\Win_{\eloi}(\G(\bbA,\bbS))$, or $\Win_{\eloi}$ if no confusion is likely.
$\bbA$ \emph{accepts} the pointed $\F$-coalgebra $(\bbS,s)$ if $(s,\ai) \in 
\Win_{\eloi}$.
\end{definition} 

\begin{remark}
It is clear from the definition of $\udl{\Acc}$ that only the \emdef{basic} 
positions of a match, i.e., positions of the form $(s,a) \in S \times A$, are 
relevant to determine the winner of the match.
Accordingly, in the sequel we will frequently represent a match of the game 
by the sequence of basic positions visited during the match.
\end{remark}

It is easy to see that the acceptance games associated with \emph{parity} 
automata, are parity games. 
(Simply define the priority of a basic position $(s,a)$ as $\udl{\Om}(s,a)
:= \Om(a)$, putting $\udl{\Om}(p) := 0$ for all other positions.)
But parity games are known to enjoy a strong form of determinacy: in any
position of the game board either $\eloi$ or $\abel$ has a history-free
winning strategy.
Therefore we can focus on $\eloi$'s history-free strategies.

\begin{definition}\label{def:eloistrategy}
Given an $\F$-coalgebra $(S,\sigma)$ and a parity $\F$-automaton $\bbA$ a 
\emdef{positional} or \emdef{history-free} strategy for $\eloi$ is a pair of
functions 
\[
(\Phi:S \times A \to \pwA \F A, \; Z:S \times \F A \to \Sb{S \times A}).
\]
Such a strategy is \emdef{legitimate} at a position if it maps the position 
to an admissible next position.
A positional strategy of the kind $\Phi: S \times A \to \power \F A$ will often
be represented as a map $\Phi: S \to (\power \F)^{A}$; values of this map 
will be denoted as $\Phi_s$, etc.
\end{definition}

\begin{remark}
\label{r:nd}
In the case of a nondeterministic $\F$-automaton $\bbA$ we shall usually
simplify our notation a little.
Recall that the transition map of such an automaton is of the form $\De:
A \to \pwE \pwA \F A$, with each element of each $\De(a)$ a singleton.
As a consequence, the move of $\forall$ in the static part of the game is
completely determined --- he has nothing to choose.
Consequently, we may eliminate these vacuous moves from the game by
simplifying the presentation of the automaton.

Identifying singleton sets with their unique elements, we think of the
transition function $\De$ as a map of type $A \to \pwE \F A$.
Accordingly then, we present the first component $\Phi$ of a positional
strategy $(\Phi,Y)$ for $\eloi$ as a function of type $S \times A \to \F A$.
\end{remark}

It should be stressed that, in the case of automata operating on well-known
infinite objects such as labelled binary trees, we have not really introduced
a new kind of device, but rather, given a slightly different presentation of
the more standard automata.

\begin{example}
Consider the case of binary tree automata over an alphabet $C$.
In our presentation, $C$-labelled binary trees are coalgebras for the functor
$\BT_{C}$, with $\BT_{C}(S) = C \times S \times S$, see Example~\ref{ex:BTK}. 

The transition map of nondeterministic tree automata is usually presented in 
the form
\[
\De: A\times C \to \power(A \times A),
\]
whereas in our presentation, following Remark~\ref{r:nd}, the transition map is of the form
\[
\De: A \to \pwE(C \times A \times A).
\]
It is not difficult to see that these two presentations are in fact
equivalent.
Using the principle of currying $(P \times Q) \to R \cong P \to (Q \to R)$,
and the notion of characteristic function $\power(Q) \cong Q \to 2$, we 
obtain
\begin{eqnarray*}
(A \times C) \to \power(A \times A)
   &\cong& (A \times C) \to ((A \times A) \to 2)
\\ &\cong& (A \times C \times A \times A) \to 2
\\ &\cong& A \to ((C \times A \times A) \to 2)
\\ &\cong& A \to \power(C \times A \times A).
\end{eqnarray*}
In~\cite{vene:auto06} the second author explains the equivalence between the 
two presentations in detail.
\end{example}

In fact, in the case the functor is of the form $\F_{C}$ for some functor
$\F$ and color set $C$ --- that is, if we are investigating $C$-colored
$\F$-coalgebras, we could have defined coalgebra automata in a different way,
which is more in line with the standard usage.
This alternative definition leads to the notion of
\emph{chromatic $\F$-automata}~\cite{vene:auto04}, which is needed (only) in
subsection~\ref{ss:clpr}.

\begin{definition}\label{def:chromatic}
Let $C$ be a finite set.
A \emdef{$C$-chromatic $\F$-automaton} is a quintuple $\bbA = 
\struc{A,\ai,C,\De,\Om}$ 
such that $\De: A \times C \to \pwE\pwA \F A$ (and $A$, $\ai$, and $\Om$ 
are as before).
Given such an automaton and a $\F_{C}$-coalgebra $\bbS = (S,\ga,\si)$, 
the acceptance game $\G_{C}(\bbA,\bbS)$ is defined as the acceptance game
for $\F$-automata with the only difference that $\eloi$ has to
move from a position $(s,a)$ to a position $(s,\Phi)$ such
that $\Phi \in \Delta(a,\gamma(s))$.
\phantom{bla}
\end{definition}

It was shown in~\cite{vene:auto04} that $C$-chromatic $\F$-automata and 
$\F_{C}$-automata have the same recognizing power.
We need the following fact.
\begin{fact}
\label{f:2:1}
With any parity $\F_{C}$-automaton $\bbA$ we may associate a $C$-chromatic 
$\F$-automaton $\bbA_C$, the {\em $C$-chromatic $\F$-companion} of $\bbA$, such 
that $\bbA$ and $\bbA_C$ accept the same $\F_{C}$-coalgebras. 
\end{fact}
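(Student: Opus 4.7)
The approach is to define $\bbA_C$ by a direct translation of the transition map of $\bbA$, and then exhibit a bijection between plays of the two acceptance games that preserves the induced $A$-sequences, whence the parity condition transfers.

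Given $\bbA = \struc{A, \ai, \Delta, \Omega}$ with $\Delta : A \to \pwE\pwA(C \times \F A)$, put $\bbA_C := \struc{A, \ai, C, \Delta^\sharp, \Omega}$ where
\[
\Delta^\sharp(a, c) := \{\Phi' \in \pwA\F A \mid \{c\} \times \Phi' \in \Delta(a)\}.
\]
Fix an $\F_C$-coalgebra $\bbS = \struc{S, \sigma}$ and decompose $\sigma(s) = (\gamma(s), \tau(s))$ with $\gamma: S \to C$ and $\tau: S \to \F S$. The key observation, which is essentially the computation of $\overline{\F_C}$ recorded in Example~\ref{ex:bis}, is that in $\G(\bbA, \bbS)$, from a position $(s, (c, \psi)) \in S \times \F_C A$, $\eloi$ has a legitimate move iff $c = \gamma(s)$, and her admissible responses $Y$ are then precisely those with $(\tau(s), \psi) \in \Fb(Y)$. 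Consequently, any winning positional $\eloi$-strategy must, at each basic position $(s, a)$, pick a set $\Phi \in \Delta(a)$ with $\Phi \sse \{\gamma(s)\} \times \F A$; otherwise $\abel$ refutes the move immediately.

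Using this, I define mutually inverse translations between winning positional strategies (history-free determinacy, Fact~\ref{f-2-3}, justifies restricting attention to these). A strategy $(\Phi, Y)$ for $\eloi$ in $\G(\bbA, \bbS)$ maps to $(\Phi^\sharp, Y^\sharp)$ in $\G_C(\bbA_C, \bbS)$ via $\Phi^\sharp(s, a) := \{\psi \mid (\gamma(s), \psi) \in \Phi(s, a)\}$ and $Y^\sharp(s, \psi) := Y(s, (\gamma(s), \psi))$; conversely $(\Phi', Y')$ lifts back via $\Phi(s, a) := \{\gamma(s)\} \times \Phi'(s, a)$ (which lies in $\Delta(a)$ by construction of $\Delta^\sharp$) and $Y(s, (c, \psi)) := Y'(s, \psi)$. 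Plays under corresponding strategies are in bijection via $(s, (c, \psi)) \leftrightarrow (s, \psi)$, so the induced $A$-sequences $\mu\bpr_A$ agree and parity outcomes match. The only delicate point is the routine verification that each translated strategy stays admissible round by round: the static move by the definition of $\Delta^\sharp$, and the coalgebraic move by the equivalence $(\sigma(s), (\gamma(s), \psi)) \in \overline{\F_C}(Y) \iff (\tau(s), \psi) \in \Fb(Y)$. This yields $(s, \ai) \in \Win_\eloi(\G(\bbA, \bbS))$ iff $(s, \ai) \in \Win_\eloi(\G_C(\bbA_C, \bbS))$, so $\bbA$ and $\bbA_C$ accept the same pointed $\F_C$-coalgebras.
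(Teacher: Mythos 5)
The paper itself contains no proof of this statement: it is recorded as a Fact and imported from \cite{vene:auto04}, so there is no internal argument to compare yours against. Judged on its own, your proof is correct, and it gets the one genuinely delicate point right. Since $(\sigma(s),(c,\psi)) \in \ovl{\F_{C}}(Y)$ forces $c = \gamma(s)$, any $\Phi \in \De(a)$ containing a pair whose colour differs from $\gamma(s)$ is immediately refutable by $\abel$, so the only usable members of $\De(a)$ at a basic position $(s,a)$ are those contained in $\{\gamma(s)\}\times\F A$; your definition $\Delta^{\sharp}(a,c) = \{\Phi' \mid \{c\}\times\Phi' \in \De(a)\}$ captures exactly these. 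Note that the tempting alternative of letting $\Delta^{\sharp}(a,c)$ consist of the $c$-slices $\{\psi \mid (c,\psi)\in\Phi\}$ of \emph{arbitrary} $\Phi \in \De(a)$ would be wrong: slicing a mixed-colour $\Phi$ deletes $\abel$'s refuting options and can strictly enlarge the accepted language, so your monochromatic-membership condition is not merely one correct choice but the forced one. Two small imprecisions, neither a gap: the clause that $\eloi$ ``has a legitimate move iff $c = \gamma(s)$'' is too strong in one direction, since even with $c = \gamma(s)$ there need not exist any $Y$ with $(\tau(s),\psi) \in \Fb(Y)$ (for $\K$, take $\tau(s) = \nada$ and $\psi \neq \nada$) --- but your argument only uses the correct direction together with the characterization of the admissible relations $Y$; and the claim that a winning positional strategy picks a monochromatic $\Phi$ at \emph{every} basic position should be restricted to positions in $\Win_{\eloi}$, which suffices because matches conform a winning strategy never leave the winning region.
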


\subsection{Standardization}
\label{s:standard}

We already mentioned that we will work with a weak pullback preserving
functor $\F$ throughout the paper. Moreover, for a smooth presentation,
it will sometimes be useful to require the functor $\F$ to be \emph{standard}.
Of course we will always clearly state when exactly we assume the property of 
standardness. 
The purpose of this section is to convince the reader that the restriction
to standard set functors is not essential, in that every set functor is 
`almost standard'. 
Let us start by formally defining the notion of a standard set functor.

\begin{definition}\label{d:stand}
Given two sets $S$ and $T$ such that $S \sse T$, let $\iota_{S,T}$ denote the inclusion map from
$S$ into $T$.
A set functor $\F$ is \emdef{standard} if $\F\iota_{S,T} = \iota_{\F S,\F T}$
for every inclusion map $\iota_{S,T}$.
\end{definition}

Many but not all set functors have this property.
For instance, all Kripke polynomial functors of Definition~\ref{d:ekpf} are
standard, but not the multiset functor of Example~\ref{ex:BTK}.

In words, a set functor is standard iff it turns inclusions into inclusions.
This means that in particular, $S \sse T$ implies $\F S \sse \F T$.
%
An immediate observation is that standardness ensures that 
the definition of 
the lifting of a relation $R$ is independent of its type (cf.\ 
Remark~\ref{r:well-def}).
\begin{proposition}
  Let $\F$ be a standard set functor, let $S',S,T',T$ be sets 
  and let $R' \subseteq S' \times T'$ be a relation. Furthermore
  let $R \sse S \times T$ be the relation $R'$ - but now seen
  as a relation between $S$ and $T$. Then the relations
  $\Fb R$ and $\Fb R'$ are equal.
\end{proposition}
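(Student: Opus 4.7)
The plan is to unfold the definition of relation lifting for both $R$ and $R'$ and then use standardness to show that the two resulting sets of pairs coincide.

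First, I would set up notation. Since $R \subseteq S \times T$ and $R' \subseteq S' \times T'$ are \emph{equal as sets of ordered pairs}, we have $R = R'$, hence $\F R = \F R'$. Let $\pi_1: R \to S$, $\pi_2: R \to T$, $\pi_1': R' \to S'$ and $\pi_2': R' \to T'$ denote the respective projections. Because $S' \subseteq S$ and $T' \subseteq T$, and $R = R'$, the projections satisfy
\[
\pi_1 = \iota_{S',S} \circ \pi_1' \qquad \text{and} \qquad \pi_2 = \iota_{T',T} \circ \pi_2'.
\]
By functoriality of $\F$ and the standardness assumption, we then get $\F\pi_1 = \F\iota_{S',S} \circ \F\pi_1' = \iota_{\F S', \F S} \circ \F\pi_1'$ and similarly $\F\pi_2 = \iota_{\F T', \F T} \circ \F\pi_2'$.

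Next, I would substitute these identities into Definition~\ref{d:rellift}. For any $\phi \in \F R = \F R'$,
\[
((\F\pi_1)(\phi),(\F\pi_2)(\phi)) = ((\F\pi_1')(\phi),(\F\pi_2')(\phi)),
\]
since applying an inclusion map to an element leaves that element unchanged. Hence
\[
\Fb R = \{ ((\F\pi_1)(\phi),(\F\pi_2)(\phi)) \mid \phi \in \F R \}
= \{ ((\F\pi_1')(\phi),(\F\pi_2')(\phi)) \mid \phi \in \F R' \}
= \Fb R',
\]
proving the claim.

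The only subtle point — and the one worth stating carefully — is the standardness step: without it, $\F\iota_{S',S}$ need not be the inclusion $\iota_{\F S', \F S}$, and the elements $(\F\pi_1)(\phi)$ and $(\F\pi_1')(\phi)$ could then differ (indeed, one might not even live in $\F S'$). Once standardness is in hand, however, the argument is essentially a one-line chase through the definition, and there is no real obstacle.
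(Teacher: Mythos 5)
Your core argument for the nested case $S' \sse S$ and $T' \sse T$ is correct and is essentially the paper's own: both proofs apply $\F$ to the diagram relating the two typings of the projections and use standardness ($\F\iota_{S',S} = \iota_{\F S',\F S}$, and likewise for $T',T$) to conclude that $(\F\pi_1)(\phi) = (\F\pi_1')(\phi)$ and $(\F\pi_2)(\phi) = (\F\pi_2')(\phi)$ for every $\phi \in \F R = \F R'$. The paper phrases this as an iff-chase over witnesses $z \in \F R'$, while you phrase it as an identity of functions, but the substance is identical.

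There is, however, a genuine gap: the proposition does not assume $S' \sse S$ and $T' \sse T$ --- the four sets are arbitrary, constrained only by $R'$ being a subset of both products. You invoke ``Because $S' \subseteq S$ and $T' \subseteq T$'' as though it were a hypothesis, but if, say, $S'$ and $S$ are incomparable, the inclusion $\iota_{S',S}$ does not exist and the factorization $\pi_1 = \iota_{S',S} \circ \pi_1'$ is meaningless, so your proof as written does not establish the proposition in its stated generality. The repair is short and is exactly the reduction the paper performs before its main computation: since every pair of $R = R'$ lies in $(S \cap S') \times (T \cap T')$, consider $R'' \coloneqq R$ regarded as a relation between $S \cap S'$ and $T \cap T'$; this typing is nested in both $(S,T)$ and $(S',T')$, so two applications of your special-case argument give $\Fb R = \Fb R'' = \Fb R'$. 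Adding this reduction (or explicitly flagging the nestedness assumption and justifying why it suffices) is needed to make the proof complete.
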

\begin{proof} 
In order to prove the proposition, let $R' \subseteq S' \times T'$
and $R \sse S \times T$ represent the same relation, but with different
type information. We prove the proposition under the additional assumption 
that $S' \sse S$ and $T'\sse T$. 
The case in which this is not true can be reduced to this special case by 
considering the relation $R'' \coloneqq R \cap ( (S'\cap S) \times (T' \cap T))$, 
which has to be equal to both $R$ and $R'$. 
It then follows from our simpler claim that $\Fb R = \Fb R''= \Fb R'$.
  
  We now turn to the proof for the case that $S' \sse S$ and $T'\sse T$.
  The situation can be summarized by the following commuting diagram:
  \[ \xymatrix{
    S' \ar[d]_{\iota_{S',S}} & R' \ar[l]_{\pi_1'} \ar[r]^{\pi_2'}
    \ar@{=}[d]& 
    T' \ar[d]^{\iota_{T',T}} \\
    S &\ar[l]^{\pi_1} R \ar[r]_{\pi_2} & T}
  \]
  Here the $\iota$'s denote the inclusion maps. Moreover one has to 
  keep in mind that
  $R'$ and $R$ denote the same set - we only use two distinct letters in order
  to be able to distinguish between the two representations
  of the relation $R$. If we apply the standard
  functor $\F$ to this diagram
  we get
   \[ \xymatrix{
    \F S' \ar[d]_{\F\iota_{S',S}=\iota_{\F S',\F S}} 
    & \F R' \ar[l]_{\F \pi_1'} \ar[r]^{\F \pi_2'}
    \ar@{=}[d]& 
    \F T' \ar[d]^{\F \iota_{T',T}= \iota_{\F T',\F T}} \\
    \F S &\ar[l]^{\F \pi_1} \F R \ar[r]_{\F\pi_2} & \F T}
  \] 
Therefore we can calculate that
\begin{eqnarray*}
   (x,y) \in \Fb R' & \mbox{iff} & 
\mbox{there is a } z \in \F R' 
    \mbox{ with } \F \pi_1'(z) = x \mbox{ and } \F \pi_2'(z) = y 
\\ & \mbox{iff} & \mbox{there is a } z \in \F R' 
    \mbox{ with } (\iota_{\F S',\F S} \circ  \F \pi'_1) (z) = \F \pi'_1(z) = x
\\ & & \mbox{ and } (\iota_{\F T',\F T} \circ  \F \pi'_2) (z) = \F \pi'_2(z) = y
\\ & \mbox{iff} & \mbox{there is a } z \in \F R 
    \mbox{ with } \F \pi_1(z) = x \mbox{ and } \F \pi_2(z) = y 
\\ & \mbox{iff} & (x,y) \in \Fb R
  \end{eqnarray*}
\end{proof}

As already mentioned every weak pullback preserving set functor is
'almost' standard. This statement is made formal using the notion
of a natural isomorphism between functors.

\newcommand{\stl}[1]{{#1}_{\la}}
\newcommand{\Fl}{\stl{\F}}
\newcommand{\Flb}{\stl{\Fb}}

\begin{definition}
Let $\F$ and $\FG$ be two set functors, and suppose that for every set $S$
there is a bijection $\la_{S}: \F S \to\FG S$.
This collection $\la$ is a \emdef{natural isomorphism} between $\F$ and $\FG$,
if $(\FG f) \circ \la_{S} = \la_{T}\circ (\F f)$,
for every $f: S \to T$:
\[ 
\xymatrix{ S \ar[d]_f & & \F S \ar[r]^{\la_S}
    \ar[d]_{\F f} & \FG S \ar[d]^{\FG f} \\
    T & & \F T \ar[r]_{\la_T} & \FG T} 
\]
In this situation, we say that $\F$ is \emdef{naturally isomorphic} to $\G$
via  $\la$, notation: $\la: \F \cong \FG$.
If $\F$ is naturally isomorphic to a standard functor $\FG$, we call $\FG$ 
a \emdef{standardization} of $\F$.
\end{definition}

Naturally isomorphic functors are `almost the same'.
For instance, it is not hard to show that if $\F$ and $\FG$ are naturally
isomorphic, then the categories of $\F$-coalgebras is isomorphic to that of
$\FG$-coalgebras.
The following fact shows that the requirement of the functor $\F$ to
be standard is
not essential at all.
A proof of this fact can be based on the construction in part (a) of the
proof of Theorem~III.4.5 in \cite{adam:auto90}.\footnote{Note that the
construction in {\em loc.cit.\ } requires the functor to preserve arbitrary 
monomorphisms. 
It is not difficult to see that weak pullback preserving functors meet this
requirement.} 
%

\begin{fact}
\label{f:wppst}
Every weak pullback preserving functor has a standardization.
\end{fact}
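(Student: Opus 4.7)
The plan is to construct a standardization $\FG$ of $\F$ by relabelling elements so that inclusions act on the nose. The only consequence of weak pullback preservation that I would need is that $\F$ sends injective functions to injective functions, i.e., $\F\iota_{S,T}$ is injective for every inclusion $\iota_{S,T}$. This follows from Fact~\ref{fact:wpb}, since an injection $\iota$ satisfies $\converse{\Graph{\iota}} \circ \Graph{\iota} \sse \Id$ in $\Rel$, and the properties listed there propagate this inclusion to $\Graph{\F\iota}$.

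First, I would introduce an equivalence relation on the class of pairs $(S, x)$ with $x \in \F S$ by declaring $(S, x) \sim (T, y)$ iff $\F\iota_{S, S \cup T}(x) = \F\iota_{T, S \cup T}(y)$ in $\F(S \cup T)$. Reflexivity and symmetry are immediate; for transitivity I would push three related elements into $\F(S \cup T \cup U)$, show they all coincide there by functoriality of inclusions, and descend the equality to $\F(S \cup U)$ using injectivity of $\F\iota_{S \cup U, S \cup T \cup U}$. Then I would define $\FG T \coloneqq \{[S, x] \mid S \sse T,\ x \in \F S\}$, where $[S, x]$ denotes the $\sim$-class; for $f: S \to T$ set $\FG f([U, y]) \coloneqq [f(U), \F \tilde{f}(y)]$ with $\tilde{f}: U \to f(U)$ the corestriction of $f\rst{U}$; and let $\la_T(x) \coloneqq [T, x]$.

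The remaining verifications are essentially bookkeeping: (i) $\FG$ is a functor and $\FG f$ is well-defined on classes, again using injectivity of $\F$ on inclusions; (ii) $\la_T$ is a bijection, surjective because every $[S, x]$ with $S \sse T$ equals $[T, \F\iota_{S,T}(x)]$ and injective trivially, and natural in $T$; and (iii) $\FG$ is standard, since for $S \sse T$ every $[U, y] \in \FG S$ with $U \sse S$ already satisfies $U \sse T$, so $\FG S$ is literally a subset of $\FG T$ and $\FG\iota_{S,T}$ coincides with the set-theoretic inclusion. The main obstacle is the transitivity of $\sim$ together with well-definedness of $\FG$ on morphisms; both hinge on injectivity of $\F$ on inclusions, and thus on the weak pullback preservation hypothesis.
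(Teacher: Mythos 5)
Your proposal is correct in substance and is, in effect, a self-contained version of the proof that the paper delegates to the literature: the paper does not prove Fact~\ref{f:wppst} in-line, but refers to the construction in part (a) of the proof of Theorem~III.4.5 of \cite{adam:auto90}, adding in a footnote that this construction requires the functor to preserve arbitrary monomorphisms, and that weak pullback preserving functors meet this requirement. Your gluing construction --- identify $(S,x)$ and $(T,y)$ when their images agree in $\F(S\cup T)$, pass to classes, and relabel --- is exactly that idea, and you correctly isolate preservation of injections as the \emph{only} consequence of weak pullback preservation that is used (namely for transitivity of $\sim$; note that well-definedness of $\FG f$ on classes actually follows from functoriality alone, via $\tilde{f}_{U\cup U'}\circ\iota_{U,U\cup U'}=\iota_{f(U),f(U\cup U')}\circ\tilde{f}_U$, so it does not ``hinge on injectivity'' as you claim). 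One remark sharpens the picture: for \emph{nonempty} $S$ the inclusion $\iota_{S,T}$ is a split mono, so $\F\iota_{S,T}$ is injective for an arbitrary set functor; weak pullback preservation is genuinely needed only to handle $\iota_{\nada,T}$, and your argument does cover this case uniformly.

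Two blemishes need repair, both fixable. First, under the paper's diagrammatic convention for relational composition (recall $\Graph{f'\circ f}=\Graph{f}\circ\Graph{f'}$), the inclusion you state, $\converse{\Graph{\iota}}\circ\Graph{\iota}\sse\Id$, holds for \emph{every} function and does not express injectivity; what you need is $\Graph{\iota}\circ\converse{\Graph{\iota}}\sse\Id_S$, which by items (1)--(5) of Fact~\ref{fact:wpb} yields $\Graph{\F\iota}\circ\converse{\Graph{\F\iota}}=\Fb(\Graph{\iota}\circ\converse{\Graph{\iota}})\sse\Fb(\Id_S)=\Id_{\F S}$, i.e.\ injectivity of $\F\iota$. (Alternatively and more directly: a mono is a weak pullback of itself along itself, and applying $\F$ to this square gives the claim without any relation lifting.) Second, as defined, each class $[S,x]$ is a \emph{proper class} --- every superset $U\supseteq S$ contributes the member $(U,\F\iota_{S,U}(x))$ --- so $\FG T$ is not a legitimate set of ZFC objects. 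The standard repair is Scott's trick: replace $[S,x]$ by the set of its members of minimal set-theoretic rank. Since this choice is uniform, i.e.\ independent of any ambient set $T$, the literal inclusions $\FG S\sse\FG T$ and the identity $\FG\iota_{S,T}=\iota_{\FG S,\FG T}$ that your standardness argument requires survive unchanged. With these two adjustments your construction is a complete proof of the Fact.
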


\section{From alternation to nondeterminism}
\label{s:altond}

\noindent In this section we construct, for an arbitrary alternating parity
$\F$-automaton, an equivalent \emdef{nondeterministic} automaton.
Throughout this section we will be working with a fixed (but arbitrary)
$\F$-automaton $\bbA = \struc{A,\ai,\De,\Omega}$.
Before going into the technical details of the construction, let us briefly
mention the intuitions behind our approach.

\begin{remark}
These intuitions ultimately go back to ideas of Muller and Schupp, see for 
instance~\cite{mull:simu95}, but in particular, our proof generalizes work
by Janin and Walukiewicz~\cite{jani:auto95}, using the approach of
Arnold and Niwi{\'n}ski~\cite{arno:rudi01}.
In fact, with some effort, it would be possible to prove our result here
as a \emph{corollary} of the work mentioned.
That, however, would be to miss our point that a \emph{coalgebraic} proof is
possible, which is both \emph{uniform}, in the sense that it is parametrical
in the functor $\F$, and \emph{concrete} in the sense that we give an 
explicit definition which constructs the nondeterministic equivalent.
\end{remark}

Consider a single round of the acceptance game $\G(\bbA,\bbS)$ for some
$\F$-coalgebra $\bbS$, starting at a basic position $(s,a) \in S \times A$,
with $\eloi$ employing some positional strategy $(\Phi,Z)$:
\begin{enumerate}[$\bullet$]
\item 
$\eloi$ picks $\Phi_{s,a} \in \De(a)$, moving to position $(s,\Phi_{s,a})$; 
\item
$\abel$ picks $\phi \in \Phi$, moving to position $(s,\phi)$; 
\item
$\eloi$ picks $Y_{s,\phi} \sse S \times A$ with $(\si(s),\phi) \in \Fb 
Y_{s,\phi}$
--- this $Y_{s,\phi}$ is the new position;
\item
$\abel$ picks $(t,b) \in Y_{s,\phi}$ as the next basic position.
\end{enumerate}

\noindent
Our proof is based on the following four ideas:

\begin{enumerate}[\hbox to6 pt{\hfill}]
\item\noindent{\hskip-10 pt\bf strategic normal form:}\
First, we may bring the players' interaction pattern $\eloi\abel\eloi\abel$
in each round of the acceptance game for $\bbA$, into the \emph{strategic 
form} $\eloi\abel$ (or more precisely: $\eloi\eloi\abel\abel$).
Concretely, instead of choosing a separate $Y_{s,\phi}$ for each $\phi\in
\Phi$,  we will show that $\eloi$ may in fact choose the same relation
$\bigcup_{\phi\in\Phi}Y_{s,\phi}$ in response to each $\phi \in
\Phi$ picked by $\abel$.

\item\noindent{\hskip-10 pt\bf relations as states:}\
The crucial part of our proof involves a natural refinement of the classical
power set construction which is used for the determinization of automata
operating on finite words.
We will define a nondeterministic automaton $\sh{\bbA}$ based on the set
$\relA : = \Pow(A \times A)$ 
of \emph{binary relations} over the state space $A$
of $\bbA$.
The acceptance condition of $\sh{\bbA}$ is phrased in terms of \emph{traces}
through infinite sequences of such relations.

\item\noindent{\hskip-10 pt\bf regular automata:}\
The nondeterministic automaton $\sh{\bbA}$ is nonstandard in the sense that
its acceptance condition is expressed as an $\om$-regular language $\Acc$ over
the set $A$.
In the next section we will show that any such automaton is equivalent to a
standard nondeterministic automaton which is obtained as a kind of wreath
product of $\sh{\bbA}$ with the deterministic word automaton recognizing the
set of infinite $A$-words in $\Acc$.

\item\noindent{\hskip-10 pt\bf coalgebraic perspective:}\
While none of the above ideas in itself is essentially new, we believe that
our coalgebraic perspective simplifies matters.
It enables us to carry out the entire construction uniformly in the functor,
with \emph{relation lifting} (see Definition~\ref{d:rellift}) being the novel,
unifying concept.
\end{enumerate}

\noindent
Let us now look at the construction in more detail.
Before arriving at the actual definition of the regular, nondeterministic
automaton $\sh{\bbA}$, we discuss and motivate the ideas mentioned above.

\subsection{Relations and traces}

We start with motivating the use of \emph{binary relations} on $A$ as the
states of $\sh{\bbA}$.

First, the construction is based on the principle that $\eloi$ should be
prepared to counter many of $\abel$'s moves \emph{simultaneously}.
Intuitively, then, it would be a natural move to construct an automaton
$\bbA^{*}$ taking subsets of $A$ as its states.
Such a \emph{macro-state} would represent the set of states that $\bbA$ could 
be in and that $\eloi$ should be able to somehow handle simultaneously.
Building on this intuition we could proceed to give a precise definition
of the automaton $\bbA^{*}$, generalizing the subset construction for 
automata over finite words.

Continuing along these lines, we might establish a tight link between the 
basic positions 
\begin{equation}
\label{eq:mt1}
(s_{0},\{\ai\})(s_{1},B_{1})\ldots (s_{k},B_{k})
\end{equation}
of a partial match of $\G(\bbA^{*},\bbS)$, and a \emph{collection} of partial 
matches 
\begin{equation}
\label{eq:mt2}
(s_{0},\ai)(s_{1},a_{1})\ldots (s_{k},a_{k})
\end{equation}
in $\G(\bbA,\bbS)$ such that every $a_{i}$ is an element of $B_{i}$.
This link would then naturally extend to infinite matches.

Unfortunately however, we encounter a difficulty when we try to formulate
an adequate acceptance condition for $\bbA^{*}$.
The problem is that, just on the basis of an infinite sequence of subsets of
$\bbA$, we may fail to make some subtle but crucial distinctions.
The point is that the acceptance condition for $\bbA^{*}$ should declare
$\eloi$ as the winner of the match (\ref{eq:mt1}) if and only if she is the
winner of each associated match of the form (\ref{eq:mt2}).
But we may mistakenly declare $\abel$ as the winner of (\ref{eq:mt1}) on the
basis of a sequence of the form (\ref{eq:mt2}), which satisfies $a_{i}\in
B_{i}$ for each $i$, and meets the winning conditions for $\abel$, but
\emph{which did not come about as an actual match of $\G(\bbA,\bbS)$ 
associated with (\ref{eq:mt1})}.
(As many readers will have recognized, this is exactly the problem one
faces when transforming a nondeterministic word automaton into an
equivalent deterministic one, and explains why the Safra construction is
so much more involved than the power set construction.)

An elegant way to avoid this problem is to use \emph{binary relations over $A$}
rather than subsets.
When considered statically, the relation $R$ simply represents the macrostate
$\Ran(R)$ (that is, the \emph{range} of $R$).
The additional structure of binary relations comes into play when we look at
infinite sequences:
The key notion of a \emph{trace} through a sequence of binary relations
allows us to make the required subtle distinctions referred to above.

\begin{definition}
Given an infinite word $\rho = R_{1}R_{2}R_{3}\ldots$ over the set $\relA$
of binary relations over a set $A$, a \emdef{trace} through $\rho$ is a finite
or infinite $A$-word $\al = a_{0}a_{1}a_{2}
\ldots a_{k}$ or  $\al = a_{0}a_{1}a_{2}\ldots\ $
such that
$a_{i}R_{i+1}a_{i+1}$ for all $i$ (respectively, for all $i < k$).

Relative to a priority map $\Par$ on $A$, call a trace $\al$ \emdef{bad} if it
is infinite and the maximum priority occurring infinitely often on $\al$, is
an odd number.
Let $\NOT_{\Par}$ denote the set of infinite $\relA$-words that contain no
bad traces relative to $\Par$.
\end{definition}

Since $\relA$ is the carrier set of $\sh{\bbA}$, infinite matches of the
acceptance game $\G(\sh{\bbA},\bbS)$ induce $\relA$-streams, whereas traces
on such a sequence may be linked to associated matches of the acceptance game
for $\bbA$.
Thus it will be natural to declare $\eloi$ as the winner of a
$\G(\sh{\bbA},\bbS)$-match if there is no bad trace on the induced infinite
$\relA$-sequence, since bad traces correspond to $\G(\bbA,\bbS)$ matches 
that are won by $\abel$.
This explains the acceptance condition $\NOT_{\Om}$ of the automaton
$\sh{\bbA}$.

\begin{remark}
Note, however, that this acceptance condition is not a parity condition.
This means that the automaton $\sh{\bbA}$ is not the automaton $\bbA^{\bl}$
referred to in the statement of Theorem~\ref{t:main}.
In order to transform $\sh{\bbA}$ into a nondeterministic parity automaton,
we need to prove that $\sh{\bbA}$ is a regular automaton, that is, its
acceptance condition $\NOT_{\Om}$ is an $\omega$-regular language, recognized
by a word automaton.
This result will be proved in section~\ref{s:regaut}.
\end{remark}

\subsection{Normalized strategies}

In the sequel it will be convenient to work with so-called \emph{normalized}
strategies for $\eloi$.
Intuitively, these are positional strategies for $\eloi$ that provide her, in
the ``dynamic'', second half of each round of the game $\G(\bbA,\bbS)$, with
a relation between $S$ and $A$ that does not depend on $\abel$'s move in the
first part of the round.

For more details, suppose that $\Phi$, together with $Y: S \times \F A \to
\Sb{S \times A}$, is a positional strategy for $\eloi$ in $\G$, and consider
a basic position $(s,a)$.
Here first $\eloi$ chooses an element $\Phi_{s,a}\in\De(a)$, and then, for
every choice $\phi \in \Phi_{s,a}$, she can choose a separate relation
$Y_{s,\phi} \sse \Sb{S \times A}$.
If she uses a normalized strategy however, then her choice of $Y_{s,\phi}$
must be independent of $\phi$; it may however depend on the earlier basic
position $(s,a)$.
Formally then, we model the dynamic part of a normalized strategy as a map
$Z: S \times A \to \power(S \times A)$.
Intuitively, $Z_{s,a}$ consists of those elements $(t,b)$ that $\eloi$ may 
expect as the next basic position after $(s,a)$.

For technical reasons it will be convenient to add one more condition to
the definition of a normalized strategy:
In the static part of the game, we require $\eloi$ to head for an
immediate win if there is one.
More precisely, consider a state $a \in A$ such that $\De(a)$ contains the
empty set $\nada$ as a choice. 
Clearly, for such an $a$ at any position $(s,a)$ $\eloi$ may choose $(s,\nada)$
as the next position and win immediately, since $\abel$ cannot choose an
element from the empty set. 
A normalized strategy requires $\eloi$ to indeed choose $\nada$ in such a 
position.

Let us now first give the formal definition of a normalized strategy.

\begin{definition}
Given an alternating $\F$-automaton $\bbA$ and an $\F$-coalgebra $\bbS$, a
\emdef{normalized} strategy for $\eloi$ in the game $\G(\bbA,\bbS)$ is a pair
$(\Psi,Z)$ where $\Psi: S \times A \to \pwA \F A$ and $Z: S \times A \to
\power(S \times A)$ are such that $\Psi_{s,a} = \nada$ if $\nada \in \De(a)$.
\end{definition}

If $\eloi$ uses a normalized strategy $(\Phi,Z)$, we can present the
interaction pattern of the players per round as follows, starting at a basic
position $(s,a) \in S \times A$:
\begin{enumerate}[$\bullet$]
\item 
$\eloi$ picks and plays $\Phi_{s,a} \in \De(a)$
\item 
$\eloi$ chooses a relation $Z_{s,a} \sse S \times A$;
\item
$\abel$ picks and plays a $\phi \in \Phi_{s,a}$;
\item
$\eloi$ plays $Z_{s,a}$;
\item
$\abel$ picks and plays a pair $(t,b) \in Z_{s,a}$ as the next basic position.
\end{enumerate}
The resulting interaction pattern is indeed of the earlier announced 
\emph{strategic normal form} `$\eloi\eloi\abel\abel$' rather than of
the form `$\eloi\abel\eloi\abel$'.

The following proposition states that without loss of generality we may 
always assume that $\eloi$'s winning strategies are normalized.

\begin{proposition}
\label{p:ns}
Fix an alternating parity $\F$-automaton $\bbA$ and an $\F$-coalgebra $\bbS$.
Then there is a normalized strategy for $\eloi$ which is winning from every
position $(s,a) \in \Win_{\eloi}(\G(\bbA,\bbS))$.
\end{proposition}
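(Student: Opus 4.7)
The plan is to invoke history-free determinacy of parity games (Fact~\ref{f-2-3}) to fix, on $\Win_{\eloi}$, a positional winning strategy $(\Phi,Y)$ with $\Phi: S \times A \to \pwA \F A$ and $Y: S \times \F A \to \Pow(S \times A)$. From this I would produce a normalized strategy $(\Psi,Z)$ in two independent steps, corresponding to the two features a normalized strategy is required to have.

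First, to guarantee the clause $\Psi_{s,a} = \nada$ whenever $\nada \in \De(a)$, I would simply redefine $\Psi_{s,a} \coloneqq \nada$ in that case, and set $\Psi_{s,a} \coloneqq \Phi_{s,a}$ otherwise. This only changes $\Phi$ at positions where $\eloi$ has an immediate win available (since $\abel$ then has to pick from an empty set and gets stuck), so the modification cannot harm winning. Second, to collapse the $\eloi\abel\eloi\abel$-interaction of a round into $\eloi\eloi\abel\abel$, I would define, for each basic position $(s,a)$,
\[
Z_{s,a} \coloneqq \bigcup_{\phi \in \Psi_{s,a}} Y_{s,\phi} \;\sse\; S \times A.
\]
To check that $Z_{s,a}$ is a legitimate response to \emph{every} $\phi \in \Psi_{s,a}$ that $\abel$ may pick, observe that for each such $\phi$ we have $Y_{s,\phi} \sse Z_{s,a}$, so by monotonicity of relation lifting (Fact~\ref{fact:wpb}(4)), $(\si(s),\phi) \in \Fb Y_{s,\phi} \sse \Fb Z_{s,a}$.

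It remains to show that $(\Psi,Z)$ is winning from every $(s_0,a_0) \in \Win_{\eloi}$. Fix an arbitrary $\abel$-strategy and let $\mu = (s_0,a_0)(s_1,a_1)(s_2,a_2)\cdots$ be the resulting sequence of basic positions when $\eloi$ plays $(\Psi,Z)$ from $(s_0,a_0)$. Finite matches are automatically won by $\eloi$: if $\Psi_{s_i,a_i} = \nada$ then $\abel$ is immediately stuck, and otherwise $\Psi_{s_i,a_i} = \Phi_{s_i,a_i}$. So assume $\mu$ is infinite. Then $\Psi_{s_i,a_i} = \Phi_{s_i,a_i}$ for all $i$, and for each $i$, since $(s_{i+1},a_{i+1}) \in Z_{s_i,a_i}$, I can pick some witness $\phi'_i \in \Phi_{s_i,a_i}$ with $(s_{i+1},a_{i+1}) \in Y_{s_i,\phi'_i}$. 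Now I construct a \emph{shadow match} in the original game in which $\abel$ plays $\phi'_i$ in the static part and $(s_{i+1},a_{i+1})$ in the dynamic part; this match is consistent with $\eloi$'s original strategy $(\Phi,Y)$ and has the same projection onto basic positions as $\mu$. Since $(\Phi,Y)$ is winning from $(s_0,a_0)$ and the winner depends only on $\mu\bpr_A$, we conclude that $\mu$ is won by $\eloi$ as well.

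The only subtle point I foresee is the shadow-match argument, and specifically the need to be clear that it is harmless to let $\abel$ play a different static move $\phi'_i$ in the shadow match than what he actually chose in $\mu$: this is legitimate precisely because the original $(\Phi,Y)$ wins \emph{against every} $\abel$-strategy, so in particular against the shadow strategy that picks $\phi'_i$. Everything else is routine bookkeeping.
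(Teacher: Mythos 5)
Your proposal is correct and follows essentially the same route as the paper's proof: the identical definitions of $\Psi_{s,a}$ and $Z_{s,a} = \bigcup_{\phi\in\Psi_{s,a}} Y_{s,\phi}$, legitimacy via monotonicity of relation lifting, and the shadow-match transfer back to the original strategy $(\Phi,Y)$, which is exactly the paper's one-round claim (``the next basic position could also have been reached under $(\Phi,Y)$'') iterated over the whole match. The only point worth making explicit is the inductive invariant that every basic position reached under $(\Psi,Z)$ stays in $\Win_{\eloi}(\G(\bbA,\bbS))$ --- you need this at each round to know that $(\si(s_i),\phi) \in \Fb Y_{s_i,\phi}$ holds at all --- but your shadow-match construction already yields it, since basic positions reached conform a winning strategy from a winning position are themselves winning.
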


\begin{proof}
Let $\bbA$ and $\bbS$ be as in the proposition.
By the historyfree determinacy of the parity game $\G = \G(\bbA,\bbS)$ we may
assume the existence of a positional strategy 
\[
(\Phi: S \times A \to \pfa, Y: S \times \F A \to \power(S \times A))
\]
which is winning for $\eloi$ from every position $(s,a) \in
\Win_{\eloi}(\G(\bbA,\bbS))$.

Define the strategy $(\Psi,Z)$ as follows:
\begin{eqnarray*}
\Psi_{s,a} &:=& 
   \left\{ \begin{array}{ll}
   \nada & \mbox{if } \nada \in \De(a)
   \\ \Phi_{s,a} & \mbox{otherwise}
   \end{array}\right.
\\ Z_{s,a} &:=& 
   \bigcup_{\phi\in\Psi_{s,a}} Y_{s,\phi}.
\end{eqnarray*}

Consider \emph{one round} of the game $\G$ starting at a winning position
for $\eloi$. 
We claim that either $\eloi$ wins already during this round, or else the
match arrives at a new basic position that could also have been reached if
$\eloi$ had played her original strategy $(\Phi,Y)$.
From this claim one may derive that the strategy $(\Psi,Z)$ guarantees
$\eloi$ to win any match of $\G$ starting at a position in 
$\Win_{\eloi}(\G(\bbA,\bbS))$.

To prove our claim, take a position $(s,a) \in \Win_{\eloi}(\G(\bbA,\bbS))$.
To start with, it is easy to check that $\Psi_{s,a}$ is a legitimate move for
$\eloi$.
If $\Psi_{s,a} = \nada$, then $\eloi$ wins immediately.
So suppose otherwise, and let $\abel$ pick an element $\phi \in \Psi_{s,a}
= \Phi_{s,a}$.
It follows from $(s,a) \in \Win_{\eloi}(\G(\bbA,\bbS))$ 
and the fact that $(\Phi,Y)$ is a winning strategy for $\eloi$, that 
$(\si(s),\phi) \in \Fb Y_{s,\phi}$.
Hence, by the monotonicity of $\Fb$ (see Fact~\ref{fact:wpb}) and the
definition of $Z_{s,a}$, we find that $(\si(s),\phi)
\in \Fb Z_{s,a}$, so that $Z_{s,a}$ is a legitimate answer to $\abel$'s move
$\phi$.
If $Z_{s,a} = \nada$ then $\eloi$ wins immediately, otherwise $\abel$ may
finish the round by picking an element $(t,b) \in Z_{s,a}$.

It remains to be shown that such an element $(t,b) \in Z_{s,a}$ could also
have been obtained if $\eloi$ had played her original strategy $(\Phi,Y)$.
But it follows by definition of $Z_{s,a}$ that $(t,b) \in Y_{s,\psi}$ for
some $\psi \in \Psi_{s,a}$.
Thus $\Psi_{s,a} \neq \nada$, and so $\Psi_{s,a} = \Phi_{s,a}$.
Hence the position $(t,b)$ could have been reached in the scenario where
$\eloi$ had played $\Phi_{s,a}$, followed by $\abel$ picking $\psi \in
\Phi_{s,a}$, $\eloi$ choosing the move $Y_{s,\psi}$, and, finally, $\abel$ 
playing the pair $(t,b) \in Y_{s,\psi}$.
The only thing left to verify here is the legitimacy of the move
$Y_{s,\psi}$, but this is immediate by the assumption that $Y$ is part of 
a winning strategy for $\eloi$.
\end{proof}

\subsection{Normalized strategies and binary relations}

In order to see how the ideas of the previous two subsections fit together,
consider again a round of the acceptance game in which $\eloi$ uses a 
normalized strategy $(\Phi,Z)$:
\begin{enumerate}[$\bullet$]
\item 
$\eloi$ plays $\Phi_{s,a} \in \De(a)$ (and chooses $Z_{s,a} \sse S \times A$)
\item
$\abel$ plays $\phi \in \Phi_{s,a}$; 
\item
$\eloi$ plays $Z_{s,a} \sse S \times A$;
\item
$\abel$ plays $(t,b) \in Z_{s,a}$ as the next basic position.
\end{enumerate}
The point about normalized strategies is that in fact the first two moves
of such a round are only of interest if they lead to an immediate end of the
match in that one of the players gets stuck, i.e., if either $\De(a)$ or 
$\Phi_{s,a}$ is empty.
For infinite matches, the only relevant interaction is between $\eloi$
choosing binary relations between $S$ and $A$, and $\abel$ choosing elements
of those relations.
Let us look at this interaction in a bit more detail.

Recall that $Z_{s,a}$ contains those elements $(t,b)$ that $\eloi$ `expects'
as the next basic position after $(s,a)$.
Thus the dynamic part $Z$ of $\eloi$'s strategy induces a tree of basic
positions for $\abel$ to choose from.
We will now reorganize this tree, as follows.

First observe that, given a relation $Z_{s,a}$, for a single $t \in S$, there
may be \emph{many} elements $b \in A$ such that $(t,b) \in Z_{s,a}$.
These are the states that $\eloi$ should prepare for to meet `simultaneously'
at the point $t \in S$, and that may be grouped together in a `macro-state',
as discussed earlier on.
But then inductively, at $s$, the state $a$ might already have been one of 
many parallel states in some macro-state.
Here it starts making a lot of sense to involve binary relations: Instead of
having macro-states $\{ b \in A \mid (t,b) \in Z_{s,a} \}$, for each $a \in A$,
we consider the binary relation 
\begin{equation}
\label{eq:zeta1}
\zeta_{s}(t) := \{ (a,b) \in A \times A \mid (t,b) \in Z_{s,a} \}.
\end{equation}
Formally, we may represent the dynamic part $Z$ of a normalized strategy
as a map $\zeta: S \to (S \to \Pow(A \times A))$, 
where $\zeta_{s}$ is a map assigning
a binary relation on $A$ to each $t \in S$.
The connection between $Z$ and $\zeta$ is given by
\begin{equation}
\label{eq:aaa}
(t,b) \in Z_{s,a} \iff (a,b) \in \zeta_{s}(t).
\end{equation}
It is not hard to show that (\ref{eq:aaa}) induces a natural bijection
\begin{equation}
\label{eq:aab}
S \times A \to \power (S \times A)
\;\cong\; S \to (S \to \power (A \times A)).
\end{equation}
In fact, using currying ($P \to (Q \to R) \cong (P \times Q) \to R)$ and
exponentiation ($\power(Q) \cong Q \to 2$), it is very easy to see why
(\ref{eq:aab}) must hold:
\begin{equation}
S \times A \to \power (S \times A)
\;\cong\;
(S \times S \times A \times A) \to 2
\;\cong\;
S \to (S \to \power (A \times A)).
\end{equation}
Conversely, an explicit way of obtaining $Z$ from $\zeta$ is as follows.
Let the map $\eva: \Pow(A \times A) \to \power A$ be given by $\eva: R \mapsto 
R[a]$, and recall that the graph $\{ (x,fx) \mid x \in X \} \sse X \times Y$
of a function $f: X \to Y$ is denoted as $\Graph{f}$.
It is then easy to see that 
\begin{equation}
\label{eq:zeta2}
Z_{s,a} = \Graph{\zeta_{s}} \circ \Graph{\eva} \circ \niA.
\end{equation}
Summarizing the above discussion we give the following proposition, of
which we will make heavily use in the sequel.

\begin{proposition}
\label{p:pietje}
For any pair of sets $S$ and $A$, there is a natural bijection between maps
$Z: S \times A \to \power (S \times A)$ and functions
$\zeta: S \to (S \to \power (A \times A))$.
This correspondence is explicitly given by (\ref{eq:zeta1}) and
(\ref{eq:zeta2}) above.
\end{proposition}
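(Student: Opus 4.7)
The plan is to reduce the entire statement to verifying the single pointwise equivalence
\[
(t,b) \in Z_{s,a} \;\iff\; (a,b) \in \zeta_{s}(t) \quad (\star)
\]
and then to check that each of the two explicit formulas, (\ref{eq:zeta1}) and (\ref{eq:zeta2}), is just a way of reading off one of the two directions of $(\star)$. Once $(\star)$ is established as a well-defined equivalence, the bijection follows from pure set-theoretic currying, as already sketched in the paragraph preceding the proposition: the data of a function $S \times A \to \power(S \times A)$ is the same as a subset of $S \times A \times S \times A$, which, after reordering the coordinates as $S \times S \times A \times A$ and currying, is the same as a function $S \to (S \to \power(A \times A))$; the equivalence $(\star)$ records exactly this reordering.

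First I would note that (\ref{eq:zeta1}) is literally the right-to-left construction of the correspondence: given $Z$, define $\zeta_{s}(t)$ to be the set of pairs $(a,b)$ such that $(t,b) \in Z_{s,a}$. This is manifestly well-defined and yields a map $\zeta: S \to (S \to \power(A \times A))$ satisfying $(\star)$.

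Second, I would verify that (\ref{eq:zeta2}) describes the inverse construction. This is the only step that requires a small calculation, since it is phrased in relational terms. Unpacking the composition using the convention $\Graph{g \circ f} = \Graph{f} \circ \Graph{g}$, a pair $(t,b)$ belongs to $\Graph{\zeta_{s}} \circ \Graph{\eva} \circ \niA$ iff there exist $R \in \power(A \times A)$ and $X \in \power A$ with $R = \zeta_{s}(t)$, $X = \eva(R) = R[a]$, and $b \in X$. Chaining these conditions collapses to $b \in \zeta_{s}(t)[a]$, i.e.\ $(a,b) \in \zeta_{s}(t)$, which by $(\star)$ is equivalent to $(t,b) \in Z_{s,a}$. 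Hence (\ref{eq:zeta2}) does recover $Z$ from $\zeta$, and the two formulas are mutually inverse.

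The naturality of the bijection in $S$ and $A$ is then immediate, since $(\star)$ is defined purely in terms of the membership relation and projections on a product of sets. I do not expect a genuine obstacle here; the only point requiring care is keeping track of the reversed order in $\Graph{g \circ f} = \Graph{f} \circ \Graph{g}$ when expanding the relational composition in (\ref{eq:zeta2}), and of the fact that $\eva$ is evaluation at the \emph{fixed} index $a$, so that the composite relation carries an implicit dependence on $a$ exactly matching the $a$-slice on the left-hand side $Z_{s,a}$.
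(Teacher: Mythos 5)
Your proposal is correct and takes essentially the same route as the paper: the paper also establishes the correspondence via the pointwise equivalence (\ref{eq:aaa}) (your $(\star)$) together with the currying/characteristic-function isomorphism (\ref{eq:aab}), leaving the verification of (\ref{eq:zeta2}) as ``easy to see''. Your explicit unpacking of the relational composite $\Graph{\zeta_{s}} \circ \Graph{\eva} \circ \niA$ (correctly using the diagrammatic convention $\Graph{f'\circ f} = \Graph{f}\circ\Graph{f'}$) simply supplies the detail the paper omits, so there is nothing to object to.
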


\subsection{The definition of $\sh{\bbA}$}

We have already announced that $\sh{\bbA}$ will be a nondeterministic 
regular $\F$-automaton based on the collection $\relA$ of binary relations
on $A$, and with acceptance condition of the form $\NOT_{\Om}$, where $\Om$
is the parity condition of $\bbA$.
Thus to complete the definition of the automaton $\sh{\bbA}$ it suffices to
give the transition structure $\shDe: \relA \to \power(\F\relA)$.

Roughly speaking, it works like this.
Earlier on we already briefly mentioned that, intuitively, a relation $R \in 
\relA$ represents the macrostate $\Ran(R) \sse A$.
Now suppose we consider the static part of a strategy for $\eloi$ at a
certain point of the coalgebra (see the discussion following
Definition~\ref{def:Fautomaton} for a division of a round of the acceptance
game into a static and a dynamic part).
In order to handle \emph{each} of the challenges $a \in \Ran(R)$, $\eloi$
needs to come up with a \emph{family}
\begin{equation}
\label{eq:fam}
\{ \Phi(a) \in \power(\F A) \;\mid\; a \in \Ran(R) \}
\mbox{ such that } \Phi(a) \in \De(a) \mbox{ for each } a \in \Ran(R).
\end{equation}
Our definition of $\shDe$ will be such that given $R \in \relA$, the members
of $\shDe(R)$ are those elements $\Pi \in \F\relA$ that are in a natural
correspondence with such a family.

The key to understanding this `natural correspondence' is the notion of 
\emph{$\F$-redistribution}, which links sets of the form $\F\power A$ and 
$\power\F A$.
For an introduction to this notion, first consider the membership relation
$\inA$ on $A$.
Since $\inA$ is a binary relation between $A$ and $\power(A)$, we may lift
it to a relation $\Fb\inA$ between $\F A$ and $\F \power A$.
The relation $\Fb\inA$ is like the membership relation `behind an $\F$-veil'.
Now suppose that $\Phi \in \power\F A$ and $\Xi \in \F\power A$ satisfy the
condition that \emph{each} element $\phi$ of $\Phi$ is such an `$\F$-member'
of $\Xi$, i.e., $(\phi,\Xi) \in \Fb\inA$.
In such a case we call $\Xi$ an $\F$-redistribution of $\Phi$, and it makes
sense to think of $\Xi$ as a representation of $\Phi$ as a set of type
$\F\power A$.

\begin{definition}
Given a set $A$ with membership relation $\inA \sse A \times \power (A)$, we 
call $\Xi \in \F\power A$ an \emdef{$\F$-redistribution of $\Phi \in 
\power\F A$}, or say that $\Xi$ \emph{redistributes} $\Phi$, if $(\phi,\Xi) 
\in \Fb\inA$ for all $\phi \in \Phi$.
\end{definition}

\begin{example}
For the binary tree functor $\BT_{C}$, an element $(c,a^{l},a^{r}) \in
\BT_{C}A$ is an $\BT_{C}$-member of an object $(d,A^{l},A^{r}) \in \BT_{C}(A)$
if $c=d$, $a^{l}\in A^{l}$ and $a^{r} \in A^{r}$.
Hence $(d,A^{l},A^{r}) \in \BT_{C}(A)$ is a $\BT_{C}$-redistribution of 
the set $\{ (c_{i},a^{l}_{i},a^{r}_{i}) \mid i \in I \}$ iff $c_{i}=d$,
$a^{l}_{i}\in A^{l}$ and $a^{r}_{i} \in A^{r}$, for each $i \in I$.

For the power set functor $\K$, an object $X \in \K A = \Sb{A}$ is a
$\K$-element of an object $\mathcal{B} \in \K\Sb{A} = \Sb{\Sb{A}}$ iff
$X \sse \bigcup \mathcal{B}$ and $X \cap B \neq \nada$ for all $B \in
\mathcal{B}$.
So $\mathcal{B} \in \K\Sb{A}$ is a $\K$-redistribution of $\mathcal{X}
\in \Sb{\K A}$ iff $\bigcup \mathcal{X} \sse \bigcup \mathcal{B}$ and 
$X \cap B \neq \nada$ for all $X \in \mathcal{X}$ and all $B \in
\mathcal{B}$.
\end{example}

\begin{remark}
In \cite{jacobs:distlaws} Jacobs shows that for every weak pullback
preserving functor $\F$ there is a so-called {\em distributive law} 
$\lambda: \F \K \Rightarrow \K \F$ of $\F$ over the power set monad,
i.e.\ $\lambda$ is a natural transformation that preserves the monad
structure.
This distributive law is defined using the relation lifting of the 
$\in$-relation. 
Therefore there is a close connection between Jacobs's law and our
$F$-redistributions: $\Xi \in \F \power A$ is an $\F$-redistribution of
$\Phi \in \power \F A$ iff $\Phi \subseteq \lambda_A(\Xi)$.  
\end{remark}

We are now almost ready for the definition of $\shDe$.
For the final step, recall that for any element $a \in A$ we may go from 
$\relA$ to $\power(A)$ using the \emph{evaluation map} 
\[
\eva: R \mapsto R[a].
\]
Thus $\F \eva: \F\relA \to \F\power(A)$.
The function $\F\eva$ enables us to link potential elements $\Pi \in \shDe(R)$
to redistributions in $\F\power (A)$ of objects $\Phi(a) \in \De(a)$.

\begin{definition}
Let $\F$ be a set functor that preserves weak pullbacks, and let
$\bbA = \struc{A,\ai,\De,\Par}$ be an alternating $\F$-automaton.
Then the automaton $\sh{\bbA}$ is defined as the structure
\[
\sh{\bbA} := \struc{\relA, \Ri, \shDe, \NOT_{\Par}}, 
\]
where $\relA := \power(A \times A)$ is the collection of binary relations
over $A$, $\Ri = \{ (\ai,\ai) \}$, $\shDe: \relA \to \pwE\F\relA $ is given
by
\[
\begin{array}{lcll}
\shDe(R) &:=&
   \{ \Pi \in \F\relA \mid &
\forall a \in \Ran(R)\, \exists \Phi(a) \in \De(a)\, 
\\ &&& \ \ \  (\F\eva)(\Pi) \mbox{ is an $\F$-redistribution of }
\Phi(a)
\},
\end{array}
\]
and $\NOT_{\Par} \sse (\relA)^{\om}$ is the set of those infinite sequences
of binary relations that do not contain any bad trace.
\end{definition}

It is obvious that $\sh{\bbA}$ is a nondeterministic $\F$-automaton.
It remains to be shown that $\sh{\bbA}$ is equivalent to $\bbA$.

\subsection{Proof of equivalence}

\begin{proposition}
Let $\F$ be a set functor that preserves weak pullbacks, and let
$\bbA = \struc{A,\ai,\De,\Par}$ be an alternating parity $\F$-automaton.
Then $\sh{\bbA}$ is equivalent to $\bbA$.
\end{proposition}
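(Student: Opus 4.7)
The plan is to show that $\bbA$ and $\sh{\bbA}$ accept the same pointed $\F$-coalgebras by translating $\eloi$'s winning strategies between $\G(\bbA, \bbS)$ and $\G(\sh{\bbA}, \bbS)$ in both directions; weak pullback preservation and history-free determinacy do most of the work.

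For the ``only if'' direction, suppose $\bbA$ accepts $(\bbS, s)$. By Proposition~\ref{p:ns} $\eloi$ has a normalized winning strategy $(\Phi, Z)$ from $(s, \ai)$, and by Proposition~\ref{p:pietje} the dynamic part $Z$ corresponds to a map $\zeta: S \to (S \to \relA)$ via the relational identity $Z_{s',a} = \Graph{\zeta_{s'}} \circ \Graph{\eva} \circ \ni_{A}$. I have $\eloi$ play at each reached position $(s', R)$ of $\G(\sh{\bbA}, \bbS)$ the element $\Pi := (\F\zeta^R_{s'})(\si(s')) \in \F\relA$ followed by the local bisimulation $\Graph{\zeta^R_{s'}} \sse S \times \relA$, where $\zeta^R_{s'}(t) := \zeta_{s'}(t) \cap (\Ran(R) \times A)$; the bisimulation is a valid move by Fact~\ref{fact:wpb}(1). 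The key verification is $\Pi \in \shDe(R)$: since restricting to $\Ran(R)$ leaves the column of $\zeta_{s'}(t)$ at any $a \in \Ran(R)$ unchanged, one takes $\Phi(a) := \Phi_{s',a}$ and combines parts (1), (3), and (5) of Fact~\ref{fact:wpb} with the relational identity to conclude that $(\F\eva)(\Pi)$ $\F$-redistributes $\Phi_{s',a}$. The invariant that every $a \in \Ran(R)$ satisfies $(s', a) \in \Win_{\eloi}(\G(\bbA, \bbS))$ holds initially because $\Ran(\Ri) = \{\ai\}$, and is preserved because each new relation $R' = \zeta^R_{s'}(t)$ has range contained in $\{b : (t, b) \in Z_{s',a} \mbox{ for some } a \in \Ran(R)\}$, all winning by the correctness of $\eloi$'s $\bbA$-strategy. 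For acceptance, any trace through the induced $\relA$-stream must begin in $\Ran(\Ri) = \{\ai\}$ and, by the relational identity applied stepwise, corresponds to an actual $\G(\bbA, \bbS)$-match consistent with $\eloi$'s winning strategy, hence meets the parity condition and is not bad.

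Conversely, suppose $\sh{\bbA}$ accepts $(\bbS, s)$, and fix a positional winning strategy $(\sh{\Phi}, \sh{Y})$ for $\eloi$ in $\G(\sh{\bbA}, \bbS)$ from $(s, \Ri)$ (simplified as in Remark~\ref{r:nd}). I construct a history-dependent winning strategy for $\eloi$ in $\G(\bbA, \bbS)$ from $(s, \ai)$ by maintaining a shadow match $(s_0, R_0)(s_1, R_1)\cdots$ in $\G(\sh{\bbA}, \bbS)$ with invariants $a_i \in \Ran(R_i)$ and $(a_i, a_{i+1}) \in R_{i+1}$. At position $(s_i, a_i)$, set $\Pi_i := \sh{\Phi}(s_i, R_i)$; since $\Pi_i \in \shDe(R_i)$ and $a_i \in \Ran(R_i)$, some $\Phi(a_i) \in \De(a_i)$ witnesses the redistribution condition, and $\eloi$ plays $\Phi(a_i)$. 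After $\abel$ picks $\phi_i \in \Phi(a_i)$, $\eloi$ answers with $Z_i := \sh{Y}(s_i, \Pi_i) \circ \Graph{\ev{a_i}} \circ \ni_{A} \sse S \times A$; the same chain of Fact~\ref{fact:wpb} computations, now applied in the opposite direction, confirms $(\si(s_i), \phi_i) \in \Fb Z_i$. When $\abel$ picks $(s_{i+1}, a_{i+1}) \in Z_i$, unpacking the definition of $Z_i$ produces some $R \in \relA$ with $(s_{i+1}, R) \in \sh{Y}(s_i, \Pi_i)$ and $(a_i, a_{i+1}) \in R$, which we adopt as $R_{i+1}$, preserving the invariants. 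Any infinite $\bbA$-match is thus shadowed by a legitimate $\sh{\bbA}$-match consistent with the winning strategy, and the $A$-sequence $a_0 a_1 a_2 \cdots$ is a trace of the resulting $\relA$-stream, hence is not bad, so $\eloi$ wins. History-free determinacy of parity games (Fact~\ref{f-2-3}) then upgrades this to a positional winning strategy, showing $\bbA$ accepts $(\bbS, s)$.

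The main technical challenge in both directions is the interplay between the $\F$-redistribution condition on $\shDe$ and the relational identity $Z = \sh{Y} \circ \Graph{\eva} \circ \ni_{A}$ (with $\sh{Y}$ instantiated as $\Graph{\zeta^R_{s'}}$ in the forward direction); weak pullback preservation is essential throughout via Fact~\ref{fact:wpb}(5), since it is precisely this clause that lets $\Fb$ distribute over the composition, making the identity usable symmetrically in both directions.
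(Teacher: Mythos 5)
Your proposal is correct and follows essentially the same route as the paper's proof: normalize $\eloi$'s winning strategy via Proposition~\ref{p:ns}, pass to the map $\zeta$ via Proposition~\ref{p:pietje}, verify legitimacy of the induced $\sh{\G}$-strategy through the redistribution computation with Fact~\ref{fact:wpb}, and in the converse direction maintain a shadow match with the invariant $a_{i+1} \in R_{i+1}[a_i]$ so that bad traces correspond to lost $\G$-matches. Your one deviation --- replacing the paper's move $\Pi_{s,R} = (\F\zeta_s)(\si(s))$ by the restricted $(\F\zeta^R_{s})(\si(s))$ with $\zeta^R_{s}(t) = \zeta_{s}(t) \cap (\Ran(R) \times A)$ --- is harmless (the columns at $a \in \Ran(R)$ are unchanged, so the redistribution argument is identical) and in fact makes the preservation of the safety invariant (that $\Ran(R_{n+1})$ consists of winning states) more transparent than in the paper's unrestricted version.
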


\begin{proof}
Fix a coalgebra $\bbS$ and a point $s_{0}$ in $\bbS$.
We will prove the following equivalence:
\begin{equation}
\label{eq:zzz}
\bbA \mbox{ accepts } (\bbS,s_{0}) \iff
\sh{\bbA} \mbox{ accepts } (\bbS,s_{0}).
\end{equation}
Obviously, both directions of this equivalence will be proved via a comparison
of the two acceptance games $\G := \G(\bbA,\bbS)$ and $\sh{\G} := 
\G(\sh{\bbA},\bbS)$.
\medskip

\noindent
\fbox{$\Rightarrow$}
For the direction from left to right, assume that $\bbA \mbox{ accepts }
(\bbS,s_{0})$.
Then by Proposition~\ref{p:ns}
we may assume that $\eloi$ has a normalized strategy $(\Phi,Z)$ which is
winning for her in the game $\G$ initialized at $(s_{0},\ai)$.
In the sequel we will also make use of the map $\zeta: S \times S \to 
\relA$ that is associated with $Z$ as in Proposition~\ref{p:pietje}.

In order to prove that $(s_{0},\Ri)$ is a winning position for $\eloi$ in
the game $\sh{\G}$, we let $\eloi$ play according to the following
positional strategy $(\Pi,Q)$.
The \emph{static} part $\Pi: S \times \relA \to \F\relA$ of this strategy
is given by
\[
\Pi_{s,R} := (\F\zeta_{s})(\si(s)),
\]
while the \emph{dynamic} part $Q: S \times \F\relA \to \power(S \times \relA)$
is defined as
\[
Q_{s,\Sigma} := \Graph{\zeta_{s}}.
\]
Since the function $Q$ only depends on its first argument, in the sequel we
will simply write $Q_{s}$ instead of $Q_{s,\Sigma}$.

The claims~2 and~3 below state that, playing this strategy, $\eloi$ wins all
finite, respectively, infinite matches.
In the proof of these results we need the following additional claim which,
roughly spoken, states that the strategy defined above is legitimate at any
safe position $(s,R)$ of $\sh{\G}$, and guarantees that the next basic 
position is safe as well.
Here we call a basic position $(s,R)$ of $\sh{\G}$ \emph{safe} if $(s,a) \in 
\Win_{\eloi}(\G)$ for all $a \in \Ran(R)$.

\begin{claimeerst}
Let $(s,R)$ be a position of $\sh{\G}$ such that $(s,a) \in \Win_{\eloi}(\G)$
for all $a \in \Ran(R)$.
Then 

(1) both $\Pi$ and $Q$ provide legitimate moves at $(s,R)$,

(2) $(t,b) \in Z_{s,a}$ for all $(t,R') \in Q_{s}$, all $a \in A$ and all
    $b \in R'[a]$.
\end{claimeerst}

\begin{pfclaim}
The main part of the proof consists in showing that $\Pi:= \Pi_{s,R}$ is a
legitimate move for $\eloi$ at position $(s,R)$.

In order to show that, indeed, $\Pi \in \shDe(R)$, consider an arbitrary 
element $a \in \Ran(R)$.
By assumption, $(s,a) \in \Win_{\eloi}(\G)$.
Recall that $\Phi_{s,a}$ and $Z_{s,a}$ are the moves of $\eloi$ in $\G$
prescribed by $\eloi$'s winning normalized strategy.
Take an arbitrary element $\phi \in \Phi_{s,a}$.
It suffices to prove that 
\begin{equation}
\label{eq:ggg}
(\phi,(\F\eva)(\Pi)) \in (\Fb\inA),
\end{equation}
since this implies that $(\F\eva)(\Pi)$ is an $\F$-redistribution of
$\Phi_{s,a} \in \De(a)$, and thus that $\Pi \in \shDe(R)$, since $a$ was
arbitrary.

It follows from the fact that $Z_{s,a}$ is part of a winning, and thus
legitimate strategy, that 
\begin{equation}
\label{eq:ggh}
(\si(s),\phi) \in \Fb Z_{s,a}.
\end{equation}

Now from $Z_{s,a} = \Graph{\zeta_{s}} \circ \Graph{\eva} \circ \niA$, (see
(\ref{eq:zeta2}) and some elementary properties of relation lifting 
(cf.~Fact~\ref{fact:wpb})), it follows that 
\[
\Fb(Z_{s,a}) = \Graph{\F\zeta_{s}} \circ \Graph{\F\eva} \circ \Fb\niA.
\]
Thus from (\ref{eq:ggh}) and the fact that $\Pi = (\F\zeta_{s}(\si(s))$ is
defined as the \emph{unique} object such that $(\si(s),\Pi) \in 
\Graph{\F\zeta_{s}}$, it is immediate that 
\[
(\Pi,\phi) \in \Graph{\F\eva} \circ \Fb\niA,
\]
which is easily seen to be equivalent to (\ref{eq:ggg}).

To finish the proof of part (1) of the claim, it then suffices to show that 
$Q_{s}$ is a legitimate move at position $(s,\Pi)$ (where still we
write $\Pi = \Pi_{s,R}$).
But this is immediate by the definitions.
The point is that from $\Pi = (\F\zeta_{s})(\si(s))$ we may infer 
$(\si(s),\Pi) \in \Graph{\F\zeta_{s}} = \Fb\Graph{\zeta_{s}} = \Fb
Q_{s}$.

Part (2) of the claim is also straightforward.
Let $a,b \in A$, $t \in S$ and $R' \in \relA$ be such that $(t,R') \in Q$
and $(a,b) \in R'$.
Recall that by definition of $Q$, $(t,R') \in Q$ implies that $R' =
\zeta_{s}(t)$, so by (\ref{eq:aaa}) we have $(a,b) \in R'$ iff $(t,b) 
\in Z_{s,a}$.
\end{pfclaim}

\begin{claim}
As long as $\eloi$ plays her strategy $(\Pi,Q)$, she wins all finite matches
starting at position $(s_{0},\Ri)$.
\end{claim}

\begin{pfclaim}
A straightforward inductive proof using part (2) of Claim~1 shows that any
partial $\sh{\G}$-match $(s_{0},\Ri)(s_{1},R_{1})\ldots(s_{n},R_{n})$ of
$\sh{\G}$ in which $\eloi$ plays her strategy $(\Pi,Q)$ has the property that
\[
(s_{n},b) \in \Win_{\eloi}(\G) \mbox{ for all } b \in \Ran(R_{n}).
\]
Then by part (1) of Claim~1 it follows that $\Pi$ and $Q$ provide legitimate
moves for $\eloi$. 
In other words, she will not get stuck after position $(s_{n},R_{n})$.
\end{pfclaim}

\begin{claim}
As long as she plays her strategy $(\Pi,Q)$, $\eloi$ wins all infinite matches
starting at position $(s_{0},\Ri)$.
\end{claim}

\begin{pfclaim}
Consider an infinite match 
\[
(s_{0},R_{0})(s_{1},R_{1})\ldots
\]
of $\sh{\G}$ in which $\eloi$ plays her strategy $(\Pi,Q)$ (and with
$\Ri = R_{0}$).
In order to show that this match is won by $\eloi$, consider an arbitrary
trace on the sequence $\Ri R_{1}R_{2}\ldots$\ 
It suffices to show that this trace is even.

Clearly the trace is of the form $a_{0}a_{0}a_{1}a_{2}\ldots$ with 
$\ai = a_{0}$, $a_{0}R_{0}a_{0}$ and $a_{i}R_{i+1}a_{i+1}$ for every $i$.
A direct inductive proof, using part (2) of Claim~1, shows that
$(s_{i+1},a_{i+1}) \in Z_{s_{i},a_{i}}$ for every $i$.
From this it is easy to find a match 
\[
(s_{0},\ai)(s_{1},a_{1})\ldots
\]
of $\G$ in which $\eloi$ plays her strategy $(\Phi,Z)$, cf.~the proof of 
Proposition~\ref{p:ns}.
But by assumption, this strategy is \emph{winning} for $\eloi$, so the trace
$a_{0}a_{0}a_{1}a_{2}\ldots$ is indeed \emph{even}.
\end{pfclaim}

Finally, the direction $\Rightarrow$ of (\ref{eq:zzz}) is a direct consequence
of the Claims~2 and~3.
\medskip

\noindent
\fbox{$\Leftarrow$}
For the direction from right to left, assume that $\sh{\bbA}$ accepts
$(\bbS,s_{0})$.
In other words, we may assume that there is a strategy $f$ which is winning
for $\eloi$ in the game $\sh{\G}$ starting at $(s_{0},\Ri)$.
In order to show that $\bbA$ accepts $(\bbS,s_{0})$, we need to prove that
$(s_{0},\ai)$ is a winning position for $\eloi$ in $\G$.

We will equip $\eloi$ with a strategy $f'$, in the game $\G$ initialized at
$(s_{0},\ai)$, which has the following property.
For any (possibly finite) $f'$-conform match $(s_{0},a_{0})(s_{1},a_{1})\ldots$
of $\G$ with $a_{0}=\ai$, there is an $f$-conform match 
$(s_{0},R_{0})(s_{1},R_{1})\ldots$ of $\sh{\G}$, with $R_{0}=\Ri$, satisfying
the condition that 
\begin{equation}
\label{eq:abcd}
a_{i+1} \in R_{i+1}[a_{i}] \mbox{ for every stage $i$}.
\end{equation}

Hence, the sequence of $\bbA$-states $a_{0}a_{1}a_{2}\ldots$ of such a match 
is a \emph{trace} of the $\relA$-sequence $R_{0}R_{1}R_{2}\ldots$ which we
may associate with an $f$-conform match.
Since $f$ is by assumption winning for $\eloi$, by definition of the winning
condition $\NOT_{\Par}$ of $\sh{\bbA}$, the (maximum parity occurring 
infinitely often on) the trace must be \emph{even}.

This guarantees that she wins all \emph{infinite} matches of the game.
Hence, it suffices to prove that at any finite stage of an $f'$-conform
match, she either immediately, or else she can keep the above condition
for one more round.

Suppose then that $\eloi$ has been able to keep this condition for $k$ steps.
That is, with the partial $\G$-match $(s_{0},a_{0})\ldots (s_{k},a_{k})$ 
(where $\ai=a_{0}$) we
may associate a partial, $f$-conform $\sh{\G}$-match $(s_{0},R_{0})\ldots
(s_{k},R_{k})$ such that $R_{0}=\Ri$ and
\begin{equation}
\label{eq:abc}
a_{i+1} \in R_{i+1}[a_{i}]
\mbox{ for all $i<k$}.
\end{equation}
For notational convenience, write $a = a_{k}$, $R = R_{k}$ and $s = s_{k}$,
so we have $a \in \Ran(R)$.
Let $\Pi \in \F\relA$ and $Q \sse S \times \relA$, respectively, be the
moves dictated by $\eloi$'s winning strategy $f$ in $\sh{\G}$.
It follows from the fact that $f$ is a winning strategy, that $\Pi$ and $Q$
are legitimate moves, that is, $\Pi \in \shDe(R)$ and $(\si(s),\Pi) \in 
\Fb(Q)$.
Then by definition of $\shDe$, and the fact that $a \in \Ran(R)$, there is
some $\Phi \in \De(a)$ such that $(\F\eva)(\Pi)$ is an $\F$-redistribution
of $\Phi$.
This $\Phi$ is the next move of $\eloi$ in the game $\G$.

If $\Phi = \nada$ then $\eloi$ wins right away, in which case we are done
immediately.
So assume that $\Phi \neq \nada$, and suppose that $\abel$ responds to 
$\eloi$'s  move with an object $\phi \in \Phi$.
Then $\eloi$ has to come up with a relation $Y \sse S \times A$ such that
$(\si(s),\phi) \in \Fb(Y)$.
Our suggestion to $\eloi$ is to pick the relation given by 
\[
Y := Q \circ \Graph{\eva} \circ \niA,
\]
or, spelled out,
\[
Y = \{ (t,b) \in S \times A \mid b \in R'[a] \mbox{ for some } R' \in \relA
       \mbox{ with } (t,R') \in Q \}.
\]

If this is a legitimate move for $\eloi$, then we are done.
For, distinguish the following cases.
If $Y = \nada$ then $\abel$ gets stuck so $\eloi$ wins immediately.
But if $Y \neq \nada$, then with any $(s_{k+1},a_{k+1}) \in Y$ that $\abel$
chooses as his next move, by definition we may associate a relation $R_{k+1}
\in \relA$ such that $(a_{k},a_{k+1}) \in R_{k+1}$ and $(s_{k+1},R_{k+1})
\in Q$.
In other words, we have showed that $\eloi$ can indeed maintain the above
mentioned condition (\ref{eq:abc}) for one more round of the game.

Thus it is left to show that $Y$ is a legal move for $\eloi$ in $\G$; that 
is, we must show that 
\begin{equation}
\label{eq:1}
(\si(s),\phi) \in \Fb(Y).
\end{equation}
For this purpose, first observe that the definition of $Y$ and the properties 
of $\Fb$ (cf.~Fact~\ref{fact:wpb}) imply that 
\begin{equation}
\label{eq:2}
\Fb(Y) = \Fb(Q) \circ \Graph{\F\eva} \circ \Fb(\niA).
\end{equation}
Now it follows from the legitimacy of $\Pi$ in the game $\sh{\G}$, that
$(\F\eva)(\Pi)$ is an $\F$-redistribution of $\Phi$, i.e.,
\begin{equation}
\label{eq:hhh}
(\Pi,\phi) \in \Graph{\F\eva} \circ \Fb(\niA).
\end{equation}
From the legitimacy of $Q$ it follows that 
\begin{equation}
\label{eq:hhi}
(\si(s),\Pi) \in \Fb Q.
\end{equation}
But then (\ref{eq:1}) is immediate from (\ref{eq:2}), (\ref{eq:hhh}) and
(\ref{eq:hhi}).
\end{proof}

\section{Regular automata}
\label{s:regaut}

\noindent In this section we look in detail at some of the acceptance
conditions of coalgebra automata.  Recall that in the case of an
acceptance game $\G(\bbA,\bbS)$, the winner of any infinite match is
determined by the infinite sequence of $\bbA$-states
\[
\ai a_{1} a_{2} \ldots
\]
that is induced by the match.
More specifically, the acceptance condition of the automaton $\bbA$ is of 
the form $L \sse L^{\om}$, i.e., an $\om$-language over the set $A$ of 
$\bbA$-states.
In many cases, the set $L$ has a fairly low complexity.
For instance, in the case of parity automata, the criterion whether an
$A$-stream $\al$ belongs to $L$ or not is given in terms of the set
$\Inf(\al)$ of those states that occur infinitely often in $\al$.

An interesting class of automata is given by those in which the acceptance
condition is a so-called \emph{$\om$-regular language}, that is, a subset
$L \sse A^{\om}$ that is itself recognized by some word automaton.

\begin{remark}
For readers that are not familiar with the theory of automata operating
on infinite words, we summarize the definitions here.
Fix an alphabet $C$.

A \emph{$C$-stream} is an infinite $C$-word $\ga = c_{0}c_{1}c_{2}\ldots$
A \emph{nondeterministic $C$-automaton} is a quadruple $\bbA = 
\struc{A,\ai,\De,\Acc}$, where $A$ is a finite set, $\ai \in
A$ is the \emph{initial state} of $\bbA$, $\De: A \times C \to \power(A)$ its
\emph{transition function} of $\bbA$, and $\Acc \sse A^{\om}$ its
\emph{acceptance condition}.
Such an automaton is \emph{deterministic} if $\De(a,c)$ is a singleton
for each $a \in A$ and $c \in C$.

A \emph{run} of a deterministic automaton $\bbA = \struc{A,\ai,\De,\Acc}$ on
an $C$-stream $\ga = c_{0}c_{1}c_{2}\ldots$ is an infinite $A$-sequence 
\[
\rho = a_{0}a_{1}a_{2}\ldots
\]
such that $a_{0}= \ai$ and $a_{i+1} \in \De(a_{i},c_{i})$ for every $i \in
\omega$.
Note that such a run is unique if $\bbA$ is deterministic.

A nondeterministic $C$-automaton $\bbA = \struc{A,\ai,\De,\Acc}$ \emph{accepts}
an $C$-stream $\ga$ if there is a successful run of $\bbA$ on $\ga$.
The set of those streams is denoted by $L_{\om}(\bbA)$.
A set $L \sse C^{\om}$ is called \emph{$\om$-regular} if  $L = L_{\om}(\bbA)$
for some $C$-automaton $\bbA$ with a parity acceptance condition.

A key result in the theory of stream automata states that the every
nondeterministic parity automaton can be transformed into an equivalent
deterministic parity automata.
That is, every $\om$-regular language $L$ is of the form $L_{\om}(\bbA)$ 
for some \emph{deterministic} parity automaton $\bbA$.
\end{remark}

\begin{definition}
An $\F$-automaton $\bbA = \struc{A,\ai,\De,L}$ is called \emdef{regular} if $L 
\sse A^{\om}$ is an $\om$-regular language.
\end{definition}

It is sometimes attractive to use regular automata because the acceptance
condition may be easier or more intuitive to formulate in the form of
an $\om$-regular language than as a parity condition.
An important example of this was given in the previous section where the
nondeterministic automaton $\sh{\bbA}$ was provided with a regular
acceptance condition.
Nevertheless the recognizing power of (nondeterministic) regular automata is
not strictly greater than that of (nondeterministic) parity automata:
Theorem~\ref{t:regaut}, the main result of this section, states that
every regular nondeterministic automaton can be replaced with an
equivalent nondeterministic parity automaton.
As a consequence of this result, we may use regular automata as a handy,
auxiliary notion in the theory of coalgebra automata.

The key idea underlying the proof of Theorem~\ref{t:regaut} is the
construction of a so-called \emph{wreath product}.
Given a regular nondeterministic $\F$-automaton with state set $A$, and
a parity word automaton $\bbW$ operating on infinite $A$-words, we define
the wreath product as some nondeterministic parity $\F$-automaton.
Informally this automaton $\bbB \odot \bbW$ runs the automaton $\bbB$ on a
given pointed $\F$-coalgebra and feeds the resulting sequence of automata
states into the automaton $\bbW$. 

\begin{definition}
Let $\F$ be a set functor, and 
let $\bbB = \struc{B,b_I,\De,L}$ be a nondeterministic $\F$-automaton, and 
let $\bbW=\struc{W,w_{I},\delta:W \times B \to W,\Omega}$ be a deterministic 
parity word automaton.

Let, for $w \in W$, the map $\de_{w}: B \to B \times W$ be defined by putting
\[
\de_{w}(b) := (b,\de(w,b)).
\]
Using this map, we define the element $\ai \in B \times W$ and the map
$\Gamma: B \times W \to \Pow \F (B \times W)$ be given by
\begin{eqnarray*}
   a_{I}     &:=& (b_I, \de(w_{I},b_{I}))
\\ \Gamma(b,w) &:=& \{ (\F \delta_w)(\phi) \mid \phi \in \De(b) \}.
\end{eqnarray*}
The map $\Psi: B \times W \to \om$ is defined as $\Psi = \Om \circ\pi_{2}$, 
that is, 
\[
\Psi(b,w) := \Om(w).
\]
Finally, the nondeterministic $\F$-automaton 
\[ 
\bbB \odot \bbW = \struc{B \times W, \ai, \Gamma, \Psi}
\]
is the \emdef{wreath product} of $\bbB$ and $\bbW$.
\end{definition}

As we will see now, if the acceptance condition of $\bbA$ is actually the 
language recognized by $\bbW$, then the two automata $\bbA$ and $\bbA\odot
\bbW$ are equivalent.


\begin{theorem}
\label{t:regaut}
Let $\F$ be a set functor that preserves weak pullbacks,  
let $\bbB = \struc{B,b_I,\De,L}$ be a nondeterministic regular
$\F$-automaton and let $\bbW$ be a deterministic parity automaton such
that $L$ is the language accepted by $\bbW$.
Then $\bbB$ and $\bbB\odot\bbW$ are equivalent.
\end{theorem}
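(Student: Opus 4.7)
The plan is to show, for an arbitrary pointed $\F$-coalgebra $(\bbS,s_{0})$, that $(s_{0},b_{I})$ is winning for $\eloi$ in $\G(\bbB,\bbS)$ iff $(s_{0},\ai)$ is winning for her in $\G(\bbB\odot\bbW,\bbS)$. Intuitively, a basic position $(s,(b,w))$ of the wreath-product game encodes the original position $(s,b)$ together with a shadow $\bbW$-state $w$ that tracks the unique $\bbW$-run on the sequence of $\bbB$-states visited so far. Two observations drive the argument: since $\pi_{B}\circ \delta_{w}=\id_{B}$, functoriality gives $\F\pi_{B}\circ \F\delta_{w}=\id_{\F B}$, so $\F\delta_{w}$ is a split mono and any $\phi$ can be recovered from $(\F\delta_{w})(\phi)$ by applying $\F\pi_{B}$; secondly, $\Fb$ distributes over relation composition (Fact~\ref{fact:wpb}(5)) because $\F$ preserves weak pullbacks.

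For the direction from $\bbB$ to $\bbB\odot\bbW$, I would take a winning positional strategy $(\Phi,Y)$ for $\eloi$ in $\G(\bbB,\bbS)$ (simplified per Remark~\ref{r:nd}, so $\Phi_{s,b}\in\F B$) and transport it to the shadow strategy
\[
\Phi'_{s,(b,w)}\coloneqq (\F\delta_{w})(\Phi_{s,b}), \qquad Y'_{s,(b,w)}\coloneqq Y_{s,\Phi_{s,b}}\circ \Graph{\delta_{w}}.
\]
Legitimacy of $\Phi'$ is immediate from the definition of $\Gamma$; legitimacy of $Y'$ follows from $\Fb(Y')=\Fb(Y_{s,\Phi_{s,b}})\circ \Graph{\F\delta_{w}}$ together with the pair $(\sigma(s),\Phi_{s,b})\in\Fb(Y_{s,\Phi_{s,b}})$ composed with $(\Phi_{s,b},(\F\delta_{w})(\Phi_{s,b}))\in\Graph{\F\delta_{w}}$. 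By construction every admissible successor $(t,(b',w'))\in Y'_{s,(b,w)}$ has $w'=\delta(w,b')$, so in any conforming infinite shadow match the sequence of second coordinates is precisely the $\bbW$-run on the induced $\bbB$-state sequence, modulo the irrelevant initial $w_{I}$. Since $(\Phi,Y)$ is winning, that $\bbB$-sequence lies in $L=L(\bbW)$, so the shadow $w$-sequence satisfies the parity condition $\Psi=\Omega\circ \pi_{2}$.

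The reverse direction requires more care. Given a winning strategy $(\Phi',Y')$ for $\eloi$ in $\G(\bbB\odot\bbW,\bbS)$, I would first \emph{normalize} its dynamic part by setting
\[
Y'^{\mathrm{n}}_{s,(b,w)} \coloneqq Y'_{s,(b,w)}\cap \{(t,(b',w'))\in S\times(B\times W) \mid w'=\delta(w,b')\}.
\]
Shrinking $Y'$ only reduces $\abel$'s options, so the normalized strategy remains winning provided it is still legitimate. Legitimacy is the main obstacle: since $\Phi'_{s,(b,w)}=(\F\delta_{w})(\phi)$ for the unique $\phi=(\F\pi_{B})(\Phi'_{s,(b,w)})\in\De(b)$, the element $(\F\delta_{w})(\phi)$ is supported on the image $G_{w}$ of $\delta_{w}$; weak-pullback preservation (assuming WLOG via Fact~\ref{f:wppst} that $\F$ is standard, so $\F G_{w}\sse \F(B\times W)$) allows one to factor any $\F$-witness of $(\sigma(s),(\F\delta_{w})(\phi))\in \Fb(Y')$ through $\F Y'^{\mathrm{n}}$.

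With the normalized shadow strategy in hand, I define $\eloi$'s strategy in $\G(\bbB,\bbS)$ by induction on the history $(s_{0},b_{0})\cdots(s_{k},b_{k})$: compute the forced shadow state $w_{k}$ by running $\bbW$ on $b_{0}\cdots b_{k-1}$ from $\delta(w_{I},b_{I})$; play $\phi_{k}\coloneqq (\F\pi_{B})(\Phi'_{s_{k},(b_{k},w_{k})})$; and in the dynamic phase play $Y_{k}\coloneqq Y'^{\mathrm{n}}_{s_{k},(b_{k},w_{k})}\circ \Graph{\pi_{B}}$. Legitimacy of $Y_{k}$ follows from distributivity of $\Fb$ together with $(\F\pi_{B})((\F\delta_{w_{k}})(\phi_{k}))=\phi_{k}$. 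By normalization, every $\abel$-response $(s_{k+1},b_{k+1})\in Y_{k}$ extends uniquely to a shadow position $(s_{k+1},(b_{k+1},w_{k+1}))$ with $w_{k+1}=\delta(w_{k},b_{k+1})$, so the induced infinite shadow match is conforming and $\eloi$-winning, forcing the $\bbB$-state sequence into $L$. Everything outside the normalization step is a routine exercise in the relation-lifting calculus.
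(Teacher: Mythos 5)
Your proposal is correct, and its forward direction coincides with the paper's: the paper likewise transports $\eloi$'s winning strategy by playing $(\F\de_{w})(\phi)$ and the relation $Y \circ \Graph{\de_{w}}$, with legitimacy falling out of Fact~\ref{fact:wpb}. The reverse direction, however, is where you take a genuinely different and noticeably heavier route. You normalize the shadow strategy's dynamic part by restricting $Y'$ to pairs respecting $w' = \de(w,b')$, and the legitimacy of that restriction is, as you say, the main obstacle: it requires exactly the restriction property $(\Fb R)\rst{\F S \times \F T'} \sse \Fb(R\rst{S \times T'})$, which holds only for standard weak-pullback-preserving functors (it is Proposition~\ref{f:st}(2) of this paper, proved via \cite[Prop.~2.2]{vene:auto06}), whence your detour through standardization (Fact~\ref{f:wppst}) and the support argument placing $(\F\de_{w})(\phi)$ in $\F$ of the image of $\de_{w}$. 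The paper sidesteps all of this with one trick: instead of normalizing and then projecting with $\Graph{\pi_{B}}$, $\eloi$ plays $Y := Z \circ \converse{(\Graph{\de_{w}})}$. Composing with the \emph{converse} graph simultaneously projects to $B$ and filters out every pair whose $\bbW$-coordinate disagrees with $\de(w,b)$ — precisely what your normalization achieves — while legitimacy is immediate from $(\si(s),(\F\de_{w})(\phi)) \in \Fb Z$ together with Fact~\ref{fact:wpb}(1),(3),(5): $\Fb Z \circ \converse{(\Graph{\F\de_{w}})} = \Fb(Z \circ \converse{(\Graph{\de_{w}})})$. So the paper's argument stays entirely within the relation-lifting calculus and needs neither standardness nor the restriction lemma; your version is sound (all the facts you invoke are available in the paper, albeit developed only in Section~\ref{s:6}), but it imports machinery that the converse-graph composition makes unnecessary. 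The one payoff of your route is that it isolates the correlation invariant as an explicit property of the strategy rather than encoding it in the relation algebra, which some readers may find more transparent.
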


\begin{proof}
\newcommand{\Godot}{\G^{\odot}}
Let $\G$ and $\Godot$ be the acceptance games $\G := \G(\bbB,\bbS)$ and 
$\Godot := \G(\bbB\odot\bbW,\bbS)$, respectively, and fix a pointed 
$\F$-coalgebra $(\bbS,s_{0})$.
Our aim is to prove the following equivalence:
\begin{equation}
\label{eq:regaut}
\bbB \mbox{ accepts } (\bbS,s_{0}) \mbox{ iff }
\bbB\odot\bbW \mbox{ accepts } (\bbS,s_{0}).
\end{equation}
In both cases our proof consists in showing that $\eloi$ can mimic the 
match in one game by a correlated match of the other game.
Here we call a partial $\Godot$-match $\pi$ \emph{correlated} to a partial
$\G$-match $\pi'$ if (the respective sequences of basic positions in) $\pi$
and $\pi'$ are of the following form:
\begin{eqnarray*}
   \pi  &=& (s_{0},(b_{0},w_1)) (s_1,(b_1,w_2)) \ldots (s_n,(b_n,w_{n+1})) 
\\ \pi' &=& (s_{0},b_{0}) (s_1,b_1) \ldots (s_n,b_n) 
\end{eqnarray*}
where $w_{i+1} = \de(w_{i},b_{i})$ for each $i$.
A similar definition applies to full matches.

It is not hard to see that infinite correlated matches have the same winner,
i.e., if the infinite matches $\pi$ and $\pi'$ are correlated, then $\eloi$
wins $\pi$ iff she wins $\pi'$.
The key observation here is that if $\pi = (s_{0},(b_{0},w_1)) (s_1,(b_1,w_2))
\ldots$ is correlated to $\pi' =  (s_{0},b_{0}) (s_1,b_1) \ldots$, then 
$w_{0}w_{1}\ldots$ is the run of $\bbW$ on the infinite $B$-word
$b_{0}b_{1}\ldots$
\medskip

We now turn to the proof of (\ref{eq:regaut}).
For the direction from left to right, assume that $\bbB$ accepts 
$(\bbS,s_{0})$, that is, assume that $\eloi$ has a winning strategy in
the game $\G$ initiated at $(s_{0},b_{0})$.
In order to show that $\bbB\odot\bbW$ accepts $(\bbS,s_{0})$, we need to
equip her with a winning strategy in the game $\Godot$ starting from the
position $(s_{0},(b_{0},w_{1}))$.
We will show that with the running $\Godot$-match, $\eloi$ can maintain a
correlated $\G$-match in which she plays her winning strategy.
Then by the fact that infinite correlated matches are won by the same
player, she is guaranteed to win all infinite matches.
Hence it suffices to prove inductively that if she has maintained the
shadow match for $n$ rounds, she either directly wins the $\Godot$-match in
the next round, or else she can maintain the shadow match for one more round.

Assume then that with the partial $\Godot$-match 
\[
\pi = (s_{0},(b_{0},w_1)) (s_1,(b_1,w_2)) \ldots (s_n,(b_n,w_{n+1})) 
\]
she has associated a correlated partial $\G$-match
\[
\pi'= (s_{0},b_{0}) (s_1,b_1) \ldots (s_n,b_n) 
\]
which is conform her winning strategy in $\G$.
Suppose that this winning strategy tells her to choose $\phi \in \De(b_{n})$,
followed by the relation $Y \sse S \times B$ satisfying $(\si(s_{n}),\phi) 
\in \Fb Y$.
Then in the $\Godot$-match $\pi$ she chooses $(F \de_{w_{n+1}})(\phi)$, 
followed by the relation $Z := Y \circ \Graph{F \de_{w_{n+1}}}$.
The legitimacy of these moves is immediate by the definitions.

Clearly if $Z = \nada$, $\eloi$ wins immediately, so assume otherwise, and
suppose that $\abel$ picks a pair $(s,(b,w))$ as the next basic position 
continuing $\pi$.
Then by definition of $Z$ we have $(s,b) \in Y$ and $w = \de(w_{n+1},b)$.
Hence in $\G$ we could have arrived at the position $(s,b)$ if in $\pi'$,
$\eloi$ had chosen $\phi$ and $Y$. 
And since $w = \de(w_{n+1},b)$, the two partial matches $\pi(s,(b,w))$
and $\pi'(s,b)$ are correlated. 
In other words, $\eloi$ has indeed maintained the required condition for
one more round.
\medskip

For the other direction of (\ref{eq:regaut}), assume that $\eloi$ has a
winning strategy in the acceptance game $\Godot$ initiated at $(s_{0},
(b_{0},w_{1}))$.
It suffices to show that in the game $\G$ starting at $(s_{0},b_{0})$,
$\eloi$ has a winning strategy. 
Analogously to the proof for the other direction, we will show that, round
by round, $\eloi$ can maintain a shadow match in $\Godot$ which is 
correlated to the running $\G$-match, and conform her supposed winning
strategy.

More precisely, inductively assume that with the partial $\G$-match
\[
\rho = (s_{0},b_{0}) (s_1,b_1) \ldots (s_n,b_n) 
\]
she has associated a correlated partial $\Godot$-match conform her
winning strategy:
\[
\rho' = (s_{0},(b_{0},w_1)) (s_1,(b_1,w_2)) \ldots (s_n,(b_n,w_{n+1})) 
\]
Consider the moves suggested by the winning strategy in $\Godot$ when the
partial match $\rho'$ arrives at $(s_n,(b_n,w_{n+1}))$.
$\eloi$ first picks an element from $\Gamma(s_n,(b_n,w_{n+1}))$, which by
definition of $\Gamma$ is of the form $(F \de_{w_{n+1}})(\phi)$ with $\phi \in
\De(b)$, followed by a relation $Z \sse S \times (B \times W)$ such that 
$(\si(s_{n}),(F \de_{w_{n+1}})(\phi)) \in \Fb Z$.
Clearly then the pair $(\si(s_{n}),\phi)$ belongs to the set $\Fb Z \circ
\converse{(\Graph{F \de_{w_{n+1}}})}$, which by properties of relation lifting 
(see Fact~\ref{fact:wpb}) is identical to the relation $\Fb (Z \circ 
\converse{(\Graph{\de_{w_{n+1}}})})$.
This means that $\eloi$ may legitimately continue the $\G$-match $\rho$
with the moves $\phi$ and $Y:= Z \circ \converse{(\Graph{\de_{w_{n+1}}})}$.

Suppose that $\abel$ responds to these moves by playing a pair $(s,b) \in Y$.
By definition of $Y$ there must be a pair $(b',w) \in B \times W$ such 
that $(s,(b',w)) \in Z$ and $\de_{w_{n+1}}(b) = (b',w)$.
From this it is immediate that $b' = b$ and $w = \de(w_{n+1},b)$.
But from $(s,(b,w)) \in Z$ it follows that in $\Godot$, $\abel$ may respond
to $\eloi$'s move $Z$ by picking $(s,(b,w))$ as the next basic position.
Then from $w = \de(w_{n+1},b)$ it follows that the partial matches
$\rho(s,b)$ and $\rho'(s,(b,w))$ are correlated, and since the continuation
of $\rho'$ was conform $\eloi$'s winning strategy, we are done.
\end{proof}

\begin{remark}
Our construction does not use the fact that the word automaton $\bbW$ is a
{\em parity} word automaton. We could use any other acceptance condition on
infinite words, such as a B\"uchi, a Muller or a Rabin condition. In these
cases the resulting wreath product automaton would be an $\F$-automaton with
B\"uchi, Muller or Rabin condition respectively.
\end{remark}

\section{Closure properties}
\label{s:3}
\label{s:closprop}

\noindent In this section we prove Theorem~\ref{t:main} and
Theorem~\ref{t:clos}.  First we combine the results of the previous
two sections in order to show that every parity $\F$-automaton can be
transformed into an equivalent nondeterministic one.  After that we
will see that the class of nondeterministically recognizable languages
is closed under taking union and projection, whereas the class of
recognizable languages is shown to be closed under union and
intersection.  Combined with Theorem~\ref{t:main}, this suffices to
prove Theorem~\ref{t:clos}.

\subsection{The main theorem}

In the previous sections we saw how to transform an arbitrary 
parity $\F$-automaton $\bbA$ into a nondeterministic regular 
$\F$-automaton $\sh{\bbA}$. Furthermore
we showed how to use the wreath product construction in order
to transform a given regular $\F$-automaton into an equivalent
$\F$-automaton with parity acceptance condition. We will
combine these facts in order to prove that for 
every parity $\F$-automaton we can effectively construct an equivalent
nondeterministic parity $\F$-automaton. To begin with we have a closer 
look at the size of the word automaton $\bbW$ that witnesses 
the fact that the acceptance condition of $\sh{\bbA}$ is 
regular.

\begin{proposition}\label{prop:wordcomp}
Let $\F$ be a set functor, and let $\bbA=\struc{A,\ro{a},
\De,\Omega}$ be a parity
$\F$-automaton with $n$ states and index $k$. 
Then we can construct a deterministic parity $\relA$-word automaton 
$\bbW=\struc{W,\ro{w},\delta,\Omega'}$, 
such that $\bbW$ accepts $\alpha = R_0 R_1 
R_2 \ldots \in ({\relA})^\omega$ iff $\alpha$ contains no bad trace. 
This automaton has $2^{\mathcal{O}(nk \log (nk))}$ states and index 
$\mathcal{O}(nk)$.
\end{proposition}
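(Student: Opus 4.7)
The plan is to construct, via a detour through the complement, a nondeterministic parity automaton $\bbN$ accepting the $\relA$-streams that \emph{do} contain a bad trace, then determinise it via the Safra--Piterman construction and complement the resulting deterministic parity automaton. The key point is that the existence of a bad trace in $\alpha$ is naturally verifiable by a nondeterministic automaton that simply guesses one, whereas the absence of one is not.

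Explicitly, I would take $\bbN = \struc{A \cup \{\ast\}, \ast, \de_\bbN, \Omega'}$ over the alphabet $\relA$, where $\ast$ is a fresh initial state, the transitions are
\[
\de_\bbN(\ast, R) \coloneqq \Ran(R), \qquad \de_\bbN(a, R) \coloneqq R[a] \;\; (a \in A),
\]
and the priority function is $\Omega'(\ast) \coloneqq 0$, $\Omega'(a) \coloneqq \Omega(a) + 1$ for $a \in A$. Infinite runs of $\bbN$ on $\alpha$ are in bijection with infinite traces through $\alpha$: the $i$-th state visited after $\ast$ plays the role of the $i$-th element of the trace, while any witness for the first transition out of $\ast$ being legal supplies the free initial element of the trace. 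Since that first element appears only once, it is irrelevant to the max-infinitely-often condition; and shifting all priorities in $A$ up by one converts the odd-parity ``bad trace'' condition into the standard even-max parity condition used for acceptance by $\bbN$. Hence $\bbN$ accepts $\alpha$ iff $\alpha$ contains a bad trace, and $\bbN$ has $n+1$ states and index at most $k+1$.

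The remaining two steps are offloaded to classical results on $\omega$-automata. First, applying the Safra--Piterman determinisation of nondeterministic parity word automata to $\bbN$ yields an equivalent deterministic parity automaton $\bbD$; the standard size analysis ---~for instance by first converting $\bbN$ to a nondeterministic Büchi automaton of size $\mathcal{O}(nk)$ and then applying Safra's construction~--- gives $\bbD$ with $2^{\mathcal{O}(nk \log nk)}$ states and index $\mathcal{O}(nk)$. Second, I would obtain $\bbW$ from $\bbD$ by uniformly incrementing every priority by one: this flips acceptance, so $\bbW$ recognises exactly the streams with \emph{no} bad trace, while preserving both the state count and the size of the priority range.

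The only genuinely hard step is the Safra--Piterman determinisation, which is the one place where we invoke deep machinery; all the other ingredients ---~the NPA construction, the bijection between runs of $\bbN$ and traces of $\alpha$, and the priority-shift complementation~--- are routine bookkeeping. The main verification to write out carefully is that the run/trace correspondence, combined with the shift by $+1$, matches the acceptance conditions as claimed.
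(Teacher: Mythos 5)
Your proposal is correct and follows essentially the same route as the paper: the paper likewise builds a nondeterministic parity word automaton that guesses a bad trace (its $\bbB_1$ reuses the states of $\bbA$ with transitions $\delta_1(a,R) \coloneqq R[a]$ and priorities $\Omega(a)+1$), converts it by the standard construction into a nondeterministic B\"uchi automaton of size $\mathcal{O}(nk)$, determinises via the Piterman variant of the Safra construction, and finally complements the resulting deterministic parity automaton by incrementing every priority by one. The only divergence is your fresh initial state $\ast$ with $\de_\bbN(\ast,R) \coloneqq \Ran(R)$, which lets runs simulate bad traces beginning at an \emph{arbitrary} state of $A$ --- a point the paper's ``straightforward argument'' glosses over, since its $\bbB_1$ is initialised at $\ro{a}$ and so, read literally, only detects bad traces starting there --- making your version marginally more careful at the cost of one extra state and priority, which does not affect the stated bounds.
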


\begin{proof}  
The construction of $\bbW$ is done in four steps:
\begin{enumerate}[\hbox to6 pt{\hfill}]
\item\noindent{\hskip-10 pt\bf Step 1:}\ We define a nondeterministic $\relA$-word automaton $\bbB_1
  \coloneqq \struc{A,\ro{a},\delta_1,\Omega^{+1}}$ where we put 
  $\delta_1(a,R) 
  \coloneqq R[a]$ for all $a \in A$ and all $R \in \relA$, and $\Omega^{+1}(a)
  \coloneqq \Omega(a) + 1$. 
  A straightforward argument shows that $\bbB_1$ accepts a word $\alpha \in 
  ({\relA})^\omega$ iff $\alpha$ contains a bad trace. 
  The automaton	$\bbB_1$ has $n$ states and index $k$.
\item\noindent{\hskip-10 pt\bf Step 2:}\ Using a standard construction, see for instance
  \cite{king:comp01}), we transform the automaton into an equivalent
  nondeterministic $\relA$-word automaton $\bbB_2$ with B\"uchi	acceptance
  condition.
  The size of this automaton is bounded by $\mathcal{O}(n k)$. 
\item\noindent{\hskip-10 pt\bf Step 3:}\
   Using the variant of the Safra construction described in
  \cite{pite:nond06}, we transform the automaton $\bbB_2$ into an equivalent
  deterministic parity word automaton $\bbB_3$. The size of $\bbB_3$ is bounded
  by $2^{\mathcal{O}(n k  \log(nk))}$, and $\bbB_3$ has index $\mathcal{O}(nk)$. 
\item\noindent{\hskip-10 pt\bf Step 4:}\
  Finally we define $\bbW$ to be the deterministic parity
  $\relA$-word automaton that accepts the complement of the language of
  $\bbB_{3}$, that is, $\bbW$ accepts a word $\alpha \in ({\relA})^\omega$ iff
  $\alpha$ does \emph{not} contain any bad trace.
  This automaton can be obtained by taking $\bbB_3$ and changing its parity
  function in the same way as in Step 1, i.e., we increase all the parities 
  by $1$. 
  The size and index of $\bbW$ are still bounded by 
  $2^{\mathcal{O}(nk \log(nk))}$ and $\mathcal{O}(nk)$, respectively.
\end{enumerate}
\end{proof}

\begin{theorem}\label{thm:m}
Let $\F$ be a set functor that preserves weak pullbacks, and 
let $\bbA=\struc{A,\ro{a},\De,\Omega}$ be an alternating parity $\F$-automaton
with $n$ states and index $k$. 
Then we can construct an equivalent nondeterministic parity automaton
$\bbA^{\bl}$ of size $2^{\mathcal{O}(n^2 + nk \log(nk))}$ and with index
$\mathcal{O}(nk)$.	
\end{theorem}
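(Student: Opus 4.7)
The plan is to simply chain together the three constructions developed in the preceding sections. First, I would apply the construction of Section~\ref{s:altond} to the alternating parity automaton $\bbA = \struc{A, \ai, \De, \Omega}$, obtaining the nondeterministic regular $\F$-automaton $\sh{\bbA} = \struc{\relA, \Ri, \shDe, \NOT_{\Omega}}$ whose state space is $\relA = \power(A \times A)$. By the proposition proved at the end of Section~\ref{s:altond}, $\sh{\bbA}$ is equivalent to $\bbA$, and its carrier has $2^{n^2}$ elements.

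Next, I would invoke Proposition~\ref{prop:wordcomp} to obtain a deterministic parity word automaton $\bbW$ over the alphabet $\relA$ that recognizes exactly the $\omega$-regular language $\NOT_{\Omega}$ used as the acceptance condition of $\sh{\bbA}$. By that proposition, $\bbW$ has $2^{\mathcal{O}(nk \log(nk))}$ states and index $\mathcal{O}(nk)$. Then I would define
\[
\bbA^{\bl} \coloneqq \sh{\bbA} \odot \bbW,
\]
the wreath product from Section~\ref{s:regaut}. By construction this is a nondeterministic parity $\F$-automaton, and by Theorem~\ref{t:regaut} (applied with $\bbB = \sh{\bbA}$ and $L = \NOT_{\Omega}$), it is equivalent to $\sh{\bbA}$, hence equivalent to $\bbA$.

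The remaining task is a routine size estimate: the state space of $\bbA^{\bl}$ is $\relA \times W$, so its cardinality is at most
\[
2^{n^2} \cdot 2^{\mathcal{O}(nk \log(nk))} = 2^{\mathcal{O}(n^2 + nk \log(nk))},
\]
and the priority function of the wreath product is inherited from $\bbW$ via the projection on the second coordinate, so the index of $\bbA^{\bl}$ equals the index of $\bbW$, namely $\mathcal{O}(nk)$.

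I do not expect a serious obstacle here, since every ingredient has already been proved; the step that carries all the technical weight is the correctness of $\sh{\bbA}$ from Section~\ref{s:altond}, together with the determinization bound cited in Proposition~\ref{prop:wordcomp} whose proof ultimately relies on Safra-style determinization of $\omega$-automata. The only thing to be careful about in writing the proof is to make the chain $\bbA \rightsquigarrow \sh{\bbA} \rightsquigarrow \sh{\bbA} \odot \bbW$ explicit and to verify that the indices and sizes multiply in the claimed way.
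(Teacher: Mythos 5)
Your proposal is correct and follows essentially the same route as the paper's own proof: transform $\bbA$ into the equivalent regular nondeterministic automaton $\sh{\bbA}$ of size $2^{n^2}$, apply Proposition~\ref{prop:wordcomp} to obtain the deterministic parity word automaton $\bbW$ recognizing $\NOT_{\Omega}$, and take $\bbA^{\bl} \coloneqq \sh{\bbA} \odot \bbW$, with equivalence given by Theorem~\ref{t:regaut} and the size and index bounds computed exactly as you do. Nothing is missing.
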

\begin{proof}
	In Section~\ref{s:altond} we showed how to transform
	$\bbA$ into an equivalent {\em regular} nondeterministic
	$\F$-automaton ${\bbA}^{\sharp}$. This automaton has 
	at most size
	$2^{n^2}$. Furthermore we can use 
	Proposition~\ref{prop:wordcomp} in order to construct a parity
	$\relA$-word automaton $\bbW$ of size 
	$2^{\mathcal{O}(n k \log(nk))}$ and with index $\mathcal{O}(nk)$
	that accepts exactly those words $\alpha \in ({\relA})^\omega$ 
	that do not contain any bad trace. We define
	$\bbA^{\bl}$ to be the wreath product $\bbA^{\sharp} \odot \bbW$. 
	From Theorem~\ref{t:regaut} we know that 
	$\bbA^{\bl}$ is a nondeterministic $\F$-automaton
	that is equivalent to $\bbA$. Furthermore, spelling out 
        the definitions, one can easily check that 
	$\bbA^{\bl}$ has size
	$2^{\mathcal{O}(n^2 + nk\log(nk))}$ and index 
	$\mathcal{O}(nk)$.
\end{proof}

\begin{remark}
\label{r:fund}
   The complexity bound in our main theorem is  
   the immediate 
   consequence from known 
   results in the literature on $\omega$-automata. 
   In particular we heavily rely
   on~\cite{pite:nond06}. 
   Our contribution
   is to show that 
   known complexity bounds 
   concerning
   $\omega$-automata can be transferred to 
   other types of structures essentially 
   without increasing the complexity. 
   In other words, we prove that
   the complexity of transforming a given alternating 
   $\F$-automaton into an equivalent nondeterministic one
   is bound by the complexity of the Safra construction
   on $\omega$-automata. This observation has been
   further substantiated in~\cite{kiss:deci07} where
   it is shown in detail that both the lower and upper
   complexity bounds from the Safra construction
   on $\omega$-automata yield the respective
   complexity bounds for transforming a given alternating
   tree automaton into an equivalent nondeterministic one.
\end{remark}

\subsection{Closure under union and intersection}

\newcommand{\bbAun}{\bbA_{\cup}}
\newcommand{\bbAit}{\bbA_{\cap}}
\newcommand{\Deun}{\De_{\cup}}
\newcommand{\Deit}{\De_{\cap}}

In this subsection we prove that the class of recognizable languages is
closed under taking unions and intersections. 
This is the content of the following proposition.
Note that in this subsection we do require the 
functor to be standard. As we demonstrated in 
Section~\ref{s:standard} this is not an essential condition. 
In fact, it would be not difficult to modify the
arguments in this subsection in order to show 
closure under union and intersection also for nonstandard functors. 
The fairly simple proofs would, however, look unnecessarily complicated.

\begin{proposition}
\label{prop:closeunion}\label{prop:closeintersec}
Let $\F$ be some set functor.
Given two parity $\F$-automata $\bbA_{1}$ and $\bbA_{2}$ we can construct 
parity $\F$-automata $\bbAun$ and $\bbAit$ such that $L(\bbAun) = L(\bbA_1) \cup
L(\bbA_2)$ and $L(\bbAit) = L(\bbA_1) \cap L(\bbA_2)$. 
Both $\bbAun$ and $\bbAit$ have size $n_1 + n_2 + 1$, where $n_i$ is the size
of $\bbA_i$, and both automata have index $k\mathrel{:=}\max(k_1,k_2)$, where
$k_i$ is the index of $\bbA_i$.
Moreover $\bbAun$ is nondeterministic if $\bbA_1$ and $\bbA_2$ are so.
\end{proposition}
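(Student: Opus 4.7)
The plan is to construct both $\bbAun$ and $\bbAit$ on the same disjoint-union carrier, with a fresh initial state that encodes disjunction or conjunction in a single round of the acceptance game. Writing $\bbA_j = \struc{A_j, a_{I,j}, \De_j, \Om_j}$ for $j=1,2$, I will set the shared state set to $A := A_1 \sqcup A_2 \sqcup \{a_I\}$ for a fresh $a_I$, and use standardness to view each $\De_j(a_{I,j}) \sse \pwA \F A_j$ as a subset of $\pwA \F A$. I then define
\[
\Deun(a_I) := \De_1(a_{I,1}) \cup \De_2(a_{I,2}),
\qquad
\Deit(a_I) := \{\,\Phi_1 \cup \Phi_2 \mid \Phi_j \in \De_j(a_{I,j})\,\},
\]
while at any $a \in A_j$ both maps simply reproduce $\De_j(a)$. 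The priority functions extend $\Om_1 \cup \Om_2$ by assigning $a_I$ a priority already in use by some state of $A_1 \cup A_2$; then the size is $n_1 + n_2 + 1$ and the index is $\max(k_1, k_2)$. Note that if each $\De_j(a)$ consists of singleton sets, then so does $\Deun(a_I)$, so $\bbAun$ is nondeterministic; by contrast, $\Deit(a_I)$ will generally contain two-element sets, which is consistent with the fact that $\bbAit$ is only claimed to be alternating.

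For correctness, my plan is to compare $\eloi$'s winning regions in the acceptance games via the invariant that \emph{after the initial round}, every basic position $(s, a)$ visited has $a \in A_j$ for one fixed $j \in \{1,2\}$, and the remaining play is literally a play of $\G(\bbA_j, \bbS)$. For $\bbAun$, the index $j$ is chosen by $\eloi$ herself via her move $\Phi \in \Deun(a_I)$, so she wins from $(s_0, a_I)$ iff she wins from $(s_0, a_{I,j})$ in some $\G(\bbA_j, \bbS)$. For $\bbAit$, $\eloi$'s only informative choice $\Phi_1^* \cup \Phi_2^*$ commits her to a pair of first moves simultaneously, and it is $\abel$'s subsequent pick of $\phi \in \Phi_1^* \cup \Phi_2^*$ that determines $j$: the continuation from $\phi \in \Phi_j^*$ is a play of $\G(\bbA_j, \bbS)$ extending the first move $\Phi_j^*$ of $\eloi$. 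Hence $\eloi$ wins $\G(\bbAit, \bbS)$ from $(s_0, a_I)$ iff she has winning first moves $\Phi_j^* \in \De_j(a_{I,j})$ in \emph{both} $\G(\bbA_j, \bbS)$, which is exactly the joint acceptance of $(\bbS, s_0)$ by $\bbA_1$ and $\bbA_2$. The priority attached to the single visit of $a_I$ plays no role in the parity condition on infinite matches.

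The main technical point to verify will be that once the match has entered $\bbA_j$-mode, $\eloi$ may keep her local bisimulation relations $Y$ entirely within $S \times A_j$, so that the invariant propagates. This uses standardness twice: first to read off $\F A_j \sse \F A$, which guarantees $\phi \in \F A_j$ whenever $\abel$ picks $\phi$ from a $\Phi \in \De_j(\cdot)$; and second, combined with weak pullback preservation, to show that any witness $\psi \in \F Y$ for $(\si(s), \phi) \in \Fb Y$ with $\phi \in \F A_j$ lies in fact in $\F\bigl(Y \cap (S \times A_j)\bigr)$, by applying $\F$ to the pullback of the projection $\pi_2\colon Y \to A$ along the inclusion $A_j \hookrightarrow A$. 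This lets $\eloi$ prune her $Y$ to $Y \cap (S \times A_j)$ without losing admissibility, which is exactly what the invariant requires. All remaining work is routine game-theoretic bookkeeping translating back and forth between history-free winning strategies in the compared acceptance games.
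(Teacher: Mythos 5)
Your automata are literally the paper's $\bbAun$ and $\bbAit$ from Definition~\ref{def:unionit} (fresh initial state, $\Deun(*) = \De_1(a^1_I)\cup\De_2(a^2_I)$, $\Deit(*) = \{\Phi_1\cup\Phi_2 \mid \Phi_j \in \De_j(a^j_I)\}$, inherited priorities), and your correctness argument --- transferring $\eloi$'s strategies back and forth between $\G(\bbAun,\bbS)$, $\G(\bbAit,\bbS)$ and the component games $\G(\bbA_j,\bbS)$ --- is the same strategy-restriction argument the paper gives and declares straightforward. The only substantive difference is that you make explicit the one nontrivial step the paper elides, namely pruning $\eloi$'s relations to $S\times A_j$ so that play stays in one component; your justification via the restriction property of $\Fb$ (the paper's Proposition~\ref{f:st}(2)) does invoke weak pullback preservation beyond the proposition's literal hypothesis of ``some set functor'', but this is consistent with the paper's standing assumptions and with the hypotheses of Theorem~\ref{t:clos}, where the proposition is used.
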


Before we prove the proposition we define the automata
$\bbAun$ and $\bbAit$.

\begin{definition}\label{def:unionit}
Let $\bbA_1 = \struc{A_1, a_{I}^1,\Delta_1, \Om_1}$ and 
$\bbA_2$ $=$ $\struc{A_2, a_{I}^2, \Delta_2, \Om_2}$ be two parity 
$\F$-automata.
We will define their \emdef{sum} $\bbAun$ and \emdef{product} $\bbAit$.

Both of these automata will have the
\emdef{disjoint union} $A_{12} := \{ * \} \uplus A_{1} \uplus A_{2}$
as their collection of states.
Also, the parity function $\Om$ will be the same for both automata:
\[
\Om(a) := 
\left\{ \begin{array}{ll}
0 & \mbox{if } a = *,
\\ \Om_i(a) & \mbox{if } a \in A_{i}.
\end{array} \right.
\]
The only difference between the automata lies in the transition functions,
which are defined as follows:
\[
\Deun(a) := 
\left\{ \begin{array}{ll}
\De_{1}(a^1_I) \cup \De_{2}(a^2_I) & \mbox{if } a = *
\\ \Delta_i(a)                   & \mbox{if } a \in A_{i},
\end{array}\right.
\]
\[
\Deit(a) := 
\left\{ \begin{array}{ll}
\{ \Phi_{1} \cup \Phi_{2} \mid \Phi_{i} \in \De_{i}(\ai^{i}) \}
                & \mbox{if } a = *
\\ \Delta_i(a)  & \mbox{if } a \in A_{i}.
\end{array}\right.
\]

Finally, we put
$\bbAun := \struc{A_{12},a_I,\Deun,\Om}$ and 
$\bbAit := \struc{A_{12},a_I,\Deit,\Om}$.
\end{definition}

Let us now turn to the proof of Proposition~\ref{prop:closeunion}.
\begin{proof}
The automata $\bbAun$ and $\bbAit$ are constructed as defined
above in Definition~\ref{def:unionit}. 
Clearly $\bbAun$ and $\bbAit$ meet the size requirements stated
in the proposition. Furthermore it is easy to see that
$\bbAun$ is nondeterministic if $\bbA_1$ and $\bbA_2$
are nondeterministic automata.
We only show that
$L(\bbAun)=L(\bbA_1) \cup L(\bbA_2)$, 
the other statements of the proposition admit similarly straightforward
proofs. It suffice to show that
$\bbAun$ accepts an arbitrary pointed $\F$-coalgebra
$(\bbS,s)$ iff $\bbA_1$ or $\bbA_{2}$ 
accepts $(\bbS,s)$. Let $(\bbS,s)$ be a pointed $\F$-coalgebra.
and suppose first that the automaton $\bbAun$ accepts $(\bbS,s)$.
Hence by definition, $\eloi$ has a winning strategy $f$ in the game
$\G:=\G(\bbAun,\bbS)$ starting from position $(s,*)$.
Let $i$ be such that $f(*,s) \in \Delta(\ai^{i})$.
It is then straightforward to verify that $f$, restricted to $\eloi$'s
positions in $\G(\bbA_{i},\bbS)$, is a winning strategy for $\eloi$ 
from position $(s,\ai^{i})$.
From this it is immediate that $\bbA_{i}$ accepts $(\bbS,s)$.
Conversely, suppose that $\bbA_{i}$ accepts $(\bbS,s)$, and let $g$ be
a winning strategy for $\eloi$ in the game $\G(\bbA_{i},\bbS)$.
Then in the game $\G(\bbAun,\bbS)$ starting at $(s,*)$, let $\eloi$
start with playing $g(s,\ai^{i}) \in \Deun(*)$, and from then on, play
her strategy $g$.
It is again straightforward to check that this constitutes a winning
strategy for $\eloi$.
\end{proof}

\subsection{Closure under projection}
\label{ss:clpr}

\newcommand{\prF}{\pi_{\F}}

In this subsection we shall prove that the recognizable $\F$-languages are 
closed under {\em existential} projection.
In order to make this notion precise, fix a set functor $\F$ and a set $C$ of
colours.
Recall from Example~\ref{ex:BTK} that we may identify $\F_{C}$-coalgebras
with $C$-coloured $\F$-coalgebras:
Given an $\F_{C}$-coalgebra $\struc{S,\si}$, we may split the coalgebra map $\si: S
\to \F_{C} S$ into two parts, the colouring $\si_{C}:S \to C$ and the
$\F$-coalgebra map $\si_{\F}: S \to \F S$.
Hence, with each $\F_{C}$-coalgebra $\bbS = \struc{S,\si}$ we may associate
its \emph{$\F$-projection} $\prF\bbS := \struc{S,\si_{\F}}$, and likewise for
pointed coalgebras.

A possibility for defining the existential $\F$-projection $\prF L$ of an
$\F_{C}$-language $L$ would be the class
\begin{equation}
\label{eq:nprF}
\{ \prF\bbS \mid \bbS \mbox{ in } L \}.
\end{equation}
Note that we use pseudo-set notation here --- recall that $L$ may be a class
rather than a set.
It is, however, not difficult to see that this language will in general not
be closed under bisimilarity.
If we take the position that bisimilar states represent the \emph{same}
process, this means that (\ref{eq:nprF}) is not the right notion.
This leads to the definition of the existential $C$-projection $\prF L$ of 
$L$ as the closure of (\ref{eq:nprF}) under $\F$-bisimilarity.

\begin{definition}
Let $\F$ be some set functor, let $C$ be some set, and let $L$ be an
$\F_{C}$-language.
The \emdef{existential $\F$-projection} of $L$, notation: $\prF L$, consists
of all pointed $\F$-coalgebras $(\struc{S,\si},s)$ for which there is an
$\F_{C}$-coalgebra $(\struc{S',\ga,\si'},s')$ in $L$ such that
$\struc{S,\si},s \bis_{\F} \struc{S',\si'},s'$.
\end{definition}

In pseudo-set notation we could write
\[
\prF L := \{ (\struc{S,\si},s) \mid
\struc{S,\si},s \bis_{\F} \struc{S',\si'},s' \mbox{ for some }
(\struc{S',\ga,\si'},s') \mbox{ in } L \}.
\]

\begin{remark}
This definition is in accordance with standard usage.
In the case of binary trees, we are dealing with two alphabets $C$ and $D$.
Given a class $K$ of $C\times D$-labeled binary trees, one defines the
\emph{$D$-projection} of this class as the class of $D$-labeled binary
trees $\struc{2^{*},\tau_{D}: 2^{*} \to D}$ for which there is a map
(`$C$-colouring') $\tau_{C}: 2^{*}\to C$ such that the $C \times D$-labeled
binary tree $\tau: 2^{*} \to C \times D$ given by $\tau(s) =
(\tau_{C}(s), \tau_{D}(s))$ belongs to $K$.
No reference to bisimilarity is needed here due to the fact that two
labeled binary trees are bisimilar if and only if they are \emph{identical},
see Example~\ref{ex:bis}.

Second, for Kripke structures our notion of $C$-projection exactly corresponds
to the usual interpretation of  existential bisimulation quantifiers --- a
fact which can be used to prove that closure under projection of 
$\K_{C}$-automata implies uniform interpolation of the modal $\mu$-calculus.
We refer to \textsc{d'Agostino \& Hollenberg}~\cite{agos:logi00} for
more details.
\end{remark}

The main result of this section states that the class of recognizable
languages is closed under this operation.
We will show that, given an $\F_{C}$-automaton $\bbA$, we will define an
$\F$-automaton $\pi_C \bbA$ that accepts a given pointed $\F$-coalgebra
$(\struc{S,\si},s)$ iff there {\em exists} a bisimilar $\F$-coalgebra $(\struc{S',\si'},s')$
and a colouring $\gamma: S' \to C$ such that $(\struc{S',\gamma,\si'},s')$ is
accepted by $\bbA$. 
Before we start to prove this, let us say a word about \emph{universal}
projection. 

\begin{remark}
The universal $\F$-projection $\prF^{\forall}L$ of an $\F_{C}$-language $L$ is
defined dually:
\[
\prF^{\forall} L := \{ (\struc{S,\si},s) \mid
(\struc{S',\ga,\si'},s') \mbox{ in } L
\mbox{ whenever } \struc{S,\si},s \bis_{\F} \struc{S',\si'},s' 
\}.
\]
The question whether the class of recognizable languages is also closed under
universal projection is still open and closely related to the question
whether $\F$-recognizable languages, in general, 
are closed under complementation.
\end{remark}

We now turn to the proof that the recognizable languages are closed under
(existential) projection.
In the remainder of this section, all $\F$-automata are assumed to be
nondeterministic. 
To facilitate the presentation we will think of the transition function
$\Delta$ as a map $A \to \pwE\F A$ and the first component $\Phi$ of a 
strategy $(\Phi,Y)$ for $\eloi$ in an acceptance game $\G(\bbA,\bbS)$
will be regarded as a function of type $A \times S \to \F A$,
cf.~Remark~\ref{r:nd}.

The main result of this subsection is stated in the following proposition.

\begin{proposition}
\label{prop:closeproj}
\label{p:proj}
Let $\F$ be a set functor that preserves weak pullbacks.
For any nondeterministic parity $\F_{C}$-automaton $\bbA$ of size $n$ and
index $k$ we can construct a nondeterministic parity $\F$-automaton $\pi_\F
\bbA$ of size $n$  and index $k$, such that for every pointed $\F$-coalgebra
$(\bbS,s)$ the following are equivalent:
\begin{enumerate}[\em(1)]
\item $\pi_\F \bbA$ accepts $(\bbS,s)$,
\item $\bbA$ accepts an $\F_{C}$-coalgebra $(\struc{
S',\gamma,\sigma'},s')$ 
such that $(\struc{S',\sigma'},s')$ and $(\bbS,s)$ are bisimilar.
\end{enumerate}
\end{proposition}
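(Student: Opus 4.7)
The plan is to use the $C$-chromatic $\F$-companion $\bbA_{C} = \struc{A,\ai,C,\De_{C},\Om}$ of $\bbA$ provided by Fact~\ref{f:2:1}; since $\bbA$ is nondeterministic I view $\De_{C}$ as a function $A \times C \to \pwE \F A$ (cf.~Remark~\ref{r:nd}). I define
\[
\prF\bbA := \struc{A,\ai,\De^{\pi},\Om}, \qquad \De^{\pi}(a) := \bigcup_{c \in C}\De_{C}(a,c).
\]
Size and index evidently agree with those of $\bbA$, and $\prF\bbA$ is nondeterministic. Intuitively, $\prF\bbA$ lets $\eloi$ choose a colour on the fly at every round; the proof will show that her choices can be turned into a coherent colouring of a bisimilar coalgebra, and conversely.

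For the direction (2) $\Rightarrow$ (1), suppose $\bbA_{C}$ accepts $(\bbS',s') = (\struc{S',\ga,\si'},s')$, and let $Z \sse S \times S'$ be an $\F$-bisimulation between $\bbS$ and $\prF\bbS'$ with $(s,s') \in Z$. Fix a positional winning strategy $(\Phi',Y')$ for $\eloi$ in $\G(\bbA_{C},\bbS')$. I let $\eloi$ play a shadow match in $\G(\bbA_{C},\bbS')$, maintaining the invariant that every basic position $(t,a)$ reached in the real match in $\G(\prF\bbA,\bbS)$ is paired with a shadow position $(t',a)$ satisfying $(t,t') \in Z$. At $(t,a)$ she plays $\Phi'(t',a) \in \De_{C}(a,\ga(t')) \sse \De^{\pi}(a)$ and responds with $Y := Z \circ Y'(t',\Phi'(t',a)) \sse S \times A$. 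Combining $(\si(t),\si'(t')) \in \Fb(Z)$, $(\si'(t'),\Phi'(t',a)) \in \Fb(Y'(t',\Phi'(t',a)))$, and Fact~\ref{fact:wpb}(5) (distributivity of $\Fb$ over composition, where weak-pullback preservation enters), I obtain $(\si(t),\Phi'(t',a)) \in \Fb(Y)$. Any response of $\abel$ factors through the composition and yields a next basic position with a shadow partner, so the invariant is preserved and the two matches share the same $A$-trace.

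For the converse (1) $\Rightarrow$ (2), fix a positional winning strategy $(\Phi,Y)$ for $\eloi$ in $\G(\prF\bbA,\bbS)$. Using the Axiom of Choice I pick, for every basic position $(t,a) \in \Win_{\eloi}$: a colour $c(t,a) \in C$ with $\Phi(t,a) \in \De_{C}(a,c(t,a))$ (available by definition of $\De^{\pi}$), and an element $\hat\psi(t,a) \in \F(Y(t,\Phi(t,a)))$ projecting to $\si(t)$ and $\Phi(t,a)$ under $\F\pi_{S}^{Y}$ and $\F\pi_{A}^{Y}$ respectively (available because $(\si(t),\Phi(t,a)) \in \Fb(Y(t,\Phi(t,a)))$). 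I then define $\bbS' := \struc{S \times A, \ga, \si'}$ by $\ga(t,a) := c(t,a)$ and $\si'(t,a) := (\F\iota)(\hat\psi(t,a))$, where $\iota: Y(t,\Phi(t,a)) \hookrightarrow S \times A$ is the inclusion. Writing $\pi_{S}: S \times A \to S$ for the first projection, one verifies $\si \circ \pi_{S} = (\F\pi_{S}) \circ \si'$, so $\converse{\Graph{\pi_{S}}}$ is an $\F$-bisimulation linking $(\bbS,s)$ and $(\prF\bbS', (s,\ai))$. A dual shadow argument, with invariant that the $A$-coordinate of the current $S'$-position equals the current automaton state, shows that $\eloi$ wins $\G(\bbA_{C},\bbS')$ from $((s,\ai),\ai)$, so $\bbA$ accepts $(\bbS',(s,\ai))$ by Fact~\ref{f:2:1}.

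The main obstacle is the coalgebra-reconstruction step in the converse direction: defining $\si'$ on $S \times A$ from the locally chosen relations $Y(t,\Phi(t,a))$ in such a way that $\pi_{S}$ commutes with the coalgebra maps (hence yields a bisimulation) while the second coordinate continues to track automaton states faithfully enough to port $\eloi$'s strategy to the chromatic acceptance game. Both directions crucially rely on the good behaviour of $\Fb$ under composition (Fact~\ref{fact:wpb}(5)), which is exactly why weak pullback preservation is assumed.
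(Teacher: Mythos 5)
Your proposal follows essentially the same route as the paper's proof: the automaton $\pi_{\F}\bbA$ is defined identically (via the chromatic companion, with $\Delta_{\pi}(a)=\bigcup_{c}\Delta_{C}(c,a)$); your direction $(2)\Rightarrow(1)$ is the paper's Lemma~\ref{lem:closeproj1} with the bisimulation-invariance of acceptance (which the paper simply cites from~\cite{vene:auto04}) inlined as an explicit shadow-match argument using Fact~\ref{fact:wpb}(5); and your direction $(1)\Rightarrow(2)$ is exactly the paper's pair of Lemmas~\ref{lem:npower} and~\ref{lem:condsuffice} fused into one step. The paper packages the construction on $S\times A$ with the hatted relations $\hat{Y}=\{((u,b),b)\mid(u,b)\in Y\}$ under the name of \emph{scattered} strategies and \emph{strong} acceptance, and only then extracts the colouring; your direct choice of a colour $c(t,a)$ per pair works for the same reason, namely that putting the automaton state into the carrier dissolves the clash between distinct positions $(s,a_{1})$ and $(s,a_{2})$ that blocks a naive colouring.

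There is one small but real omission in your converse direction: you define $\si'(t,a)$ and $\ga(t,a)$ only for $(t,a)\in\Win_{\eloi}(\G(\pi_{\F}\bbA,\bbS))$, so $\bbS'$ is not yet a coalgebra, and $\converse{\Graph{\pi_{S}}}$ as you invoke it is not a bisimulation, since the lifting condition must hold at \emph{every} pair $(t,(t,a))$, including non-winning ones. The paper patches precisely this point by setting $\ol{\si}(t,a):=\ti{\si}(t,a)=(\F\kappa_{a})(\si(t))$ (the coproduct structure on $S\times A$) at positions outside $\Win_{\eloi}$, which keeps $\pi_{S}$ a total coalgebra morphism, and by assigning an arbitrary colour where no suitable $c$ exists. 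With that one-line completion (your witness $\hat{\psi}(t,a)$ handles winning positions; $(\F\kappa_{a})(\si(t))$ handles the rest, since $\F(\pi_{S}\circ\kappa_{a})=\F\,\id_{S}$), your argument goes through verbatim, including the ported scattered strategy for the chromatic game.
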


The remainder of this section is devoted to the proof of this proposition.
First we define the automaton $\pi_C \bbA$ and then we show that it meets the
requirements of the proposition.

\begin{definition}\label{def:automataproj}
Let $C$ be a set, $\bbA = \struc{A,a_I,\Delta,\Om}$ be a parity 
$\F_{C}$-automaton and $\bbA_C = \struc{A,a_I,C,\Delta_C,\Om}$ its $C$-chromatic 
$\F$-companion, see Fact~\ref{f:2:1}. 
Then we define the \emdef{$\F$-projection} $\pi_\F \bbA := \struc{A,a_I,
\Delta_{\pi},\Om}$
where $\Delta_{\pi}(a) := \bigcup_{c \in C} \Delta_C(c,a)$.
\end{definition}
  
\begin{lemma}\label{lem:closeproj1}
If $\bbA$ accepts the $\F_{C}$-coalgebra $(\bbS,s) := (\struc{S,\gamma,\sigma},
s)$ then $\pi_\F \bbA$ accepts $(\bbS^{\pi},s) := (\struc{S,\sigma},s)$.
\end{lemma}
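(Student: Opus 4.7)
The plan is to lift a winning strategy for $\eloi$ in the acceptance game for the chromatic companion $\bbA_C$ on $(\bbS,s)$ directly to a winning strategy in the acceptance game for $\pi_\F\bbA$ on $(\bbS^{\pi},s)$. The key structural observation is that $\Delta_{\pi}(a) = \bigcup_{c\in C}\Delta_C(c,a)$ is a \emph{superset} of each $\Delta_C(c,a)$, so any legal static move in the chromatic game at a coloured state remains legal in the projected game, while the dynamic stage of a round and the parity condition are literally identical in both games (both depend only on $\sigma$ and on $\Om$).

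First I would invoke Fact~\ref{f:2:1} to pass from acceptance of $(\bbS,s)$ by $\bbA$ to acceptance by its chromatic companion $\bbA_C$. Then history-free determinacy of parity games (Fact~\ref{f-2-3}) supplies a positional winning strategy $(\Phi,Y)$ for $\eloi$ in $\G_C(\bbA_C,\bbS)$ starting from $(s,\ai)$, with $\Phi: S\times A \to \F A$ satisfying $\Phi(s,a) \in \Delta_C(\gamma(s),a)$, and $Y: S \times \F A \to \Pow(S\times A)$ satisfying $(\sigma(s),\phi) \in \Fb Y(s,\phi)$. (Here I am using the nondeterministic simplification from Remark~\ref{r:nd}.)

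Next I claim that the very same pair $(\Phi,Y)$ is a legitimate winning strategy for $\eloi$ in $\G(\pi_\F\bbA,\bbS^{\pi})$ from $(s,\ai)$. Legitimacy of the static move is immediate from $\Phi(s,a)\in\Delta_C(\gamma(s),a)\subseteq \Delta_\pi(a)$. Legitimacy of the dynamic move is immediate because the $\F$-transition map of $\bbS^{\pi}$ is exactly $\sigma$, so the condition $(\sigma(s),\phi)\in\Fb Y(s,\phi)$ is unchanged. To see that the strategy is winning, I observe that any $(\Phi,Y)$-conform match of $\G(\pi_\F\bbA,\bbS^{\pi})$ is, step by step, also a $(\Phi,Y)$-conform match of $\G_C(\bbA_C,\bbS)$: the basic positions live in the same set $S\times A$, $\eloi$'s moves are literally the same, and $\abel$'s responses are chosen from the same sets dictated by $\Phi(s,a)$ and $Y(s,\phi)$. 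Hence finite matches end with $\abel$ stuck (since $(\Phi,Y)$ never gets $\eloi$ stuck, by legitimacy in the chromatic game), and infinite matches produce the same $A$-stream $\mu\bpr_A$, whose parity outcome is determined by the common $\Om$.

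There is no real obstacle in this direction; the statement is essentially a bookkeeping lemma, the content of which is that enlarging the transition function (from $\Delta_C(\gamma(s),\cdot)$ to $\Delta_\pi$) can only help $\eloi$, while the dynamic/coalgebraic stage of the game and the parity condition are preserved verbatim under forgetting the colouring. The genuinely hard direction of Proposition~\ref{p:proj} is the converse, where one must exhibit a bisimilar $\F_C$-coalgebra together with a witnessing colouring, and that will require extracting the colours from $\eloi$'s strategy in $\G(\pi_\F\bbA,\bbS^{\pi})$ and closing off under bisimilarity.
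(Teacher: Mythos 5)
Your proposal is correct and takes essentially the same route as the paper: the paper's entire proof is the single observation that every move of $\eloi$ in the game for $\bbA_C$ remains a legitimate move in the acceptance game for $\pi_\F \bbA$, which is exactly the strategy-transfer argument you spell out (via $\Delta_C(\gamma(s),a) \subseteq \Delta_\pi(a)$ together with the fact that the dynamic stage and the parity condition are untouched by forgetting the colouring). Your version merely makes explicit the bookkeeping the paper leaves implicit, including the appeal to Fact~\ref{f:2:1} and to history-free determinacy.
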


\begin{proof}
The proof is straightforward. 
One has to realize that all the moves of $\eloi$ in the game for $\bbA_C$ are still legitimate moves of $\eloi$
  in the $\pi_\F \bbA$ acceptance game.
\end{proof}
The converse of this lemma however fails in general.

Let $\bbA$ be some $\F_{C}$-automaton and let $(\struc{S,\si},r)$ be a pointed 
$\F$-coalgebra that is accepted by $\pi_\F \bbA$. 
Then we know that $\eloi$ has a winning strategy $(\Phi,Y)$ in $\G(\pi_\F \bbA,
\bbS)$ from position $(r,a_I)$. 
We would like to ensure that $(\Phi,Y)$ is also a winning strategy in 
$\G(\bbA_C,\bbS)$ by defining a coloring $\gamma: S \to C$ as follows: 
$\gamma(s):=c$ if there is a match of $\G(\pi_\F \bbA,\bbS)$, starting from 
position $(r,a_I)$ and conform $\eloi$'s strategy, in which a position 
$(s,a)$ occurs and $\Phi_{s,a} \in \Delta_C(c,a)$.
In general, however, there may be \emdef{distinct} positions $(s,a_1)$
and $(s,a_2)$ that $\abel$ may force the match to pass through, and it may
not be possible to find a single $c \in C$ such that both $\Phi_{s,a_1} \in
\Delta(c,a_1)$ and $\Phi_{s,a_2} \in \Delta(c,a_2)$. 
To avoid this problem we introduce now the notion of \emph{strong}
acceptance.

\begin{definition}
\label{def:scattered}
Let $\bbA$ be a parity $\F$-automaton and $(\bbS,r)$ a pointed $\F$-coalgebra.
A history-free strategy $(\Phi,Y)$ for $\eloi$ in the game $\G(\bbA,\bbS)$ 
initialized at $(r,\ai)$ is called \emdef{scattered} if the relation 
\[
\{ (r,\ai) \} \cup \bigcup
\{ Y_{s,\phi} \sse S \times A \mid (s,\phi) \in \Win_{\eloi} \}
\]
is functional (that is, for every $s \in S$ there is at most one $a \in A$
such that the pair $(s,a)$ belongs to the relation).
Furthermore we say that $\bbA$ {\em strongly accepts} the pointed coalgebra 
$(\bbS,r)$ if $\eloi$ has a scattered winning strategy in the game 
$\G(\bbA,\bbS)$ initialized at position $(r,a_I)$. 
\phantom{blubbblubb}
\end{definition}

As we will see now, strong acceptance is the key to find colorings of 
pointed $\F$-coalgebras.

\begin{lemma}\label{lem:condsuffice}
Let $\bbA$ be an $\F_{C}$-automaton, and let $(\bbS,r)$ be a pointed 
$\F$-coalgebra that is strongly accepted by $\pi_\F \bbA$.
Then there is a $C$-colouring $\gamma:S \to C$ of $\bbS$ such that $\bbA$
accepts $(\struc{S,\gamma,\sigma},r)$.
\end{lemma}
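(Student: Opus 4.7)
The plan is to use the scattered winning strategy to extract a well-defined colouring, and then show that the very same strategy works in the chromatic acceptance game for $\bbA_{C}$.

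First, suppose $(\Phi,Y)$ is a scattered history-free strategy for $\eloi$ which is winning in $\G(\pi_{\F}\bbA,\bbS)$ from $(r,\ai)$. Consider the relation
\[
\alpha \coloneqq \{(r,\ai)\} \cup \bigcup \{ Y_{s,\phi} \mid (s,\phi)\in \Win_{\eloi}\} \sse S\times A.
\]
By scatteredness, $\alpha$ is a partial function from $S$ to $A$. Crucially, every basic position $(s,a)$ reachable in a $(\Phi,Y)$-conform play starting from $(r,\ai)$ lies in $\alpha$: this follows by a straightforward induction on the length of the play, using the fact that $(\Phi,Y)$, being winning from $(r,\ai)$, never leads out of $\Win_{\eloi}$, so the relations $Y_{s,\phi}$ at reachable positions all qualify for the union. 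Hence at every reachable $(s,a)$ we have $a = \alpha(s)$.

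Next, define the colouring $\gamma: S \to C$ as follows. For $s \in \domain(\alpha)$, set $a \coloneqq \alpha(s)$; since $\Phi_{s,a} \in \Delta_{\pi}(a) = \bigcup_{c\in C} \Delta_{C}(a,c)$, pick (by choice) some $c_{s} \in C$ with $\Phi_{s,a} \in \Delta_{C}(a,c_{s})$, and put $\gamma(s) \coloneqq c_{s}$. For $s \notin \domain(\alpha)$, put $\gamma(s)$ arbitrarily. This yields a $C$-coloured $\F$-coalgebra $\bbS^{\gamma}\coloneqq \struc{S,\gamma,\sigma}$.

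I now claim that $(\Phi,Y)$ is also a winning strategy for $\eloi$ in the chromatic acceptance game $\G_{C}(\bbA_{C},\bbS^{\gamma})$ initialized at $(r,\ai)$. The only difference between $\G(\pi_{\F}\bbA,\bbS)$ and $\G_{C}(\bbA_{C},\bbS^{\gamma})$ at $\eloi$'s static move is that in the latter, at a basic position $(s,a)$ she must pick $\Phi_{s,a} \in \Delta_{C}(a,\gamma(s))$; her dynamic moves and $\abel$'s moves are unaffected, and so are the acceptance conditions (both are given by $\Omega$ applied to the second coordinate). But for any basic position $(s,a)$ reachable along a $(\Phi,Y)$-conform play we have $a = \alpha(s)$, and by construction of $\gamma$ we chose $\gamma(s)$ precisely so that $\Phi_{s,\alpha(s)}\in \Delta_{C}(\alpha(s),\gamma(s))$. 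Hence $\Phi$ provides legitimate moves at every reachable position of the new game, and all infinite conform plays yield the same $A$-sequences as before, so they are won by $\eloi$. This shows that $\bbA_{C}$ accepts $\bbS^{\gamma}$ at $r$, and by Fact~\ref{f:2:1} we conclude that $\bbA$ accepts $(\bbS^{\gamma},r)$.

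The main obstacle is the bookkeeping in the first step: one must verify that \emph{every} reachable basic position $(s,a)$ during conform plays falls inside $\alpha$, so that the colour $\gamma(s)$ we picked relative to the unique $\alpha(s)$ really handles all of $\eloi$'s moves. Scatteredness guarantees uniqueness, and the winning property of the strategy guarantees that reachable intermediate positions $(s,\phi)$ lie in $\Win_{\eloi}$, ensuring their $Y_{s,\phi}$ contributes to the defining union of $\alpha$.
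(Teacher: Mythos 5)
Your proof is correct and is essentially the paper's own argument: scatteredness yields the partial function $\alpha$, the colour at each $s$ is chosen so as to legitimize $\Phi_{s,\alpha(s)}$ in the chromatic game, and the conclusion follows via Fact~\ref{f:2:1}; you even make explicit the reachability induction that the paper's proof leaves implicit. One cosmetic repair: for $s \in \domain(\alpha)$ that is \emph{not} reachable in a conform play, the pair $(s,\alpha(s))$ need not be a winning position, so $\Phi_{s,\alpha(s)}$ need not lie in $\Delta_{\pi}(\alpha(s))$ and your chosen $c_{s}$ may fail to exist --- adopt the paper's fallback clause (``if there is no such $c$, take an arbitrary colour''), which is harmless since legitimacy of $\Phi$ is only needed at reachable positions.
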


\begin{proof}
Let $(\Phi,Y)$ be a scattered winning strategy for $\eloi$ in 
$\G(\pi_\F \bbA,
\bbS)$. 
According to the definition of scatteredness we can assign to every $s \in S$
a state $a_{s} \in A$ such that $a_{r} = \ai$, and if $(s,a) \in Y_{s,\phi}$ 
for some winning position $(s,\phi)$, then $a = a_{s}$.
Then we define a function $\gamma: S \to C$ as follows.
If there is a $c \in C$ such that $\Phi_{s,a_{s}} \in \Delta_C(c,a)$, then
we pick such a $c$ and put $\gamma(s) := c$; if there is no such $c$, then
we define $\gamma(s) := d$ for some arbitrary $d \in C$.
It follows from these definitions that $(\Phi,Y)$ is a strategy for $\eloi$
in $\G(\bbA_{C},\bbS\oplus\gamma)$ that guarantees her winning every match
starting from $(r,\ai)$.
From this it is immediate that $\bbA$ accepts $(\struc{S,\gamma,\sigma},r)$.
\end{proof}

The next lemma shows that if a pointed coalgebra is accepted by some automaton,
but not strongly so, then we can always find a bisimilar pointed coalgebra that
is strongly accepted.  

\begin{lemma}\label{lem:npower}
Let $\bbA$ be an $\F$-automaton, and let $(\bbS,r)$ be a pointed 
$\F$-coalgebra that is accepted by $\bbA$. 
Then $\bbA$ strongly accepts some pointed $\F$-coalgebra $(\ol{\bbS},\ol{r})$
which is bisimilar to $(\bbS,r)$.
\end{lemma}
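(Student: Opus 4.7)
The plan is to build $\ol{\bbS}$ as an unraveling of $\bbS$ along $\eloi$'s winning strategy, so that each state of $\ol{\bbS}$ comes tagged with the unique automaton state that $\eloi$ intends to use there; scatteredness will then hold by design. By Fact~\ref{f-2-3} I fix a positional winning strategy $(\Phi, Y)$ for $\eloi$ in $\G := \G(\bbA,\bbS)$ from $(r,\ai)$; since $\bbA$ is nondeterministic I view it (see Remark~\ref{r:nd}) as a pair $\Phi: S \times A \to \F A$ and $Y: S \times A \to \power(S \times A)$ satisfying $(\si(s), \Phi_{s,a}) \in \Fb Y_{s,a}$ and $Y_{s,a} \sse \Win_{\eloi}$ for every $(s,a) \in \Win_{\eloi}$.

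I would take the carrier $\ol{S} := \Win_{\eloi} \cap (S \times A)$ with distinguished point $\ol{r} := (r,\ai)$. For each $(s,a) \in \ol{S}$, membership $(\si(s),\Phi_{s,a}) \in \Fb Y_{s,a}$ yields a witness $\xi_{s,a} \in \F Y_{s,a}$ with $(\F\pi_{1})(\xi_{s,a}) = \si(s)$ and $(\F\pi_{2})(\xi_{s,a}) = \Phi_{s,a}$, where $\pi_{1}, \pi_{2}$ are the projections of $Y_{s,a}$ into $S$ and $A$; fix such a $\xi_{s,a}$ using Choice. Writing $\iota_{s,a}: Y_{s,a} \hookrightarrow \ol{S}$ for the inclusion, define $\ol{\si}(s,a) := (\F \iota_{s,a})(\xi_{s,a}) \in \F\ol{S}$, and set $\ol{\bbS} := \struc{\ol{S},\ol{\si}}$.

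Two things then need to be verified, both essentially diagram chases via Fact~\ref{fact:wpb}. First, the projection $p: \ol{S} \to S$, $(s,a) \mapsto s$, is a coalgebra morphism: since $p \circ \iota_{s,a} = \pi_{1}$, one has $(\F p)(\ol{\si}(s,a)) = (\F\pi_{1})(\xi_{s,a}) = \si(s) = \si(p(s,a))$, so $\Graph{p}$ is a bisimulation and hence $(\ol{\bbS},\ol{r}) \bis_{\F} (\bbS,r)$. Second, $\bbA$ strongly accepts $(\ol{\bbS},\ol{r})$: I would equip $\eloi$ with the strategy $(\ol{\Phi}, \ol{Y})$ in $\G(\bbA,\ol{\bbS})$ defined by $\ol{\Phi}_{(s,a),a} := \Phi_{s,a}$ and $\ol{Y}_{(s,a),\phi} := \{((t,b),b) \mid (t,b) \in Y_{s,a}\}$; pushing $\xi_{s,a}$ along the map $(t,b) \mapsto ((t,b),b)$ witnesses $(\ol{\si}(s,a),\phi) \in \Fb \ol{Y}_{(s,a),\phi}$, so the moves are legitimate. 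By construction every position visited under this strategy has the form $((t,b),b)$, which makes the strategy scattered; and any infinite $\G(\bbA,\ol{\bbS})$-play projects onto a $(\Phi,Y)$-conform infinite $\G$-play with the same sequence of $A$-states, hence is won by $\eloi$.

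The main obstacle is conceptual rather than technical: once the carrier is taken to be the set of $\eloi$-winning basic positions instead of $S$ itself (so that $\abel$'s automaton-state choices become ``baked into'' the new coalgebra), the bisimulation and the scatteredness of the lifted strategy both fall out almost automatically from naturality of $\F$ and the basic identities of relation lifting.
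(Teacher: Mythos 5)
Your proof is correct and follows essentially the same construction as the paper: tag each state with the automaton state $\eloi$ intends to use there, define the new transition map from witnesses of her positional winning strategy, obtain bisimilarity from the projection being a coalgebra morphism, and lift the strategy to the ``hatted'' relations $\{((t,b),b) \mid (t,b) \in Y\}$, whose shape immediately yields scatteredness. The only deviations are cosmetic --- the paper takes the full carrier $S \times A$ and fills in non-winning pairs via the $A$-fold coproduct so that the projection is total, where you restrict the carrier to winning basic positions, and you certify legitimacy by pushing explicit elements of $\F Y_{s,a}$ forward along functions rather than invoking $\Fb(R \circ Q) = \Fb R \circ \Fb Q$, which incidentally makes your version of this particular lemma independent of weak pullback preservation.
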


\begin{proof}
The coalgebra  $\ol{\bbS}$ will be based on the set $\ol{S} := S \times A$,
and as the selected state $\ol{r}$ of $\ol{\bbS}$ we take the pair $(r,\ai)$.
For the definition of the coalgebra structure $\ol{\si}$, we need some
auxiliary definitions.

First we endow the set $\ol{S}$ with a coalgebra map $\ti{\si}$ such that the
structure $\ti{\bbS} := (\ol{S},\ti{\si})$ is isomorphic to the 
$A$-fold coproduct (`disjoint union') $\coprod_{a \in A} \bbS$. 
For the exact definition of the coproduct of coalgebras 
the reader is referred to \cite{rutt:univ00}.

The canonical injections into $\ti{\bbS}$ are given by the functions
\begin{eqnarray*}
		\kappa_a: S & \to &  S \times A \\
		s & \mapsto & (s,a) 
\end{eqnarray*}  
for all $a \in A$. 

Now consider the first projection map $\pi_{S}: S \times A \to S$.
From $\pi_{S}(s,a) = s$ it follows that 
\begin{equation}
\label{equ:kappa_pi}
	\pi_S \circ \kappa_a = \id_S \qquad \mbox{for all} \; a \in A.
\end{equation}
We are going to prove that 
\begin{equation}
\label{eq:picmor}
\pi_S:(\ol{S},\ti{s}) \to \bbS
\mbox{ is a coalgebra morphism}.
\end{equation}
That is, we will show that the following diagram commutes: \\
\centerline{
\xymatrix{ S \times A \ar[d]_{\ti{\si}} 
      \ar[r]^(.6){\pi_S} & S \ar[d]_{\si} \\
      \F (S \times A) \ar[r]_(.6){\F \pi_S} & \F S}
      }
In order to prove the commutativity of the diagram, take an arbitrary $(s,a)
\in S \times A$.
We obtain the following sequence of identities:
\begin{eqnarray*}
	\F \pi_S (\ti{\si} (s,a)) & = & \F \pi_S (\ti{\si}(
	\kappa_a(s))) \\
	& \stackrel{\mbox{\tiny{$\kappa_a$ coalg. morph.}}}{=} & 
	\F \pi_S( \F \kappa_a (\sigma(s))) \\
	& = & \F (\pi_S \circ \kappa_a)(\si(s)) \\
	& \stackrel{\mbox{\tiny{(\ref{equ:kappa_pi})}}}{=} & 
    \sigma (s) \\
    & = & \sigma(\pi_S (s,a)),
\end{eqnarray*}
which proves (\ref{eq:picmor}).

Second, given a relation $R \sse S \times A$, define the relation $\hat{R} 
\sse \ol{S}\times A$ by putting
\[
\hat{R} := \{ ((s,a),a) \mid (s,a) \in R \}.
\]
Then clearly we have that $R = \converse{\Graph{\pi_{S}}} \circ \hat{R}$,
and hence,
\begin{equation}
\label{eq:3:1}
\Fb R = \converse{\Graph{\F \pi_{S}}} \circ \Fb \hat{R}.
\end{equation}

We are now prepared to turn to the proof of the lemma.
Assume that $(\Phi,Y)$ is a positional strategy for $\eloi$ in the
acceptance game $\G(\bbA,\bbS)$ which is winning for any match starting at
the position $(\ai,r)$.

For the definition of $\ol{\si}: \ol{S} \to \F \ol{S}$, consider an 
arbitrary element $(s,a) \in \ol{S}$, and distinguish cases.
If $(s,a)$ is a \emph{winning} position for $\eloi$ in the game 
$\G(\bbA,\bbS)$, then using (\ref{eq:3:1}), it follows from $(\si(s),\phi) 
\in \Fb Y$, that 
\[ (\si(s),\phi) \in 
  (\converse{\Graph{\F \pi_S}}) \circ \Fb \hat{Y} .\] 
Hence we may take $\ol{\si}(s,a)$ to be some element $x \in \F \ol{S}$ such
that $(\si(s),x) \in (\converse{\Graph{\F \pi_S}})$, that is, $(\F\pi_{S})(x) =
\si(s)$, and $(x,\phi) \in \Fb \hat{Y}$.
If, on the other hand, $(s,a) \not\in \Win_{\eloi}$, then we simply put 
$\ol{\si}(s,a) := \ti{\si}(s,a)$.

We first check that $\pi_{S}$ is indeed an $\F$-coalgebra morphism
from $\ol{\bbS}$ onto $\bbS$.
Take an arbitrary element $(s,a)$ in $\ol{\bbS}$, then we have to check that
$(\F\pi_{S})(\ol{\si}(s,a)) = \si(\pi_{S}(s,a))$.
In case $(s,a) \not\in \Win_{\eloi}$ this follows from the facts that
$\ol{\si}(s,a) = \ti{\si}(s,a)$ and the fact that $\ti{\si}$ is a
coalgebra morphism.
In case $(s,a) \in \Win_{\eloi}$ the identity follows by definition of
$\ol{\si}(s,a)$.

Thus we have proved the first statement of the proposition.
For the second statement, define the strategy $(\ol{\Phi},\ol{Y})$ with
$\ol{\Phi}: \ol{S} \times A \to A$ and $\ol{Y}: \ol{S} \times \F A \to
\Sb{\ol{S}\times A}$ as follows:
\[
\begin{array}{llll}
\ol{\Phi}: & ((s,a),b)    & \mapsto & \Phi_{s,b}
\\
\ol{Y}:    & ((s,a),\phi) & \mapsto & \hat{Y}_{s,\phi}.
\end{array}
\]
Since all relations chosen by $\eloi$ are of the form $\hat{R}$, and all
elements of such relations are of the form $((s,a),b)$ with $a = b$, it
is obvious that the set $\{ ((s,\ai),\ai) \} \cup \bigcup \{ 
\hat{Y}_{s,\phi} \mid 
(s,\phi) \in \Win_{\eloi} \}$ is functional.
In other words, the strategy is scattered.

Thus it is left to prove that $(\ol{\Phi},\ol{Y})$ guarantees $\eloi$ to win
any match of $\G(\bbA,\ol{\bbS})$ starting from $(\ol{r},\ai)$.
To see why this is the case, consider an arbitrary position $((s,a),a)$ with
$(s,a) \in \Win_{\eloi}(\G(\bbA,\bbS))$, and abbreviate $\phi := \Phi_{s,a}$.
Then by definition, $\ol{\Phi}((s,a),a) = \phi$ and $\ol{Y}((s,a),\phi) = 
\hat{Y}_{s,\phi} = \{ ((t,b),b) \mid (t,b) \in Y_{s,\phi} \}$.
From this observation it is easy to derive that for any $\G(\bbA,\ol{\bbS})$
match $(\ol{r},\ai)((s_{1},a_{1}),a_{1})((s_{2},a_{2}),a_{2}) \ldots$ that is
conform the strategy $(\ol{\Phi},\ol{Y})$, the corresponding $\G(\bbA,\bbS)$ 
match $(r,\ai)(s_{1},a_{1})(s_{2},a_{2})\ldots$ is conform $(\Phi,Y)$.
And since this strategy was supposed to be winning for $\eloi$ from $(r,\ai)$, 
it follows that the $\G(\bbA,\ol{\bbS})$ match is, indeed, a win for $\eloi$.
This proves the second statement of the proposition.
\end{proof}

We are now ready to prove our main result.

\begin{proofof}{Proposition~\ref{prop:closeproj}}
The implication ($1 \Rightarrow 2$) is immediate by the Lemmas~\ref{lem:npower} 
and~\ref{lem:condsuffice}.
The other implication follows from Lemma~\ref{lem:closeproj1} and the
observation~\cite{vene:auto04} that $\F$-automata do not distinguish between 
bisimilar pointed $\F$-coalgebras.
\end{proofof}

Together with Theorem~\ref{thm:m} the proposition entails what we call closure
under (existential) projection.
\begin{corollary}
Let $\F$ be a set functor that preserves weak pullbacks.
Given an alternating parity $\F_{C}$-automaton $\bbA$ of size $n$ and
index $k$ we can construct a nondeterministic parity $\F$-automaton 
$\pi_\F \bbA^{\bl}$ such that the following are equivalent:
\begin{enumerate}
\item $\pi_\F\bbA^{\bl}$ 
accepts a pointed $\F$-coalgebra $(\struc{S,\si},\ro{s})$,
\item $\bbA$ accepts a pointed $\F_{C}$-coalgebra 
$(\struc{S',\gamma,\si'},\ro{s}')$
such that $(\struc{S,\si},\ro{s})$ 
and $(\struc{S',\si'},\ro{s}')$ are $\F$-bisimilar.
\end{enumerate}
The size of $\pi_\F \bbA^{\bl}$ is 
$2^{\mathcal{O}(n^2 + nk\log(nk))}$ and the
index of $\pi_\F \bbA^{\bl}$ is $\mathcal{O}(nk)$.
\end{corollary}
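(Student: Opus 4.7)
The plan is to derive the corollary as a straightforward composition of the two main constructions already established in the paper, namely Theorem~\ref{thm:m} (converting alternating to nondeterministic parity automata) and Proposition~\ref{prop:closeproj} (closure under existential $\F$-projection for nondeterministic automata). First, I would apply Theorem~\ref{thm:m} to the alternating parity $\F_{C}$-automaton $\bbA$ in order to obtain an equivalent nondeterministic parity $\F_{C}$-automaton $\bbA^{\bl}$. A minor point to check here is that $\F_{C}$ preserves weak pullbacks whenever $\F$ does; this is a standard closure property of weak-pullback-preserving functors under products with constants, mentioned just before Definition~\ref{d:ekpf}. The resulting automaton $\bbA^{\bl}$ has size $2^{\mathcal{O}(n^2 + nk\log(nk))}$ and index $\mathcal{O}(nk)$, and accepts exactly the same class of pointed $\F_{C}$-coalgebras as $\bbA$.

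Next, I would apply Proposition~\ref{prop:closeproj} to the nondeterministic parity $\F_{C}$-automaton $\bbA^{\bl}$, producing the nondeterministic parity $\F$-automaton $\pi_{\F}\bbA^{\bl}$. By that proposition, $\pi_{\F}\bbA^{\bl}$ accepts a pointed $\F$-coalgebra $(\struc{S,\si},\ro{s})$ if and only if $\bbA^{\bl}$ accepts some pointed $\F_{C}$-coalgebra $(\struc{S',\gamma,\si'},\ro{s}')$ for which $(\struc{S,\si},\ro{s})$ and $(\struc{S',\si'},\ro{s}')$ are $\F$-bisimilar. Chaining this with the equivalence of $\bbA$ and $\bbA^{\bl}$ on $\F_{C}$-coalgebras gives precisely the desired equivalence between conditions (1) and (2).

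For the complexity bounds, Proposition~\ref{prop:closeproj} preserves both the size and index of its input, so $\pi_{\F}\bbA^{\bl}$ inherits the bounds already established for $\bbA^{\bl}$, which are $2^{\mathcal{O}(n^2 + nk\log(nk))}$ and $\mathcal{O}(nk)$ respectively. No genuine obstacle is expected in this proof, since all of the technical content lives in the two results being combined; the only subtlety is the remark about weak-pullback preservation of $\F_{C}$, which is routine.
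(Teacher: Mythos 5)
Your proposal is correct and follows exactly the route the paper intends: the paper states the corollary as an immediate consequence of Theorem~\ref{thm:m} applied to the $\F_{C}$-automaton $\bbA$ followed by Proposition~\ref{prop:closeproj}, with the size and index bounds inherited unchanged since the projection construction preserves both. Your additional remark that $\F_{C}$ preserves weak pullbacks (by closure of weak-pullback-preserving functors under products with constant functors) is a point the paper leaves implicit, and you verify it correctly.
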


\section{Solution of the nonemptiness problem}
\label{s:6}

\noindent In this section we prove that every parity $\F$-automaton
accepts a \emph{finite} coalgebra, if it accepts a coalgebra at all.
The key result leading to this observation is that any
nondeterministic parity automaton with a nonempty language, actually
accepts a coalgebra that `lives inside the automaton', in the
following sense.

\begin{theorem}\label{thm:nonempti}
Let $\F$ be some weak pullback preserving set functor, and let
$\bbA = \struc{A,\ro{a},\De,\Omega}$ 
be a nondeterministic parity $\F$-automaton.
Then $\bbA$ accepts some pointed $\F$-coalgebra iff $\bbA$ accepts a pointed
$\F$-coalgebra $(\struc{S,\si},s_0)$ 
with $S \subseteq A$, $s_0=\ro{a}$ and $\si(s) 
\in \De(s)$ for all $s \in S$.
\end{theorem}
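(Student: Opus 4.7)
The right-to-left direction is immediate. For the converse, suppose that $\bbA$ accepts a pointed $\F$-coalgebra $(\bbT, t_{0})$, and assume, by Fact~\ref{f:wppst}, that $\F$ is standard. The plan is to introduce an auxiliary parity graph game $\G^{\sharp}$ played entirely on the state space $A$ of $\bbA$, and then to read off the required coalgebra from a history-free winning strategy of $\eloi$ in $\G^{\sharp}$.

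In $\G^{\sharp}$, the basic positions are the elements $a \in A$, belonging to $\eloi$ and carrying priority $\Om(a)$; from $a$ she picks a pair $(\phi, B)$ with $\phi \in \De(a)$, $B \sse A$, and $\phi \in \F B$, moving to an $\abel$-position from which $\abel$ selects some $a' \in B$ as the next basic position. The first main step is to show that $\bbA$-acceptance of $(\bbT, t_{0})$ implies that $\eloi$ wins $\G^{\sharp}$ from $a_{I}$. Using Fact~\ref{f-2-3}, fix a positional winning strategy $(\Phi, Y)$ for $\eloi$ in $\G(\bbA, \bbT)$ that wins from every position in $\Win_{\eloi}$. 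For each $(t, a) \in \Win_{\eloi}$, the legitimacy of $Y_{t, \Phi_{t,a}}$ supplies some $\psi_{t, a} \in \F Y_{t, \Phi_{t,a}}$ with $(\F\pi_{2})(\psi_{t,a}) = \Phi_{t,a}$, and standardness of $\F$ then yields $\Phi_{t,a} \in \F B_{t,a}$ for $B_{t,a} := \pi_{2}(Y_{t, \Phi_{t,a}})$. Let $\eloi$ play in $\G^{\sharp}$ while maintaining a shadow basic position $(t, a)$ of $\G(\bbA, \bbT)$: at $a$ she plays $(\Phi_{t,a}, B_{t,a})$, and any $a'$ picked by $\abel$ can be matched by some $(t', a') \in Y_{t, \Phi_{t,a}}$, which becomes the updated shadow. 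The $A$-sequences of the resulting $\G^{\sharp}$-play and of the shadow match of $\G(\bbA, \bbT)$ coincide, so the parity condition in $\G^{\sharp}$ is inherited from the winning match of $\G(\bbA, \bbT)$.

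By Fact~\ref{f-2-3} applied to $\G^{\sharp}$, fix next a history-free winning strategy $f$ for $\eloi$ from $a_{I}$, and write $f(a) = (\phi_{a}, B_{a})$. Let $S \sse A$ be the smallest subset containing $a_{I}$ and closed under $a \mapsto B_{a}$. Then $B_{a} \sse S$ and hence $\phi_{a} \in \F B_{a} \sse \F S$ for every $a \in S$ (by standardness); setting $\si(a) := \phi_{a}$ gives a coalgebra $\bbS = (S, \si)$ with $\si(s) \in \De(s)$ for all $s$, meeting the structural demands of the theorem. To show that $\bbA$ accepts $(\bbS, a_{I})$, I let $\eloi$ play the diagonal-preserving strategy in $\G(\bbA, \bbS)$ from $(a_{I}, a_{I})$: at each diagonal basic position $(a, a)$ she plays the static move $\phi_{a}$ together with the relation $\Id_{B_{a}} \sse S \times A$, a legitimate move because Fact~\ref{fact:wpb}(2) and $\phi_{a} \in \F B_{a}$ imply $(\phi_{a}, \phi_{a}) \in \Fb(\Id_{B_{a}})$. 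Every continuation remains diagonal, and the resulting infinite $A$-sequence $a_{I}a_{1}a_{2}\cdots$ is step-by-step an $f$-conform play of $\G^{\sharp}$, hence winning for $\eloi$ there.

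The key obstacle, and the reason a naive ``shadow-and-project'' argument fails, is that one cannot in general select representatives $t_{a} \in T$ for $a \in S$ in a way that is coherent along all paths reachable under $\eloi$'s winning strategy in $\G(\bbA, \bbT)$: the same $a \in S$ may be reached via distinct $(t, a)$-positions whose subsequent strategies $\Phi_{t, \cdot}$ need not agree. By requiring $\eloi$ in $\G^{\sharp}$ to commit, at each $a$, to an explicit base $B_{a}$ alongside her choice $\phi_{a} \in \De(a)$, this coherence problem is shifted onto the game itself, where it is resolved uniformly by history-free determinacy of parity games.
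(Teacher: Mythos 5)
Your proof is correct and follows the same overall route as the paper: the paper likewise reduces to a standard functor and introduces a parity \emph{nonemptiness game} $\nonempty{\bbA}$ played on $A$, and your two main steps correspond exactly to its Lemma~\ref{lem:nonempty} (acceptance of some $(\bbT,t_0)$ gives $\eloi$ a win in the game, via a shadow match maintaining a parallel winning position $(t,a)$) and Lemma~\ref{lem:emptygame} (a history-free winning strategy yields an accepted coalgebra carried inside $A$, via the diagonal strategy whose legitimacy rests on $\Fb(\Id_{B})=\Id_{\F B}$). The one genuine difference is local but worth noting. In the paper $\eloi$'s second move is \emph{forced} to be $\Base(\phi)$, the least $U$ with $\phi \in \F U$; establishing that this least set exists requires proving that a standard weak-pullback-preserving functor commutes with intersections, and the normalization $\range(Z_{s,\phi})=\Base(\phi)$ is justified there by the restriction law $\Fb (R\bpr_{S'\times T'}) = (\Fb R)\bpr_{\F S' \times \F T'}$ of Proposition~\ref{f:st}. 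You instead let $\eloi$ name \emph{any} witness $B$ with $\phi \in \F B$ and, in the completeness direction, take $B := \range(Y)$, obtaining $\Phi_{t,a} \in \F B$ by factoring the projection $Y \to A$ through its image and using only standardness (preservation of inclusions); this is airtight and neatly sidesteps both the $\Base$ machinery and Proposition~\ref{f:st} --- indeed your standard-case argument nowhere invokes weak pullback preservation, which survives only inside Fact~\ref{f:wppst}. (Your choice of $S$ as the reachable part rather than all of $A$ is a cosmetic variant.) The one place you are thinner than the paper is the opening ``assume, by Fact~\ref{f:wppst}, that $\F$ is standard'': this is not a free assumption on a fixed $\F$ and $\bbA$, but requires replacing $\bbA$ by its $\la$-standardization, checking that acceptance transfers both ways along the natural isomorphism, and pulling the constructed coalgebra back along $\la_A^{-1}$ while verifying that the side condition $\si(s) \in \De(s)$ is preserved; this is routine, but the paper devotes Proposition~\ref{p:staut} and the closing argument of Section~\ref{s:6} to exactly these verifications, and your write-up should at least record them.
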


As an immediate consequence of the above result, and of the fact that for
every alternating $\F$-coalgebra automaton we can effectively construct an
equivalent nondeterministic automaton (Theorem~\ref{thm:m}), we obtain the
following solution for the nonemptiness problem for parity $\F$-automata.

\begin{corollary}
\label{c:fmp}
Let $\F$ be some weak pullback preserving set functor, and let $\bbA = 
\struc{A,\ro{a},\De,\Omega}$ be a parity $\F$-automaton of size $n$. 
Then $\Lan(\bbA) \not= \nada$ iff $\bbA$ accepts a pointed $\F$-coalgebra
$(S,\si,s_0)$ with $\card{S} \leq 2^{\mathcal{O}(n^2 \log n)}$.
\end{corollary}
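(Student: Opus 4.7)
The plan is to combine Theorem~\ref{thm:m} with Theorem~\ref{thm:nonempti} in a two-step reduction. One direction of the biconditional is trivial: if $\bbA$ accepts some pointed $\F$-coalgebra then $\Lan(\bbA) \neq \nada$ by definition. So the content lies in the forward direction, and the work consists in tracking the size bound through the two constructions.

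First I would apply Theorem~\ref{thm:m} to the (possibly alternating) parity automaton $\bbA$, obtaining an equivalent nondeterministic parity automaton $\bbA^{\bl}$ of size $N = 2^{\mathcal{O}(n^2 + nk\log(nk))}$, where $n$ is the size and $k$ the index of $\bbA$. Since the parity function is defined on a set of $n$ states, we may assume $k \leq n$, so $nk\log(nk) \leq 2n^2\log n$ and hence $N \leq 2^{\mathcal{O}(n^2 \log n)}$. Because $\bbA$ and $\bbA^{\bl}$ are equivalent, $\Lan(\bbA) \neq \nada$ holds iff $\bbA^{\bl}$ accepts some pointed $\F$-coalgebra.

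Next I would apply Theorem~\ref{thm:nonempti} to the nondeterministic parity automaton $\bbA^{\bl}$. This gives a pointed $\F$-coalgebra $(\struc{S,\si},s_0)$ accepted by $\bbA^{\bl}$ whose carrier $S$ is a subset of the state set of $\bbA^{\bl}$. In particular $\card{S} \leq N \leq 2^{\mathcal{O}(n^2 \log n)}$. Since $\bbA^{\bl}$ is equivalent to $\bbA$, the coalgebra $(\struc{S,\si},s_0)$ is accepted by $\bbA$ as well, which yields the desired bound.

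No step is really an obstacle here, since both theorems have already been proved; the proof is essentially a bookkeeping of the size estimates, with the only mild subtlety being the observation that $k \leq n$ so that the mixed term $nk\log(nk)$ in Theorem~\ref{thm:m} collapses into the single expression $n^2 \log n$ appearing in the statement of the corollary.
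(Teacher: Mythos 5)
Your proposal is correct and follows essentially the same route as the paper's own proof: observe that the index $k$ of a parity automaton is at most its size $n$, apply Theorem~\ref{thm:m} to obtain an equivalent nondeterministic parity automaton of size $2^{\mathcal{O}(n^2 + n^2\log(n^2))} = 2^{\mathcal{O}(n^2\log n)}$, and then invoke Theorem~\ref{thm:nonempti} to extract an accepted coalgebra living inside that automaton's state set. Your bookkeeping of the size bound, including the collapse of the mixed term $nk\log(nk)$ via $k \leq n$, matches the paper's calculation exactly.
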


\begin{proof}
Suppose $\Lan(\bbA) \not= \nada$, i.e.\ there is some pointed 
$\F$-coalgebra that is accepted by $\bbA$.
The index of $\bbA$ is smaller than or equal to its size $n$, and so it follows
from Theorem~\ref{thm:m} that we can transform $\bbA$ into an equivalent
nondeterministic parity automaton $\bbA^{\bl}$ of size
$2^{\mathcal{O}(n^2 + n^2 \log(n^2))}= 2^{\mathcal{O} (n^2 \log n)}$. 
Because $\bbA^{\bl}$ is equivalent to $\bbA$ we know that $\bbA^{\bl}$ accepts
some pointed $\F$-coalgebra.
The claim follows now immediately from Theorem~\ref{thm:nonempti}.
\end{proof}

The remainder of this section is devoted to the proof of
Theorem~\ref{thm:nonempti}.

\begin{remark}
%
   The heart of the proof of Theorem~\ref{thm:nonempti} will be
   the construction of the so-called nonemptiness 
   game of a nondeterministic $\F$-automaton. This
   nonemptiness game can be seen as a variant of the acceptance
   game of an $\F$-automaton that is played {\em inside} the given
   automaton. Readers familiar with automata that operate
   on infinite objects will recognize an analogy to standard
   automata theoretic techniques: in order to solve the 
   nonemptiness problem of a given nondeterministic 
   automaton one usually considers an {\em input-free} variant
   of the automaton and then decides whether 
   the input-free automaton has a successful run 
   (cf.~e.g.~\cite[Chap.~8]{grae:auto02} concerning tree automata). 
   Successful runs
   of such an input-free automaton correspond to
   winning strategies of $\elo$ in our nonemptiness game.    
\end{remark}
\subsection{Standardness and the nonemptiness game}

For a smooth presentation of the proof of Theorem~\ref{thm:nonempti}, we
first consider the special case, where we impose the additional condition 
on the functor to be \emph{standard} (cf.\ Section~\ref{s:standard}).
This means that in particular, $S \sse T$ implies $\F S \sse \F T$.
Furthermore we need the following properties of standard functors.

\begin{proposition}
\label{f:st}
Let $\F$ be a standard, weak pullback preserving functor. 
Then, for all sets $S$, $T$, $S'$ and $T'$, with $S' \sse S$ and $T' \sse T$, 
and for all relations $R \sse S \times T$:

\noindent
(1) $\F$ commutes with intersections: $\F (S \cap T) = \F S \cap \F T$.

\noindent
(2) $\Fb$ commutes with restrictions: 
$\Fb (R\bpr_{S'\times T'}) = (\Fb R)\bpr_{\F S' \times \F T'}$;
\end{proposition}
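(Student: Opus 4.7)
For part (1), which is an equality of sets, my plan is to prove the two inclusions separately. The forward inclusion $\F(S \cap T) \sse \F S \cap \F T$ will be immediate from standardness, since $\F$ sends the inclusions $S \cap T \hookrightarrow S$ and $S \cap T \hookrightarrow T$ to the corresponding inclusions at the $\F$-level. For the reverse inclusion I would exploit the fact that the commutative square with corners $S \cap T$, $S$, $T$, $S \cup T$ and inclusions as all four arrows is a pullback in $\Set$, hence in particular a weak pullback. Applying $\F$, weak-pullback preservation together with standardness then yields that $\F(S \cap T)$, together with its two projections (which themselves become inclusions), is a weak pullback of the cospan $\F S \hookrightarrow \F(S \cup T) \hookleftarrow \F T$. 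Given $x \in \F S \cap \F T$, I would probe this weak pullback with the pair of constant maps out of a singleton $\{\ast\}$ sending $\ast$ to $x$ in $\F S$ and in $\F T$ respectively; these agree when composed into $\F(S \cup T)$, so the weak-pullback property delivers an $h : \{\ast\} \to \F(S \cap T)$ such that $(\F p_{1} \circ h)(\ast) = x$, and since $\F p_{1}$ is the inclusion $\F(S \cap T) \hookrightarrow \F S$, this forces $h(\ast) = x$, placing $x$ in $\F(S \cap T)$.

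For part (2), the plan is to express the restriction as a composition of relations, so that clauses~(2) and~(5) of Fact~\ref{fact:wpb} can take over. Writing $D_{U} := \{(u,u) \mid u \in U\}$ for any set $U$, a direct unfolding of definitions shows that
\[
R\bpr_{S' \times T'} \;=\; D_{S'} \circ R \circ D_{T'}
\]
as subsets of $S \times T$. Applying $\Fb$ and using its distributivity over composition (which is where the hypothesis of weak-pullback preservation enters), I would obtain
\[
\Fb\bigl(R\bpr_{S'\times T'}\bigr) \;=\; \Fb(D_{S'}) \circ \Fb(R) \circ \Fb(D_{T'}).
\]
The proposition in Section~\ref{s:standard} guarantees that, because $\F$ is standard, $\Fb(D_{S'})$ does not depend on the type ascribed to $D_{S'}$; in particular I may compute it by viewing $D_{S'}$ instead as $\Id_{S'} \sse S' \times S'$, whereupon Fact~\ref{fact:wpb}(2) gives $\Fb(D_{S'}) = \Id_{\F S'}$, which standardness embeds into $\F S \times \F S$ as $\{(x,x) \mid x \in \F S'\}$, and analogously for $D_{T'}$. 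Pre- and post-composing $\Fb R$ with these ``diagonals on $\F S'$ and $\F T'$'' is exactly restriction to $\F S' \times \F T'$, which yields the desired identity.

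The step I expect to be the main obstacle is the bookkeeping in part~(1), where several distinct inclusion maps must be kept apart, and one must take some care to simultaneously apply standardness (to identify each $\F\iota$ with the corresponding inclusion on $\F$-sets) and weak-pullback preservation (to lift the intersection pullback to the $\F$-level). Once the right diagram is in view, the remainder of part~(1) reduces to the familiar trick of probing a weak pullback with a map out of a singleton so as to force the mediating arrow to take the intended value. Part~(2) is then essentially algebraic, a calculation in the relator-style calculus of Fact~\ref{fact:wpb} once the reduction of restriction to composition with diagonals has been spotted.
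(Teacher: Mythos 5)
Your proof is correct, and its two halves relate to the paper's proof in different ways. For part (1) you take essentially the paper's route: the paper applies the standard functor $\F$ to the same inclusion pullback square over $S \cup T$ and invokes weak-pullback preservation; the only difference is that the paper probes the resulting weak pullback with the cone $(\F S \cap \F T, \iota, \iota)$ all at once and observes that the mediating arrow $h$ is forced to be the inclusion $\F S \cap \F T \sse \F(S\cap T)$, while you probe elementwise with a singleton --- an interchangeable variation. For part (2), however, your argument is genuinely different. The paper establishes only the inclusion $\Fb(R\bpr_{S'\times T'}) \sse (\Fb R)\bpr_{\F S'\times \F T'}$ directly, via monotonicity of $\Fb$ (together with the type-independence of relation lifting for standard functors, needed to see that the left-hand side lands inside $\F S' \times \F T'$), and for the opposite inclusion it simply refers the reader to Proposition~2.2 of \cite{vene:auto06}. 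Your decomposition $R\bpr_{S'\times T'} = D_{S'} \circ R \circ D_{T'}$ (correctly typed under the paper's diagrammatic convention for relation composition), followed by Fact~\ref{fact:wpb}(5), Fact~\ref{fact:wpb}(2), and the type-independence proposition of Section~\ref{s:standard} --- which is exactly what licenses computing $\Fb(D_{S'})$ as $\Id_{\F S'}$ and $\Fb(D_{T'})$ as $\Id_{\F T'}$, embedded by standardness into $\F S \times \F S$ and $\F T \times \F T$ --- yields the full equality in a single equational computation, since pre- and post-composing $\Fb R$ with these diagonals is precisely restriction to $\F S' \times \F T'$. What your route buys is a self-contained proof of both inclusions from the paper's own stated facts; what the paper's route buys is brevity, at the cost of outsourcing the harder inclusion to an external reference.
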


\begin{proof}
By a result of Trnkov\'a (see~\cite{trnk:some69}), property (1) holds for 
any standard functor $\F$, provided that the intersection $S \cap T$ is
\emph{nonempty}.
We use weak pullback preservation of $\F$ to show that the claim is true for
arbitrary intersections.
Let $S$ and $T$ be sets.
The left diagram below is a pullback diagram which, by our assumption that
$\F$ is standard, gets mapped to the lower right square:
\[\begin{array}{ccc} 
\xymatrix{ 
      S \cap T \ar[r]^(.6){\iota_{S\cap T,S}} \ar[d]_{
      \iota_{S\cap T,T}} & S 
      \ar[d]^{\iota_{S,S\cup T}}\\
       T \ar[r]_(.4){\iota_{T,S\cup T}} & S \cup T } & \qquad \quad
\xymatrix{ 
      \F S \cap \F T \ar@/_/[ddr]_{\iota_{\F S\cap \F T, \F T}}
      \ar@/^/[drr]^{\iota_{\F S\cap \F T, \F S}} 
      \ar@{-->}[dr]^h & & \\
      & \F(S \cap T) \ar[r]_(.55){\iota_{\F(S\cap T),\F S}} 
      \ar[d]^{\iota_{\F(S\cap T),\F T}} & \F S 
      \ar[d]^{\iota_{\F S,\F (S \cup T )}}\\
      & \F T \ar[r]_(.4){\iota_{\F T,\F (S \cup T )}} & \F(S \cup T) }
\end{array}\]
From the fact that $\F$ preserves weak pullbacks, it follows that the square
in the right diagram is a weak pullback diagram.
Hence there exists some function $h$ that makes both upper triangles 
commute as depicted in the diagram. 
But then a straightforward verification shows that $h$ itself must be the
inclusion from $\F S \cap \F T$ into $\F(S \cap T)$, i.e.\ $\F S \cap \F T
\sse \F (S \cap T)$. 
The converse inclusion $\F  (S \cap T) \sse \F S \cap \F T$ is an immediate
consequence of standardness of the functor $\F$.

For (2), we first consider the inclusion $\sse$.
By monotonicity of $\Fb$ we have $\Fb (R\bpr_{S'\times T'}) \sse \Fb (R)$.

From this it follows immediately that
\[ 
\Fb (R\bpr_{S'\times T'}) \sse
(\Fb R)\bpr_{\F S' \times \F T'} 
\] 
A short proof of the opposite inclusion can be found
in~\cite[Prop.~2.2]{vene:auto06}.
\end{proof}

The key concept in our proof of Theorem~\ref{thm:nonempti} is the so-called
\emph{nonemptiness game} $\nonempty{\bbA}$ that we may associate with a
nondeterministic automaton.
Intuitively, one should think of this game as the simultaneous projection on
$\bbA$ of all acceptance games $\G(\bbA,\bbS)$.
For a formal definition we need the following notion.

\begin{definition}
Let $\F$ be a standard set functor.
Given a finite set $A$ and an element $\phi \in \F A$, the set 
\[ 
\Base(\phi) \coloneqq
  \bigcap \{U \mid U \subseteq A \mbox{ and } \phi \in \F U\}. 
\]
is defined as the \emdef{base} of $\phi$.
\end{definition}

It follows from $\phi \in \F A$ that the set $\{U \mid U \subseteq A$ and
$\phi \in \F U\}$ is nonempty, so that $\Base(\phi)$ is well defined.

\begin{example}
Fix some finite set $A$, and recall the definition of the functors $\BT$ and
$\K$ of Example~\ref{ex:BTK}.
The base of an arbitrary element $(a_1,a_2) \in \BT A$ is the set $\{ a_1,
a_2\}$.
An element of $\K X$ is a subset $B\subseteq A$; the base of such an element
is the set $B$ itself.
\end{example}

Intuitively the base of an element $\phi \in \F A$ consists exactly of those
elements of $A$ that we need to `construct' $\phi$.
Bases have the following key property.

\begin{proposition}
Let $\F$ be a standard set functor, and consider an object $\phi \in \F A$,
where $A$ is some finite set.
Then $\Base(\phi)$ is the smallest set $X$ such that $\phi \in \F X$.
\end{proposition}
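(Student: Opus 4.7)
The plan is to prove two things: first that $\phi \in \F(\Base(\phi))$, which shows $\Base(\phi)$ itself lies in the collection being minimized over; and second that $\Base(\phi) \subseteq X$ for every $X$ with $\phi \in \F X$. The crucial tool in both parts is Proposition~\ref{f:st}(1), which says that, for a standard weak-pullback-preserving functor, $\F$ commutes with binary intersections; the finiteness of $A$ will let me promote this to the finite intersection defining $\Base(\phi)$.

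For the first claim, I would set $\mathcal{U} := \{U \sse A \mid \phi \in \F U\}$. Since $\phi \in \F A$, we have $A \in \mathcal{U}$, so $\mathcal{U}$ is nonempty; since $A$ is finite, $\mathcal{U}$ is a finite nonempty collection of sets. A straightforward induction on $|\mathcal{U}|$, using $\F(U \cap V) = \F U \cap \F V$ at each step, shows that $\phi \in \bigcap_{U \in \mathcal{U}} \F U = \F\bigl(\bigcap_{U \in \mathcal{U}} U\bigr) = \F(\Base(\phi))$. In particular $\Base(\phi)$ itself witnesses $\phi \in \F(\Base(\phi))$.

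For the second claim, suppose $X$ is any set with $\phi \in \F X$. I need to show $\Base(\phi) \sse X$. The set $X$ need not be a subset of $A$, so I cannot immediately appeal to the defining intersection. However, using $\phi \in \F A$ together with Proposition~\ref{f:st}(1), we obtain $\phi \in \F X \cap \F A = \F(X \cap A)$. The set $X \cap A$ is a subset of $A$ with $\phi \in \F(X \cap A)$, hence $X \cap A \in \mathcal{U}$, and therefore $\Base(\phi) \sse X \cap A \sse X$.

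Combining the two claims yields exactly the statement of the proposition: $\Base(\phi)$ is a set for which $\phi \in \F(\Base(\phi))$, and it is contained in every other such set; so it is the smallest such set. The only subtle point is the reduction in the second claim from an arbitrary $X$ to the subset $X \cap A$ of $A$, which is where standardness is used via Proposition~\ref{f:st}(1); without it, membership of $\phi$ in $\F X$ for an $X$ not contained in $A$ would be harder to connect to the intersection defining $\Base(\phi)$.
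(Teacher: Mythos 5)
Your proof is correct and follows essentially the same route as the paper: both establish $\phi \in \F(\Base(\phi))$ from finiteness of $A$ and the intersection-preservation property of Proposition~\ref{f:st}(1), and both handle an arbitrary witness $X$ by passing to $X \cap A$ via $\phi \in \F X \cap \F A = \F(X \cap A)$. The only difference is presentational --- you spell out the finite induction that the paper dismisses as an ``easy consequence'' --- which is a harmless elaboration, not a divergence.
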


\begin{proof}
It is an easy consequence of Fact~\ref{f:st} and the finiteness of $A$ that
$\phi \in \F (\Base(\phi))$.
Now suppose $Y$ is a set such that $\phi \in \F Y$. 
Then we have $\phi \in \F A \cap \F Y = \F (A \cap Y)$, so that
$\Base(\phi)$ is a subset of $A \cap Y$, and, hence, of $Y$. 
\end{proof}

In the remaining part of this subsection we assume a fixed standard functor
$\F$.
We can now define the `nonemptiness game' $\nonempty{\bbA}$ associated with
a given nondeterministic parity $\F$-automaton $\bbA$.

\begin{definition}
Let $\bbA= \struc{A,\ro{a},\Delta, \Omega}$ be a nondeterministic parity
$\F$-automaton, where $\F$ is a standard set functor.
The rules and the (parity) winning conditions of the \emdef{nonemptiness game}
$\nonempty{\bbA}$ of $\bbA$ are given in Table~\ref{tb:ne}. 
\begin{table*}[t]
\begin{center} 
\begin{tabular}[b]{|l|c|c|c|} \hline
Position: $b$ & Player & Admissible moves: $E[b]$ & $\Omega'(b)$ \\ \hline
$ a \in A $   & $\elo$ &   $ \De(a) $             & $\Omega(a)$  \\
$\phi \in \F A$ & $\elo$ & $\{\Base(\phi)\}$      & 0 \\
$B \subseteq A$ & $\abe$ & $B$          & 0 \\ \hline
\end{tabular}
\end{center}
\caption{Nonemptiness game of a nondeterministic $\F$-automaton}
\label{tb:ne}
\end{table*}
\end{definition}

For an informal description of this game, we first note that, just like the
acceptance game, matches proceed in rounds.
The basic positions of the game are now the states of the automaton $A$.
At a basic position $a \in A$ player $\elo$ has to move to some successor
$\phi \in \De (a)$. 
Then $\elo$ moves further to the base of $\phi$. 
Finally it is $\abe$'s turn to chose some $a' \in \Base(\phi)$ as the next
basic position. 

Attentive readers may have noticed that the formulation of $\nonempty{\bbA}$
looks unnecessarily complicated because $\elo$'s second move (from $\phi\in
\F A$ to $\Base(\phi) \in \power A$) is entirely determined by her first move. 
We keep this redundancy because it makes it easier to relate matches of the
nonemptiness game to matches of the acceptance game of $\bbA$.

The name `nonemptiness game' can be justified by the following two lemmas
that, taken together, imply that $\eloi$ has a winning strategy in the
nonemptiness game for $\bbA$ iff the language recognized by $\bbA$ is not
empty.
The first lemma takes care of the direction from left to right (recall that
$\nonempty{\bbA}$, being a parity game, satisfies history-free determinacy).

\begin{lemma}\label{lem:emptygame}
Let $\bbA= \struc{A,\ro{a},\Delta,\Om}$ be a nondeterministic $\F$-automaton.
If $\varphi: A \to \F A$ encodes a history-free winning strategy of $\elo$
in $\nonempty{\bbA}$ at position $\ro{a}$ then $\bbA$ accepts the 
$\F$-coalgebra $(A,\varphi,\ro{a})$.
\end{lemma}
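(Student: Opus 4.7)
The plan is to construct, directly from the history-free strategy $\varphi$, a winning strategy for $\eloi$ in the acceptance game $\G(\bbA,(A,\varphi))$ starting at the basic position $(\ro{a},\ro{a})$. The guiding idea is to keep the match on the diagonal of $A \times A$, so that every basic position has the form $(a,a)$ and can be identified with a position $a$ of the nonemptiness game.

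Concretely, I would define $\eloi$'s positional strategy as follows. At a basic position $(a,a)$, she plays $\varphi(a) \in \F A$; this is legitimate because $\varphi(a) \in \Delta(a)$ by the assumption that $\varphi$ is a legal strategy in $\nonempty{\bbA}$. At the resulting position $(a,\varphi(a))$, she chooses the local bisimulation $Y_a \coloneqq \Id_{\Base(\varphi(a))}$. Since by definition $\varphi(a) \in \F\Base(\varphi(a))$ and since $\Fb$ preserves the diagonal (Fact~\ref{fact:wpb}(2)), we have $(\varphi(a),\varphi(a)) \in \Id_{\F\Base(\varphi(a))} = \Fb Y_a$, which is exactly the condition for $Y_a$ to be admissible, since the coalgebra map of $(A,\varphi)$ is $\varphi$ itself. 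Because $Y_a$ is contained in the diagonal, any pair $(t,b)$ picked by $\abel$ satisfies $t = b \in \Base(\varphi(a))$, so the diagonal invariant is preserved. In particular $\eloi$ never gets stuck, and if $\abel$ ever gets stuck then $\Base(\varphi(a)) = \nada$, which is exactly the situation in which he would also get stuck in $\nonempty{\bbA}$ after $\eloi$ plays $\varphi(a)$; so every finite match conform this strategy is a win for $\eloi$.

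For the infinite case, any such match yields a sequence of basic positions $(a_0,a_0)(a_1,a_1)(a_2,a_2)\ldots$ with $a_0 = \ro{a}$ and $a_{i+1} \in \Base(\varphi(a_i))$, which can be unfolded into the $\nonempty{\bbA}$-match $a_0\,\varphi(a_0)\,\Base(\varphi(a_0))\,a_1\,\varphi(a_1)\,\Base(\varphi(a_1))\,a_2 \ldots$; this unfolded match is by construction conform $\varphi$, and hence won by $\eloi$ in $\nonempty{\bbA}$. The two parity conditions assign the same priority $\Omega(a_i)$ to each $a_i$ and priority $0$ to all intermediate positions of the nonemptiness play, so the maximum priority occurring infinitely often is the same in both plays (the extra zeros only matter if all $\Omega(a_i)$ are eventually $0$, in which case both maxima equal $0$, which is even). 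Hence $\eloi$ also wins the infinite acceptance match, which shows that $\bbA$ accepts $(A,\varphi,\ro{a})$. The argument is essentially a direct translation between the two games; the one point that needs genuine care is the verification that $Y_a = \Id_{\Base(\varphi(a))}$ is a legitimate choice, which is where the standardness of $\F$ (to give a well-defined base and to have $\Id_{\Base(\varphi(a))}$ sit inside $A \times A$) and the properties of $\Fb$ on the diagonal are both used.
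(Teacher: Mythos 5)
Your proposal is correct and takes essentially the same route as the paper's own proof: the diagonal strategy playing $\varphi(a)$ at $(a,a)$ together with the local bisimulation $\Id_{\Base(\varphi(a))}$, legitimized via $\varphi(a)\in\F\,\Base(\varphi(a))$ and Fact~\ref{fact:wpb}(2), followed by projecting each conform match of $\G(\bbA,(A,\varphi))$ onto a $\varphi$-conform match of $\nonempty{\bbA}$. Your extra remarks on the parity bookkeeping (the interleaved priority-$0$ positions) and on the stuck cases merely make explicit what the paper's proof leaves implicit.
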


\begin{proof}
Let $\varphi:A \to \F A$ be a winning strategy of $\elo$ in $\nonempty{\bbA}$
and define $\bbA_\varphi$ to be the $\F$-coalgebra $\struc{A,\varphi}$. 
In order to show that $\bbA$ accepts the pointed coalgebra $(\bbA_\varphi,
\ro{a})$ we have to equip $\elo$ with a winning strategy in $\game=
\game(\bbA,\bbA_\varphi)$. To this aim, we define:
\[
\begin{array}{rcccrcccc}
  	\Phi: & \Id_A  & \to & \F A  & \qquad  & Z: & A \times \F A
		& \to & \Pow (\Id_A) \\
		& (a,a) & \mapsto & \varphi(a) & & & 
		(a,\psi) & \mapsto & \quad \Id_{\Base(\psi)} 
\end{array}
\]
The functions $\Phi$ and $Z$ encode a legitimate strategy for $\elo$ in
$\game$ at position $(\ro{a},\ro{a})$: in order to see this, first note that
any match that starts at $(\ro{a},\ro{a})$ and in which $\elo$ plays conform
$(\Phi,Z)$ will only pass through basic positions of the form $(a,a)$.
Hence, $\elo$'s strategy is defined on all positions that are possibly
reached in such a match.
Let us now see that at any position of the form $(a,a)$ for some $a \in A$,
the moves encoded by $\elo$'s strategy are legitimate. 
That $\varphi(a)$ is an element of $\De(a)$ is true by definition. 
In order to see that the move further to $\Id_\Base(\varphi(a))$ is also a
legal move we use Fact~\ref{fact:wpb}(2) which yields that 
$\Fb(\Id_{\Base(\varphi(a))}) = \Id_{\F \Base(\varphi(a))}$. 
Together with $\varphi(a) \in \F \Base(\phi(a))$ this implies that
$\varphi(a), \varphi(a)) \in \Fb(\Id_{\Base(\varphi(a))})$.
  
It remains to show that $(\Phi,Z)$ is indeed a {\em winning} strategy for
$\elo$. 
The key observation here is that the "projection" of a $\G$-match in which
$\elo$ plays conform $(\Phi,Z)$ is a match of $\nonempty{\bbA}$ in which
$\elo$ plays conform $\phi$:
\[
\begin{array}{lccccccc}
\mbox{$\G$-match} & \quad & (\ro{a},\ro{a}) & (\phi(a),\ro{a}) &
  \Id_{\Base(\phi(a))} & (a_1,a_1) & \ldots & (a_n,a_n)  \\ 
& & \Downarrow & \Downarrow & \Downarrow & \Downarrow &
  \phantom{vdots} & \Downarrow \\
\mbox{projection} & & \ro{a} & \phi(a) & \Base(\phi(a)) & a_1 & \ldots & a_n 
\end{array}
\]
This suffices to prove that $\elo$ wins all $\G$-matches in which she follows
the strategy $(\Phi,Z)$, because we assumed $\phi$  to be a winning strategy
for $\elo$ in $\nonempty{\bbA}$. 
\end{proof}

The next lemma states that if the language recognized by $\bbA$ is not
empty, then $\eloi$ wins the nonemptiness game of $\bbA$ indeed.

\begin{lemma}\label{lem:nonempty}
Let $\bbA = \struc{A,\ro{a},\Delta,\Om}$ be a nondeterministic $\F$-automaton.
If $\bbA$ accepts some pointed $\F$-coalgebra $(\bbS,s_0)$ then $\elo$ has
a winning strategy in $\nonempty{\bbA}$ at position $\ro{a}$.
\end{lemma}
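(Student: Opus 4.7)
The plan is to convert $\elo$'s winning strategy in the acceptance game $\G := \G(\bbA,\bbS)$ into a winning strategy in $\nonempty{\bbA}$. By historyfree determinacy of the parity game $\G$ (Fact~\ref{f-2-3}) and the hypothesis that $\bbA$ accepts $(\bbS,s_0)$, I fix a positional winning strategy $(\Phi,Z)$ for $\elo$ in $\G$ from position $(s_0,\ro{a})$; using the nondeterministic simplification of Remark~\ref{r:nd}, this takes the form $\Phi: S \times A \to \F A$ and $Z: S \times \F A \to \Pow(S \times A)$, where $\Phi_{s,a} \in \De(a)$ and $(\si(s),\Phi_{s,a}) \in \Fb(Z_{s,\Phi_{s,a}})$ at every position $(s,a) \in \Win_{\elo}(\G)$.

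The crucial technical step is a \emph{base inclusion lemma}: for any $R \sse S \times A$ and any $s \in S$, $\phi \in \F A$ with $(\si(s),\phi) \in \Fb(R)$, the second-coordinate projection $A_R := \{ a \in A \mid \exists t\in S.\,(t,a) \in R \}$ satisfies $\Base(\phi) \sse A_R$. The argument is brief: by definition of $\Fb$ there is some $\psi \in \F R$ with $(\F \pi_A)(\psi) = \phi$, where $\pi_A: R \to A$ is the projection on the second coordinate; since $\pi_A$ factors through the inclusion $A_R \hookrightarrow A$, standardness of $\F$ turns this inclusion into the inclusion $\F A_R \hookrightarrow \F A$, giving $\phi \in \F A_R$, whence $\Base(\phi) \sse A_R$ by the minimality property of $\Base(\phi)$.

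With this in hand, I define a (not necessarily historyfree) strategy $f$ for $\elo$ in $\nonempty{\bbA}$ by inductive bookkeeping. Alongside any $f$-consistent partial play $a_0 a_1 \cdots a_k$ with $a_0 = \ro{a}$, $\elo$ maintains a sequence $s_0, s_1, \ldots, s_k \in S$ such that $(s_0,a_0)(s_1,a_1)\cdots (s_k,a_k)$ is the basic-position trace of a $(\Phi,Z)$-conform $\G$-match starting at $(s_0,\ro{a})$; in particular every $(s_i,a_i)$ lies in $\Win_{\elo}(\G)$. At basic position $a_k$ the strategy $f$ prescribes the move $\phi_k := \Phi_{s_k,a_k} \in \De(a_k)$, followed by $\Base(\phi_k)$. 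When $\abe$ responds with some $a_{k+1} \in \Base(\phi_k)$, the base inclusion lemma applied to $R := Z_{s_k,\phi_k}$ (a legitimate $\elo$-move, so $(\si(s_k),\phi_k) \in \Fb(Z_{s_k,\phi_k})$) provides a witness $s_{k+1} \in S$ with $(s_{k+1},a_{k+1}) \in Z_{s_k,\phi_k}$; this $s_{k+1}$ is appended to the bookkeeping, extending the associated $\G$-match by one round.

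It then remains to verify that $f$ is winning. Any infinite $f$-consistent play $a_0 a_1 a_2 \cdots$ in $\nonempty{\bbA}$ induces by construction an infinite $(\Phi,Z)$-conform $\G$-match whose sequence of $A$-components is exactly $a_0 a_1 a_2 \cdots$; since $(\Phi,Z)$ is winning, this $A$-trace satisfies the parity condition $\Om$, which is precisely the winning condition of $\nonempty{\bbA}$. The main obstacle is the base inclusion lemma, which is where standardness of $\F$ enters essentially; the bookkeeping argument itself is then routine. (Once Lemma~\ref{lem:nonempty} is established in this form, parity-game determinacy delivers a historyfree winning $\varphi: A \to \F A$ of the sort used in Lemma~\ref{lem:emptygame}.)
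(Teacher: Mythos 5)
Your proof is correct and follows essentially the same route as the paper's: simulate each round of $\nonempty{\bbA}$ by a parallel round of the acceptance game $\G(\bbA,\bbS)$ in which $\eloi$ plays her positional winning strategy $(\Phi,Z)$, use standardness of $\F$ to find a witness $s_{k+1}$ for each $a_{k+1}\in\Base(\phi_k)$ chosen by $\abel$, and transfer the parity win through the shared $A$-trace. The only (cosmetic) difference is in packaging the standardness step: you prove the on-the-fly inclusion $\Base(\phi)\sse\range(Z_{s,\phi})$ by factoring the projection $\pi_A$ through the range and invoking minimality of the base, whereas the paper normalizes the strategy without loss of generality so that $\range(Z_{s,\phi})=\Base(\phi)$ via the restriction property of relation lifting (Proposition~\ref{f:st}(2)) --- indeed your inclusion is exactly the $\supseteq$ half of that equality, which the paper leaves implicit.
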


\begin{proof}
Suppose $\bbA$ accepts the pointed $\F$-coalgebra $(\bbS,s_0) =
(\struc{S,\sigma},s_0)$. 
Then $\elo$ has a history-free winning strategy in the acceptance game $\game
= \game(\bbA,\bbS)$ starting at position $(s_0,\ro{a})$, that can be
encoded as a pair of functions
\[ 
\left(\Phi: S \times A \to \F A,
Z: S \times \F A \to \Pow (S \times A) \right).
\]
Moreover, without loss of generality we may assume that 
\begin{equation}
\label{eq:base}
\range(Z_{s,\phi})= \Base(\phi)
\end{equation}
for all positions $(s,\phi)$ that are winning for $\eloi$.
To see this, observe that for all $(s,\phi) \in \Win_{\eloi}$ we have 
$(\si,\phi) \in \Fb(Z_{s,\phi})$ by legitimacy of the strategy.
Thus it follows from $\phi \in \F \Base(\phi)$ that 
\[
(\si(s),\phi) \in \left(\Fb 
	Z_{(s,\phi)}\right)\rst{\F S \times \F \Base(\phi)}.
\]
But Fact~\ref{fact:wpb} yields
\[
\left(\Fb Z_{(s,\phi)}\right)\rst{\F S \times \F \Base(\phi)}
\;=\;
\Fb \left(Z_{(s,\phi)}\rst{S \times \Base(\phi)}\right),
\]
so that we may infer that  
\[
(\si(s),\phi) \in
\Fb \left(Z_{(s,\phi)}\rst{S \times \Base(\phi)}\right).
\]
From this it follows that instead of playing $Z_{s,\phi}$, $\eloi$ could
have played the relation $Z_{(s,\phi)}\rst{S \times \Base(\phi)}$ as well.
Since the latter relation is \emph{smaller} it decreases the choice of
$\abel$, and so $\eloi$ will increase rather than decrease her chances
of winning the game.
This shows that indeed we may assume (\ref{eq:base}) without loss of 
generality.

We now turn to the nonemptiness game $\nonempty{\bbA}$, and show that the
strategy $(\Phi,Z)$ can be used to equip $\elo$ with a winning strategy.
Call a position $(s,a)$ in a $\G(\bbA,\bbS)$-match {\em parallel} to $a'$ if
$a=a'$ and $(s,a)$ is winning for $\eloi$.

We first describe $\elo$'s strategy in one round of the game $\nonempty{\bbA}$,
and demonstrate how she constructs a parallel round of $\G(\bbA,\bbS)$.
Let $a$ be a position in a $\nonempty{\bbA}$-match and let $(s,a)$ be an
(inductively defined) parallel position in $\G(\bbA,\bbS)$. 
$\elo$'s strategy is to move from $a$ to $\phi := \Phi_{(s,a)}$ and further to 
$\Base(\phi)$.
After that $\abe$ chooses an element $a'$ of $\Base(\phi)$.  
The corresponding round of the  $\G(\bbA,\bbS)$-match is constructed as
follows: $\elo$ moves from $(s,a)$ to $(s,\phi)$ and further to 
$Z_{(s,\phi)}$. 
Now we use our assumption (\ref{eq:base}): from $\range(Z_{(s,\phi)}) =
\Base(\phi)$ and $a' \in \Base(\phi)$ we may infer the existence of some $s'$
such that $(s',a') \in Z_{(s,\phi)}$.
Therefore in $\G(\bbA,\bbS)$ $\abel$ can move from $Z_{(s,\phi)}$ to $(s',a')$.
This position $(s',a')$ is parallel to $a'$ because $\elo$ played according 
to her winning strategy in $\G(\bbA,\bbS)$, and so $(s',a')$ is winning for
$\eloi$.

It should then be obvious how this strategy leads to a victory for $\eloi$
in the nonemptiness game.
The $\nonempty{\bbA}$-match $\pi$ starts at position $\ro{a}$ and the
$\G(\bbA,\bbS)$-match $\pi'$ starts at the parallel position $(s_0,\ro{a})$.
Now if $\eloi$ plays the strategy sketched above, then for any resulting
$\nonempty{\bbA}$-match 
\[ 
\pi = \ro{a} \ldots a_1 \ldots a_2 \ldots a_3 \ldots 
\]
there is a parallel $\G(\bbA,\bbS)$-match 
\[ 
\pi' = (s_0,\ro{a}) \ldots (s_1,a_1) \ldots (s_2,a_2) \ldots
  (s_3,a_3) \ldots 
\]
which is conform her winning strategy $(\Phi,Z)$. 
From this it follows immediately that $\eloi$ wins the match $\pi$.
\end{proof}

In the case of a \emph{standard} functor, Theorem~\ref{thm:nonempti} is an
almost immediate consequence of the above two lemmas, together with the
history-free determinacy of parity games.
For the nontrivial direction of the theorem, suppose that $\bbA$ accepts some
pointed $\F$-coalgebra. 
So, by Lemma~\ref{lem:nonempty}, $\elo$ has a winning strategy in the 
nonemptiness game $\nonempty{\bbA}$ starting at position $\ro{a}$.
As parity games are history-free determined this implies $\elo$ actually has 
a \emph{positional} winning strategy $\varphi$ from position $\ro{a}$ in 
$\nonempty{\bbA}$.
Then Lemma~\ref{lem:emptygame} implies that $\bbA$ accepts $(A,\varphi,a_I)$
which finishes the proof of the theorem. 
For the general case, we have to do a little more work.

\subsection{Nonemptiness problem: the general case} 

The idea underlying the proof of Theorem~\ref{thm:nonempti} is simply a 
reduction of the general situation to the standard case.
For this purpose we need to define the \emph{standardization} of an
$\F$-automaton.

\begin{definition}
Let $\F$ be a weak pullback preserving set functor, and suppose that
$\Fl$ is a standardization of $\F$, i.e.\ there is a natural
isomorphism $\la: \F \cong \Fl$ (see Definition~\ref{d:stand}).
Given a nondeterministic $\F$-automaton $\bbA=\struc{A,\ro{a},\De,\Acc}$,
we first define the map $\stl{\De}: A \to \Fl A$ by putting
\[ 
\stl{\De}(a) := \la_{A}[\De(a)].
\]
Then the \emdef{$\la$-standardization} of $\bbA$ is the automaton 
$\stl{\bbA} \coloneqq \struc{A,\ro{a},\stl{\De},\Acc}$.
\end{definition}

In words, $\stl{\De}(a)$ is the direct image of $\De(a)$ under the bijection
$\la_A$.  

\begin{proposition}
\label{p:staut}
Let $\F$ be a weak pullback preserving set functor, suppose that 
$\la: \F \cong \Fl$, and let $\bbA=\struc{A,\ro{a},\De,\Acc}$ be a 
nondeterministic $\F$-automaton.
Then $\bbA$ accepts a pointed $\F$-coalgebra $\struc{S,\si,s}$ iff
$\stl{\bbA}$ accepts the pointed $\Fl$-coalgebra $\struc{S,\la_S\circ \si,s}$.
\end{proposition}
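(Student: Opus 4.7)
The plan is to compare the acceptance games $\game := \game(\bbA,\struc{S,\si})$ and $\stl{\game} := \game(\stl{\bbA},\struc{S,\la_S\circ\si})$ position by position, establish a bijective correspondence between $\eloi$'s strategies in the two games, and observe that this correspondence preserves winning matches. The two games have the same set of basic positions $S \times A$ and the same (parity) acceptance condition, since the latter depends only on the induced sequence in $A^{\om}$. The only potential difference lies in the legitimacy constraints on $\eloi$'s moves during a round.

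The crucial lemma I would prove first is a \emph{transfer principle} for relation lifting: for every relation $Y \sse S \times A$, and every $x \in \F S$, $y \in \F A$,
\[
(x,y) \in \Fb Y \iff (\la_S(x),\la_A(y)) \in \Flb Y.
\]
Recalling Definition~\ref{d:rellift}, if $\pi: Y \to S$ and $\pi': Y \to A$ are the projections, then $(x,y) \in \Fb Y$ iff there exists $z \in \F Y$ with $\F\pi(z) = x$ and $\F\pi'(z) = y$. Given such a $z$, the naturality of $\la$ gives $\Fl\pi(\la_Y(z)) = \la_S(\F\pi(z)) = \la_S(x)$ and likewise for $\pi'$, so $\la_Y(z)$ witnesses $(\la_S(x),\la_A(y)) \in \Flb Y$. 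Since $\la_Y$ is a bijection, the converse direction follows by applying $\la_Y^{-1}$ to an $\Flb Y$-witness. This transfer principle is the only nontrivial step; everything else is bookkeeping.

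With this in hand, I would verify that the legitimacy constraints match up under $\la$. For the static part: $\phi \in \De(a)$ iff $\la_A(\phi) \in \stl{\De}(a)$, directly from the definition $\stl{\De}(a) = \la_A[\De(a)]$. For the dynamic part: setting $\stl{\phi} := \la_A(\phi)$, the position $(s,\phi)$ in $\game$ admits the move $Y$ iff $(\si(s),\phi) \in \Fb Y$, which by the transfer principle is equivalent to $(\la_S(\si(s)), \stl{\phi}) \in \Flb Y$, i.e.\ to $Y$ being a legitimate move at $(s,\stl{\phi})$ in $\stl{\game}$. Thus, identifying a history-free strategy $(\Phi,Y)$ for $\eloi$ in $\game$ with the strategy $(\la_A \circ \Phi, Y)$ in $\stl{\game}$ yields a bijection between $\eloi$'s history-free strategies, which preserves both legitimacy and the induced sequence of basic positions in $S \times A$. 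Since the winning condition depends only on the $A$-projection of this sequence, winning strategies correspond to winning strategies, and invoking history-free determinacy of parity games we conclude that $(s,\ro{a}) \in \Win_{\eloi}(\game)$ iff $(s,\ro{a}) \in \Win_{\eloi}(\stl{\game})$, which is the desired equivalence.

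The only real obstacle is getting the transfer principle cleanly; the rest is essentially a diagram chase. Note that weak pullback preservation of $\F$ (and hence of $\Fl$) is used implicitly only through Fact~\ref{fact:wpb} insofar as one wants to talk about $\Flb$ having good properties, but for the transfer principle itself just naturality and bijectivity of $\la$ suffice.
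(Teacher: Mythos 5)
Your proposal is correct and takes essentially the same route as the paper: your ``transfer principle'' $(x,y) \in \Fb Y \iff (\la_S(x),\la_A(y)) \in \Flb Y$ is exactly the naturality-plus-bijectivity diagram chase the paper carries out inline (one direction explicitly, with a witness $\zeta \in \F Y$ pushed through $\la_Y$, the converse being ``analogous''), and your strategy correspondence $(\Phi,Y) \mapsto (\la_A \circ \Phi,\, Y)$, read at corresponding positions $(s,\la_A(\phi))$, is precisely the paper's $(\stl{\Phi},\stl{Z})$ with $\stl{Z}(s,\al) = Z(s,\la_A^{-1}(\al))$. One minor blemish: since $\Acc$ here is an arbitrary acceptance condition (not necessarily parity), your closing appeal to history-free determinacy is strictly unavailable --- but it is also unnecessary, because your position-wise correspondence of legitimate moves extends verbatim to arbitrary (history-dependent) strategies, so nothing is lost (the paper's own proof is similarly loose in assuming a positional winning strategy from the outset).
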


\begin{proof}
Let $\bbS= \struc{S,\sigma}$ and let $\stl{\bbS} = \struc{S,\stl{\si}}$
be the $\Fl$-coalgebra given by $\stl{\si} = \la_{S}\circ\si$.
We prove the implication from left to right, the converse direction can be
proven analogously.

Suppose that $\bbS$ is accepted by $\bbA$. 
We want to show that $\stl{\bbS}$ is accepted by $\stl{\bbA}$.
By assumption, $\elo$ has a positional strategy 
$(\Phi:S \times A \to \F A,Z:S \times \F A \to S\times A)$
in $\G := \G(\bbA,\bbS)$, which is winning for any match starting at position
$(s,\ro{a})$.

Now define the following positional strategy 
$(\stl{\Phi} : S \times A \to \Fl A, Z: S \times \Fl A \to S \times A)$
for $\eloi$ in $\stl{\G} = \G(\stl{\bbA},\stl{\bbS})$:
\[
\begin{array}{lll}
   \stl{\Phi}(s,a) & := & \la_{A}\circ\Phi(s,a),
\\ \stl{Z}(s,\al)  & := & Z(s, \la_{A}^{-1}(\al)).
\end{array}
\]
Here $\la_{A}^{-1}: \Fl A \to \F A$ denotes the \emph{inverse} of the
bijection $\la_{A}$.

It is obvious, that for any basic position $(s,a)$, after two moves in
$\stl{\G}$, $\eloi$ arrives  at the \emph{same} binary relation
$Z(s,\Phi(s,a))$, as after two moves in $\G$.
From this it follows immediately, that $(\stl{\Phi},\stl{Z})$ is a winning
strategy for any $\stl{\G}$-match starting at position $(s,\ai)$.

The only thing that is left to prove is that $(\stl{\Phi},\stl{Z})$ only
suggests \emph{legitimate moves}, at least, when we start at a basic
positions that is winning for $\eloi$.
The only case worth worrying about, is whether $\stl{Z}$ is legitimate at 
position $(s,\la_{A}(\phi)) \in S \times \Fl A$, if $Z$ is legitimate at
position $(s,\phi) \in S \times \F A$.
That is, we assume that $(\si(s),\phi) \in \Fb Z$, and need to show that 
$(\stl{\si}(s),\la_{A}(\phi)) \in \Flb \stl{Z}$.

By definition of relation lifting, it follows from $(\si(s),\phi) \in \Fb Z$
that there is some $\zeta \in \F Z$ such that 
\[
\begin{array}{lll}
   \si(s) & := & (\F \pi_{S})(\zeta),
\\ \phi   & := & (\F \pi_{A})(\zeta),
\end{array}
\]
where $\pi_{S}: Z \to S$ and $\pi_{A}: Z \to A$ are the projections.
Furthermore the following diagram commutes by naturality of $\lambda$:
\[ \xymatrix{
\F S \ar[d]_{\la_S}  & \F Z  \ar[d]_{\la_Z } \ar[l]_{\F \pi_{S}} 
    \ar[r]^{\F \pi_{A}} & \F A  \ar[d]_{\la_A} 
\\    
\Fl S & \Fl Z \ar[l]^{\Fl \pi_{S}} \ar[r]_{\Fl \pi_{A}} & \Fl A
}\]
Hence, for $\stl{\zeta} := \la_{Z}(\zeta) \in \Fl Z$ we may compute:
\begin{eqnarray*}
(\Fl\pi_{S})(\stl{\zeta}) &=& ((\Fl\pi_{S})\circ\la_{Z})(\zeta)
\\ &=& (\la_{S}\circ\F\pi_{S})(\zeta)
\\ &=& \la_{S}(\si(s))
\\ &=& \stl{\si}(s),
\end{eqnarray*}
and
\begin{eqnarray*}
(\Fl\pi_{A})(\stl{\zeta}) &=& ((\Fl\pi_{A})\circ\la_{Z})(\zeta)
\\ &=& (\la_{A}\circ\F\pi_{A})(\zeta)
\\ &=& \la_{A}(\phi).
\end{eqnarray*}
From this it is immediate, by definition of $\Flb$, that
$(\stl{\si}(s),\la_{A}(\phi)) \in \Flb \stl{Z}$.
\end{proof}

The proof of the main result in this section now follows easily.

\begin{proofof}{Theorem~\ref{thm:nonempti}}
For the nontrivial direction of the Theorem, let $\bbA = 
\struc{A,\ro{a},\De,\Om}$ be an nondeterministic parity $\F$-automaton with
$\Lan(\bbA) \neq \nada$.
As an immediate consequence of Proposition~\ref{p:staut}, we see that 
$\Lan(\stl{\bbA}) \neq \nada$, where $\Fl$ is some standardization of $\F$
via a natural isomorphism $\la$.
Since we already showed the theorem to hold for standard functors, this means
that there is some $\Fl$-coalgebra $\struc{A,\phi}$, with $\phi(a) \in
\stl{\De}(a)$ for all $a\in A$, such that $(\struc{A,\phi},\ai)$ is
accepted by $\stl{\bbA}$.

Now consider the $\F$-coalgebra $\struc{A,\al}$ given by defining 
\[
\al(a) := \la_{A}^{-1}(\phi(a)),
\]
where $\la_{A}^{-1}$ is the inverse of the bijection $\la_{A}$.
Clearly then, $\phi = \la_{A}\circ\al$, and so it follows from
Proposition~\ref{p:staut}, that $\bbA$ accepts $(\struc{A,\al},\ai)$.
Finally, by definition of $\stl{\De}$, $\phi(a) = \la_{A}(\al'(a))$ for
some $\al'(a) \in \De(a)$.
Since $\la_{A}$ is a bijection, it follows that $\al(a) = \al'(a)$, so
that $\al(a) \in \De(a)$, as required.
\end{proofof}

\section{Conclusions \& Questions}
\label{s:concl}

\noindent There is a long list of issues that need some further discussion. 
To start with, we believe that this paper provides evidence for the claim 
that universal coalgebra forms an appropriate \emph{abstraction level} for 
studying automata theory.
Our results show that important automata-theoretic phenomena have a natural 
existence at the coalgebraic level of abstraction.

Second, although we have hardly mentioned logic at all, the results in the 
paper have in fact significant logical corollaries.
For instance, given the connection between formulas of coalgebraic fixed-point
logics and coalgebra automata theory, established
in~\cite{vene:auto04,vene:auto06}, it is easy to show that the logics 
introduced in the mentioned work, have the finite model property.
Or, generalizing results in~\cite{agos:logi00}, we can show that the
coalgebraic fixed-point logics of~\cite{vene:auto04} all have some
kind of \emph{uniform interpolation}.
We hope to say more on this in future work.

Probably the most important issue to be addressed concerns the closure of
the class of recognizable languages under \emph{complementation}.
For our coalgebraic automata it is not so easy to prove a
complementation lemma, even for alternating or deterministic automata.
The reason for this is that the acceptance game for coalgebraic automata has
some crucial \emph{nonsymmetric} interaction between the two players, with 
$\eloi$ choosing relations and $\abel$ picking elements of such relations.
The fact that for many well-known functors (including the ones that yield
simple coalgebras such as trees and transition systems), this game can be 
brought into a symmetric form, simply reveals the existence of an interesting 
\emph{property} that some functors have, and others may not.
We have to leave this matter as an intriguing area for further research,
however.
Should there be a strong need for closure of recognizable languages under 
complementation, one may always consider to move to a different notion of 
coalgebra automaton that is tailored towards a more symmetric acceptance 
game.
This is also a matter that we leave for future investigations.

In any case, closure under complementation may be a less important property 
than it appears to be at first sight.
Explained in logical terms, the point is that coalgebraic logics (with or
without fixed-point operators) \emph{without negation} already have
considerable expressive power.
For instance, A.~Baltag (private communication) has shown that any state in a 
\emph{finite} coalgebra can be completely characterized (modulo bisimilarity)
by a negation free coalgebraic fixed-point formula, see Corollary~7.1 
in~\cite{vene:auto06}.

In order to gain a better understanding of coalgebra automata it will also 
be useful to investigate instances of $\F$-coalgebra automata other than word,
tree or graph automata. 
In Definition~\ref{d:ekpf} we defined a class of $\Set$-functors which all
fall into the scope of our work.
Future research will show whether e.g.\ the coalgebra automata for the 
functors $\probdist$ and $\multiset$ yield reasonable automata for 
probabilistic transition systems and for directed weighted graphs respectively.
 
Finally, we are interested to see whether the conditions on the functor are
really needed. 
We believe that our main result crucially depends on the fact the functor
preserves weak pullbacks.
This is in line with results by Trnkov\'a~\cite{adam:auto90} indicating
that for a related class of functorial automata, nondeterministic and 
deterministic recognizability coincide if \emph{and only if} the functor
preserves weak pullbacks.
The precise connection with these results clearly needs to be investigated.

\bibliographystyle{plain}
\bibliography{automata}

\begin{thebibliography}{10}

\bibitem{adam:auto90}
J.~Ad{\'a}mek and V.~Trnkov{\'a}.
\newblock {\em Automata and Algebras in Categories}.
\newblock Kluwer Academic Publishers, 1990.

\bibitem{arno:rudi01}
A.~Arnold and D.~Niwi{\'n}ski.
\newblock {\em Rudiments of $\mu$-calculus}, volume 146 of {\em Studies in
  Logic}.
\newblock North-Holland, 2001.

\bibitem{balt:logi00}
A.~Baltag.
\newblock A logic for coalgebraic simulation.
\newblock In H.~Reichel, editor, {\em Coalgebraic Methods in Computer Science
  (CMCS'00)}, volume~33 of {\em Electronic Notes in Theoretical Computer
  Science}, pages 41--60, 2000.

\bibitem{bird:alge97}
R.~Bird and O.~de~Moor.
\newblock {\em Algebra of {P}rogramming}.
\newblock Prentice Hall, 1997.

\bibitem{clar:mode00}
E.~Clarke, O.~Grumberg, and D.~Peled, editors.
\newblock {\em Model Checking}.
\newblock The MIT Press, 2000.

\bibitem{agos:logi00}
G.~{d'A}gostino and M.~Hollenberg.
\newblock Logical questions concerning the $\mu$-calculus.
\newblock {\em Journal of Symbolic Logic}, 65:310--332, 2000.

\bibitem{rutt:bisi99}
E.P. de~Vink and J.J.M.M. Rutten.
\newblock Bisimulation for probabilistic transition systems: a coalgebraic
  approach.
\newblock {\em Theoretical Computer Science}, 221(1--2):271--293, 1999.

\bibitem{emer:tree91}
E.A. Emerson and C.S. Jutla.
\newblock Tree automata, mu-calculus and determinacy.
\newblock In {\em Proceedings of the 32nd IEEE Symposium on Foundations of
  Computer Science (FoCS'91)}, pages 368--377. IEEE Computer Society Press,
  1991.

\bibitem{grae:auto02}
E.~Gr{\"a}del, W.~Thomas, and T.~Wilke, editors.
\newblock {\em Automata, Logic, and Infinite Games}, volume 2500 of {\em LNCS}.
\newblock Springer, 2002.

\bibitem{gumm:mono01}
H.~P. Gumm and T.~Schr\"oder.
\newblock Monoid-labeled transition systems.
\newblock In {\em Coalgebraic Methods in Computer Science (CMCS 2001)},
  volume~44 of {\em Eletronic Notes in Theoretical Computer Science}, pages
  1--20, 2001.

\bibitem{jacobs:distlaws}
B.~Jacobs.
\newblock Distributive laws for the coinductive solution of recursive
  equations.
\newblock {\em Information and Computation}, 204:561--587, 2006.

\bibitem{jani:auto95}
D.~Janin and I.~Walukiewicz.
\newblock Automata for the modal $\mu$-calculus and related results.
\newblock In {\em Proc.~MFCS'95}, pages 552--562. Springer, 1995.
\newblock LNCS 969.

\bibitem{king:comp01}
V.~King, O.~Kupferman, and M.~Vardi.
\newblock On the complexity of parity word automata.
\newblock In {\em Proceedings of the International Conference on Foundations of
  Software Science and Computation Structures (FOSSACS 2001}, volume 2030 of
  {\em LNCS}, pages 276--286, 2001.

\bibitem{kiss:deci07}
C.~Kissig.
\newblock Decidability of {S2S}.
\newblock Master's thesis, ILLC, University of Amsterdam, 2007.

\bibitem{kupk:clos05}
C.~Kupke and Y.~Venema.
\newblock Closure properties of coalgebra automata.
\newblock In {\em Proceedings of the Twentieth Annual IEEE Symposium on Logic
  in Computer Science (LICS 2005)}, pages 199--208. IEEE Computer Society
  Press, 2005.

\bibitem{lars:bisi91}
K.~G. Larsen and A.~Skou.
\newblock Bisimulation through probabilistic testing.
\newblock {\em Information and Computation}, 94(1):1--28, 1991.

\bibitem{mart:bore75}
D.~Martin.
\newblock Borel determinacy.
\newblock {\em Annals of Mathematics}, 102:363--371, 1975.

\bibitem{mcna:test66}
R.~McNaughton.
\newblock Testing and generating infinite sequences by a finite automaton.
\newblock {\em Information and control}, 9:521--530, 1966.

\bibitem{moss:coal99}
L.~Moss.
\newblock Coalgebraic logic.
\newblock {\em Annals of Pure and Applied Logic}, 96:277--317, 1999.
\newblock (Erratum published \textit{APAL} 99:241--259, 1999.

\bibitem{most:game91}
A.~Mostowski.
\newblock Games with forbidden positions.
\newblock Technical Report~78, Instytut Matematyki, Uniwersytet Gda\'{n}ski,
  Poland, 1991.

\bibitem{mull:simu95}
D.~Muller and P.~Schupp.
\newblock Simulating alternating tree automata by nondeterministic ones.
\newblock {\em Theoretical Computer Science}, 141:69--107, 1995.

\bibitem{niwi:fixe97}
D.~Niwi\'{n}ski.
\newblock Fixed point characterization of infinite behavior of finite-state
  systems.
\newblock {\em Theoretical Computer Science}, 189:1--69, 1997.

\bibitem{pite:nond06}
N.~Piterman.
\newblock From nondeterministic {B}\"{u}chi and {S}treett automata to
  deterministic parity automata.
\newblock In {\em Proceedings of the Twentyfirst Annual IEEE Symposium on Logic
  in Computer Science (LICS 2006)}, pages 255--264. IEEE Computer Society,
  2006.

\bibitem{rabi:deci69}
M.O. Rabin.
\newblock Decidability of second-order theories and automata on infinite trees.
\newblock {\em Transactions of the American Mathematical Society}, 141:1--35,
  1969.

\bibitem{rutt:univ00}
J.~Rutten.
\newblock Universal coalgebra: {A} theory of systems.
\newblock {\em Theoretical Computer Science}, 249:3--80, 2000.

\bibitem{safr:onth88}
S.~Safra.
\newblock On the complexity of $\omega$-automata.
\newblock In {\em Proc. 29th IEEE Symposium on the Foundations of Computer
  Science}, pages 319--327, 1988.

\bibitem{trnk:some69}
V.~Trnkov\'a.
\newblock Some properties of set functors.
\newblock {\em Commentationes Mathematicae Universitatis Carolinae}, 10(2),
  1969.

\bibitem{trnk:rela77}
V.~Trnkov\'a.
\newblock Relational automata in a category and theory of languages.
\newblock In {\em Proc.~FCT 1977}, pages 340--355. Springer, 1977.
\newblock LNCS 56.

\bibitem{vene:auto04}
Y.~Venema.
\newblock Automata and fixed point logics for coalgebras.
\newblock {\em Electronic Notes in Theoretical Computer Science}, 106:355--375,
  2004.

\bibitem{vene:auto06}
Y.~Venema.
\newblock Automata and fixed point logic: a coalgebraic perspective.
\newblock {\em Information and Computation}, 204:637--678, 2006.

\end{thebibliography}
\end{document}